\documentclass[superscriptaddress,
 amsmath,amssymb,
 aps,
pra,
reprint
]{revtex4-2}

\usepackage{graphicx}
\usepackage{dcolumn}
\usepackage{physics}
\usepackage{amsfonts, amsmath}
\usepackage{dsfont}
\usepackage{amsthm}
\usepackage{bm}
\usepackage{hyperref}
\usepackage{bbold}
\usepackage{color}
\usepackage{lipsum,babel}
\usepackage[normalem]{ulem}
\usepackage{array}
\newcolumntype{L}{>{\raggedright\arraybackslash}X}
\usepackage{tabularx}
\DeclareMathOperator*{\E}{\mathbb{E}}

\DeclareMathOperator*{\Prob}{Pr}

\newcommand{\T}[0]{{\mathrm{T}}}

\newcommand{\sgn}{\mathrm{sgn}}

\newcommand{\pf}{\mathrm{pf}}

\newtheorem{theorem}{Theorem}
\newtheorem{corollary}{Corollary}
\newtheorem{proposition}{Proposition}

\newtheorem{definition}{Definition}
\newtheorem{lemma}{Lemma}
\newtheorem{remark}{Remark}

\begin{document}
\definecolor{navy}{RGB}{46,72,102}
\definecolor{pink}{RGB}{219,48,122}
\definecolor{grey}{RGB}{184,184,184}
\definecolor{yellow}{RGB}{255,192,0}
\definecolor{grey1}{RGB}{217,217,217}
\definecolor{grey2}{RGB}{166,166,166}
\definecolor{grey3}{RGB}{89,89,89}
\definecolor{red}{RGB}{255,0,0}

\preprint{APS/123-QED}

\title{Classical simulation of free-fermionic dynamics and quantum chemistry with magic input}
\author{Changhun Oh}
\email{changhun0218@gmail.com}
\affiliation{Department of Physics, Korea Advanced Institute of Science and Technology, Daejeon 34141, Korea}

\author{Michał Oszmaniec}
\affiliation{Center for Quantum Enabled-Computing, Center for Theoretical Physics of the Polish Academy of Sciences, Al. Lotników 32/46, 02-668 Warsaw, Poland}

\author{Oliver Reardon-Smith}
\affiliation{Center for Quantum Enabled-Computing, Center for Theoretical Physics of the Polish Academy of Sciences, Al. Lotników 32/46, 02-668 Warsaw, Poland}

\author{Zolt\'an Zimbor\'as}
\affiliation{University of Helsinki, Yliopistonkatu 4 00100 Helsinki, Finland}
\affiliation{HUN-REN Wigner Research Centre for Physics, Budapest, Hungary}
\affiliation{Algorithmiq Ltd, Kanavakatu 3C 00160 Helsinki, Finland}

\begin{abstract}
Establishing the precise computational boundary between classically tractable fermionic systems and those capable of genuine quantum advantage is a central challenge in quantum simulation. 
While injecting non-Gaussian ``magic" inputs into free-fermion circuits is widely expected to generate intractable complexity, we identify a physically motivated intermediate regime.
We prove that for block-product paired non-Gaussian fermionic states, essential quantum simulation primitives---transition amplitudes, overlaps, and arbitrary-weight number correlators---can be efficiently approximated to additive error under free-fermionic dynamics. 
This tractability stems from an algebraic reduction that compresses exponentially large multiparticle interference into a single coefficient of a multivariate Pfaffian polynomial.
Because these classical estimators match the intrinsic $O(1/\sqrt{K})$ statistical uncertainty of quantum hardware utilizing $K$ measurement shots, they constitute a practical benchmark.
Building on this foundation, we construct an additive-error estimator for high-weight Wilson observables in the noninteracting quench of recent trapped-ion experiments, providing a rigorous classical benchmark.
Extending this to quantum chemistry, we demonstrate that core overlap-based subroutines for antisymmetrized products of strongly orthogonal geminals admit efficient additive-error Pfaffian-kernel estimators.
Ultimately, these results sharpen the boundary of quantum advantage, establishing that the paired-electron scaffold is dequantized and clarifying where quantum resources are indispensable. 
\end{abstract}

\maketitle

\section{Introduction}
Quantum processors are increasingly deployed as programmable platforms for simulating fermionic many-body dynamics~\cite{arute2020observation, daley2022practical, alam2025fermionic, google2020hartree}. 
This progress sharpens a central question at the heart of fermionic simulation: which physically relevant combinations of evolutions and initial states genuinely evade the best classical algorithms, and which do not. 
We address this boundary in a controlled yet highly non-trivial setting: structured non-Gaussian paired inputs evolved by number-preserving Gaussian dynamics, i.e., the passive sector of fermionic linear optics~(FLO).

FLO occupies a distinctive place at the interface of physics and computational complexity~\cite{terhal2002classical, jozsa2008matchgates, valiant2002quantum, knill2001fermionic, oszmaniec2017universal, hebenstreit2020computational,oszmaniec2022fermion, dias2024classical}. 
For Gaussian inputs (Slater determinants and, more generally, fermionic Gaussian states), transition amplitudes reduce to determinants or Pfaffians and are computable in polynomial time. 
Conversely, augmenting passive FLO with suitably chosen non-Gaussian ``magic'' resources makes the exact calculation of these amplitudes \#P-hard~\cite{ivanov2016computational}. This augmentation is widely believed to yield classically intractable sampling regimes, in close analogy with boson sampling~\cite{aaronson2011computational}, and can even promote it to universal quantum computation once adaptive measurements and feedforward are allowed~\cite{jozsa2008matchgates, brod2011extending}.
This motivates a common expectation that structured non-Gaussian resources may unlock exponential quantum advantage.

\begin{figure*}[t]
\includegraphics[width=460px]{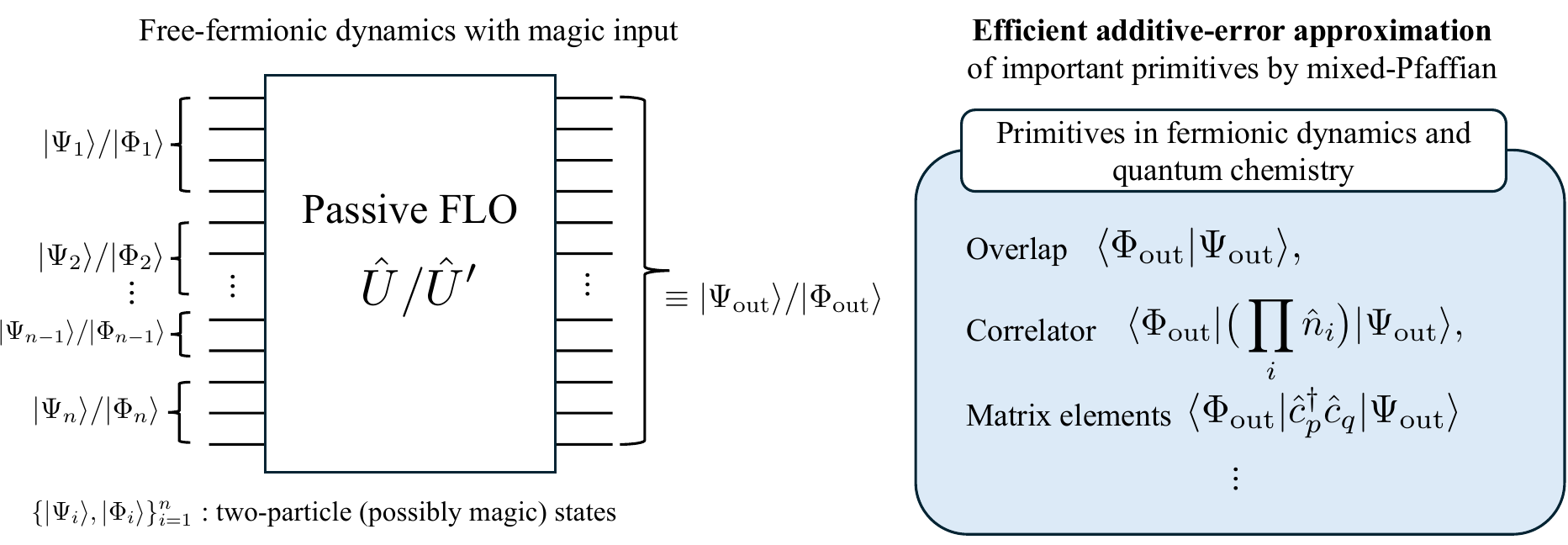}
    \caption{{\bf Efficient classical estimation of non-Gaussian fermionic dynamics.} The schematic illustrates the evolution of a structured non-Gaussian input state—composed of $n$ disjoint two-particle states (paired magic inputs)—under a passive fermionic linear optics~(FLO) unitary $\hat{U}$, which is the main focus of this work. While such dynamics are widely expected to exhibit exponential computational complexity~\cite{oszmaniec2022fermion}, our mixed-Pfaffian reduction structurally bypasses this combinatorial explosion. Equipped with rigorous variance bounds, the framework enables scalable additive-error approximations. Key computational primitives for quantum chemistry and many-body physics, including transition overlaps $\langle\Phi_{\mathrm{out}}|\Psi_{\mathrm{out}}\rangle$, arbitrary-weight number correlators $\langle\Phi_{\mathrm{out}}|(\prod_i \hat{n}_i)|\Psi_{\mathrm{out}}\rangle$, and off-diagonal transition matrix elements $\langle\Phi_{\mathrm{out}}|\hat{c}_p^\dagger \hat{c}_q|\Psi_{\mathrm{out}}\rangle$, are thus rendered classically accessible.}
\label{fig:result}
\end{figure*}

In this work, we challenge the prevalent assumption that the mere presence of such ``magic'' resources necessarily places a fermionic system beyond the reach of classical simulation. 
We reveal a physically motivated middle ground: paired non-Gaussian fermionic states for which the experimentally relevant primitives---transition amplitudes, generic overlaps, and off-diagonal transition matrix elements~\cite{daley2022practical, alam2025fermionic, mcardle2020quantum, huggins2022unbiasing}---can be efficiently approximated to additive error under number-preserving FLO~(see Fig.~\ref{fig:result} for the summary of our main results). 
Crucially, we center our analysis on additive-error estimation as it represents the operationally natural notion of precision for both many-body simulation and quantum chemistry. 
Since physical observables are inherently estimated from a finite number of experimental shots, a classical algorithm attaining equivalent additive precision in polynomial time provides a direct and fair benchmark for quantum hardware.

Our framework relies on an algebraic reduction showing that an exponentially large sum of determinants arising in this setting can be interpreted as a single coefficient of a multivariate Pfaffian polynomial. 
We combine this ``mixed-Pfaffian'' representation with a randomized filtering procedure whose variance is controlled for arbitrary block-product APSG inputs, yielding efficient classical additive-error estimators.

This additive-error tractability yields a series of constructive implications that redefine the boundary of classical benchmarking across different domains. 
(i) In the context of many-body physics, we develop scalable classical estimators for macroscopic diagnostics, such as multi-point number correlators, off-diagonal transition elements, and coarse-grained output statistics that remain tractable even as operator weight grows. 
(ii) We apply these methods to provide rigorous benchmarks for noninteracting quench protocols employed in recent trapped-ion fermionic simulators~\cite{alam2025fermionic}, including a specialized estimator for the experimentally emphasized high-weight Wilson-loop sector.
(iii) For quantum chemistry and electronic-structure workflows~\cite{huggins2022unbiasing, ollitrault2024enhancing}, we demonstrate that prominent quantum subroutines for paired-electron references are effectively dequantized. 
By extending our framework to evaluate transition reduced density matrices~(RDMs) and off-diagonal couplings, we show that the core inner loops of orbital optimization, Hamiltonian matrix evaluation in non-orthogonal configuration interaction~(NOCI)~\cite{baek2023say}, and walker--trial overlaps in auxiliary-field quantum Monte Carlo~(AFQMC) admit efficient additive-error Pfaffian-kernel estimators for antisymmetrized products of strongly orthogonal geminals~(APSG) references. 
By identifying the block-product APSG+passive-FLO framework as classically tractable within the additive-error regime, these results establish that paired-electron structure alone is not sufficient for quantum advantage, implying that any genuine advantage in paired-electron workflows must originate from algorithmic ingredients that go beyond this setting.

\section{Fermionic Fock space and passive linear optics}
\subsection{Fermionic Fock space and exterior algebra}
We briefly present the fermionic Fock space and exterior-algebra notation used throughout this work. For more detailed treatments of fermionic Fock space, exterior algebra, and Gaussian fermionic transformations, see Refs.~\cite{berezin1966method,bravyi2005lagrangian,terhal2002classical}.

Let us consider an $M$-mode fermionic system with annihilation and creation operators $\{\hat{c}_i\}_{i=1}^M$, $\{\hat{c}_i^\dagger\}_{i=1}^M$ obeying the canonical anticommutation relations~\cite{fetter2012quantum}
\begin{align}
    \{\hat{c}_i,\hat{c}_j\}=\{\hat{c}_i^\dagger,\hat{c}_j^\dagger\}=0,\quad
    \{\hat{c}_i,\hat{c}_j^\dagger\}=\delta_{ij}.
\end{align}
The vacuum $|0\rangle$ is defined by $\hat{c}_i|0\rangle=0$ for all $i\in [M]$.
A computational (occupation-number) basis for the Fock space is generated by applying creation operators to the vacuum:
\begin{equation}
    \hat{c}_{i_1}^\dagger \cdots \hat{c}_{i_k}^\dagger |0\rangle, \quad \text{with } 1 \le i_1 < i_2 < \dots < i_k \le M.
\end{equation}
The full fermionic Fock space decomposes into particle-number sectors:
\begin{align}
    \mathcal{F}=\bigoplus_{k=0}^M \mathcal{F}^{(k)},
\end{align}
where $\mathcal{F}^{(k)}$ is the $k$-fermion subspace.

Let \(V \cong \mathbb{C}^M\) be the one-particle space with basis
\(\{e_1,\dots,e_M\}\). We identify the fermionic Fock space with the exterior algebra
$\mathcal{F}(V)=\bigoplus_{k=0}^M \Lambda^k V$ via
\begin{align}
    \hat c^\dagger_{i_1}\cdots \hat c^\dagger_{i_k}\lvert 0\rangle
    \longleftrightarrow
    e_{i_1}\wedge \cdots \wedge e_{i_k},~~~(i_1<\cdots<i_k).
\end{align}
We employ the standard linear map from skew-symmetric matrices to \(2\)-forms,
\begin{equation}
    \alpha(A)\equiv\frac12\sum_{p,q=1}^M A_{pq}  e_p\wedge e_q,
    ~~~ A^\T=-A.
\end{equation}
A \(2\)-form \(\beta\in \Lambda^2 V\) is therefore uniquely represented as
\(\beta=\alpha(W)\) for some skew-symmetric \(W\).
In operator language, this corresponds to the pair-creation operator
\begin{equation}
\beta \longleftrightarrow \hat\beta^\dagger
\equiv \frac12\sum_{p,q=1}^M W_{pq}\hat c^\dagger_p \hat c^\dagger_q.
\end{equation}
Specifically, rank-2 forms serve as the primary building blocks for the paired-input states analyzed in this work. For any single-particle states $u,v\in V$, we define the skew-symmetric matrix $\Omega(u,v)\equiv uv^\T-vu^\T$, which conveniently isolates the rank-2 contribution such that
\begin{equation}
    u\wedge v = \alpha\left(\Omega(u,v)\right).
\end{equation}

\subsection{Passive FLO action and fermionic sampling}
A passive FLO unitary is a number-preserving Gaussian unitary whose Heisenberg action is linear on the mode operators~\cite{terhal2002classical,knill2001fermionic,jozsa2008matchgates}. 
It is specified by a mode unitary \(U\in U(M)\) as
\begin{align}
    \hat{c}_i \mapsto \sum_{j=1}^M U_{ij}\hat{c}_j.
\end{align}
Thus, passive FLO preserves the total fermion number and implements a unitary transformation on the single-particle mode space.

On the one-particle space $V$, $U$ acts linearly; on the $k$-particle sector $\Lambda^k V$, it acts as $\wedge^k U$ (the induced action on exterior powers). In particular, a wedge product transforms as
$(U u_1)\wedge\cdots\wedge(U u_k)  =  (\wedge^k U)(u_1\wedge\cdots\wedge u_k)$.
Likewise, a two-form encoded by $W$ transforms by unitary congruence:
\begin{align}
    W\mapsto U W U^\T.
\end{align}
To maintain notational clarity, we consistently distinguish between second-quantized operators acting on the Fock space, denoted with hats (e.g., $\hat{U}, \hat{\Pi}_T, \hat{D}_T$), and their corresponding induced matrix representations on the single-particle space, denoted without hats (e.g., $U, \Pi_T, D_T$). For instance, the Heisenberg action of a passive FLO unitary is explicitly expressed as $\hat{U}^\dagger \hat{c}_i \hat{U} = \sum_{j=1}^M U_{ij}\hat{c}_j$.

\begin{figure}[t]
\includegraphics[width=200px]{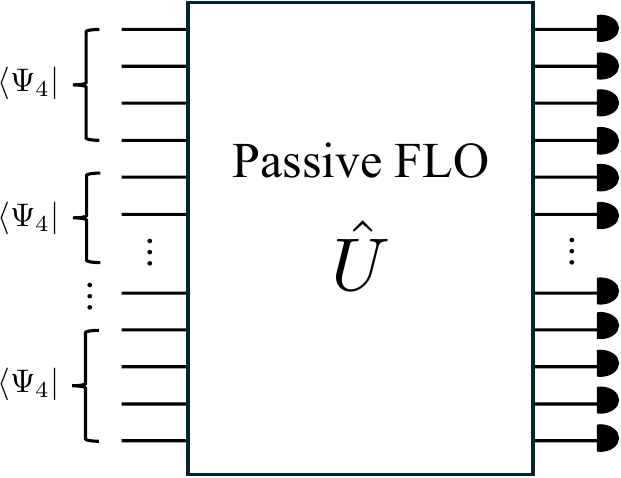}
\caption{Fermionic sampling with $|\Psi\rangle=|\Psi_4\rangle^{\otimes N}$ input with a passive FLO circuit.}
\label{fig:sampling}
\end{figure}

One of the standard computational tasks associated with FLO is fermionic sampling~\cite{oszmaniec2022fermion}~(see Fig.~\ref{fig:sampling}): prepare an input state $|\Psi\rangle$ on $M$ fermionic modes, apply a (passive) FLO unitary $\hat U$, and measure in the occupation-number basis. 
A canonical non-Gaussian ``magic'' input for this task is the four-mode
two-fermion block
\begin{align}
    |\Psi_4\rangle \equiv \frac{1}{\sqrt 2}(|1100\rangle+|0011\rangle),
\end{align}
and we use the product state \(|\Psi_4\rangle^{\otimes N}\) as our running example below. 
The output is an occupation pattern $x\in\{0,1\}^M$ distributed according to
\begin{align}
    p_{\Psi,U}(x)=|\langle x|\hat U|\Psi\rangle|^2.
\end{align}
Equivalently, the fundamental primitive is the transition amplitude $\langle \Phi|\hat U|\Psi\rangle$ since sampling probabilities and many expectation values reduce to sums of such amplitudes (or their products) (see Sec.~\ref{sec:estimation}).

While this primitive is efficiently computable for Gaussian inputs~\cite{terhal2002classical, valiant2002quantum, jozsa2008matchgates,
knuth1995overlapping}, generic non-Gaussian ``magic'' inputs can render its exact evaluation \#P-hard~\cite{ivanov2016computational, oszmaniec2022fermion}.
Our focus in the following sections is to identify a structured paired subfamily, containing the running example $|\Psi_4\rangle^{\otimes N}$, for which the same amplitude nonetheless admits an efficient {\it additive-error} classical {\it approximation}. 
Concretely, the inputs we consider are block-product paired states, where each block contributes exactly one fermion pair drawn from a superposition of mutually disjoint pair slots~(see Def.~\ref{def:disjointBGP}). 
This paired structure allows us to rewrite the multiparticle amplitude as a mixed-Pfaffian coefficient and estimate it efficiently via randomized sign-averaging.

\section{Additive-error estimators for transition amplitudes}\label{sec:main}
\subsection{Mixed Pfaffian coefficient extraction}
Because transition amplitudes between fermionic basis states represent antisymmetrized multiparticle overlaps, they are naturally governed by matrix determinants~\cite{ivanov2016computational}. 
A useful geometric identity is that if $r_1,\dots,r_N\in \mathbb{C}^N$ are row vectors, their determinant isolates the top-form coefficient of their wedge product:
\begin{align}\label{eq:det}
    \det
    \begin{pmatrix} 
        r_1 \\ \vdots \\ r_N 
    \end{pmatrix}=
    [r_1\wedge r_2\wedge\cdots\wedge r_N]_{\mathrm{vol}}.
\end{align}
Here, $\mathrm{vol}=e_1\wedge\cdots\wedge e_N$ denotes the canonical volume form, and the linear functional $[\cdot]_{\mathrm{vol}}$ extracts the scalar coefficient of this volume form.

When this geometric intuition is extended to pair-creation operators and entangled fermionic pairs, Pfaffians emerge as the foundational algebraic structure~\cite{ivanov2016computational}.
For a $2N\times 2N$ anti-symmetric matrix $A$, the Pfaffian is defined as~\cite{knuth1995overlapping}
\begin{align}
    \pf(A)\equiv\frac{1}{2^N N!}\sum_{\sigma\in \mathcal{S}_{2N}}\sgn(\sigma)\prod_{i=1}^N A_{\sigma(2i-1),\sigma(2i)},
\end{align}
where $\mathcal{S}_{2N}$ is the symmetric group on $2N$ elements.
Crucially, the Pfaffian is mathematically equivalent to the top-form coefficient of the $N$-th exterior power of its associated 2-form~\cite{knuth1995overlapping}:
\begin{align}\label{eq:pf_wedge}
    \pf(A)
    =\frac{1}{N!}\left[\alpha(A)\wedge \cdots \wedge \alpha(A)\right]_\text{vol}=\frac{1}{N!}\left[\alpha(A)^{\wedge N}\right]_\text{vol}.
\end{align}
This identity (derived explicitly in App.~\ref{app:pf_identity_proof}) explains why Pfaffians are the native algebraic objects when dealing with products of pair-creation operators (2-form).
Furthermore, much like determinants, the Pfaffian of a generic skew-symmetric matrix can be computed efficiently in $O(N^3)$ time~\cite{wimmer2012algorithm}.

A central technical observation of this work is that the exponentially large sum of determinant terms, arising when a blockwise pair-superposition state evolves under passive FLO, can be expressed as a single coefficient of a multivariate Pfaffian polynomial. 
This observation forms the backbone of our additive-error estimators. 
Having expressed the sum in this way, we extract the polynomial coefficient using a technique similar to those previously developed for estimating the matrix permanent and mixed discriminants~\cite{gurvits2005complexity, aaronson2011computational, aaronson2012generalizing, oh2024quantum}, adapting them to the fermionic domain.

\begin{figure}[t]
\includegraphics[width=\columnwidth]{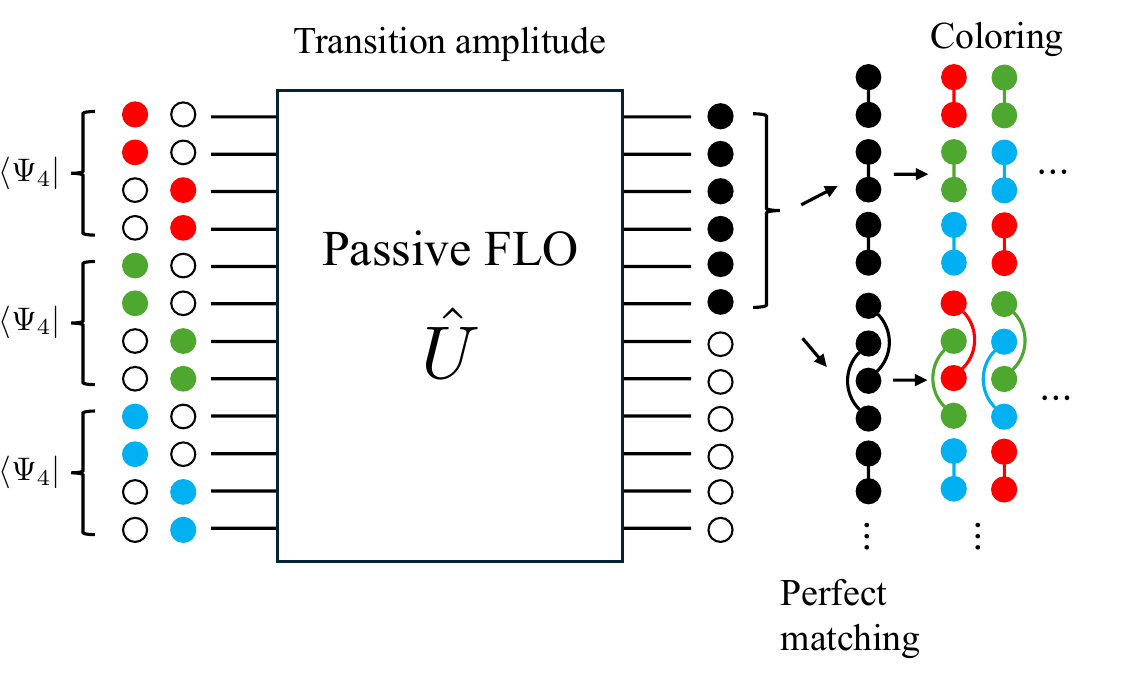}
\caption{
{\bf Mixed-Pfaffian coefficient as colored perfect matchings.}
A block-product paired input is evolved by a passive FLO unitary \(\hat U\) and projected onto a fixed \(2N\)-fermion output pattern. 
After restricting to the occupied output modes, each input block contributes a skew-symmetric matrix \(B_t\). 
The Pfaffian \(\pf(\sum_t y_t B_t)\) expands as a signed sum over perfect matchings of the \(2N\) output vertices, while the variables \(y_t\) color the matched edges by their originating block. 
Extracting the multilinear coefficient \([y_1\cdots y_N]\) keeps exactly the colorings that use each block once, thereby enforcing the paired-input constraint.
}
\label{fig:mixed_pfaffian_intuition}
\end{figure}

To formalize this, let $B_1,\dots,B_N \in \mathbb{C}^{2N\times 2N}$ be skew-symmetric matrices representing the transformed two-forms corresponding to $N$ disjoint blocks. We construct the multivariate Pfaffian polynomial:
\begin{equation}
    P(y_1,\dots,y_N) \equiv \mathrm{pf}\left(\sum_{t=1}^N y_t B_t\right).
\end{equation}
Rather than evaluating the full polynomial, we target its multilinear coefficient, sometimes referred to as the \emph{mixed Pfaffian} in natural analogy to the mixed discriminant~\cite{ikai2011theory}: 
\begin{equation}
    \left[\prod_{t=1}^N y_t\right] P(y_1,\dots,y_N),
\end{equation}
where $[\mathcal{M}]P$ denotes the coefficient of the monomial $\mathcal{M}$ in the polynomial $P$.

The coefficient-extraction identity also admits a simple graph-theoretic interpretation, which is useful for understanding why both the Pfaffian and the block labels \(y_t\) appear naturally; see Fig.~\ref{fig:mixed_pfaffian_intuition}.
For a fixed \(2N\)-particle output configuration, a Pfaffian expands as a signed sum over perfect matchings of the \(2N\) occupied output modes. 
Replacing the single skew matrix by the formal sum \(\sum_t y_t B_t\) promotes each edge of a matching to a colored edge, where the color \(t\) records which input block contributed that pair. 
Thus, \(P(y)=\pf(\sum_t y_t B_t)\) is a generating function for edge-colored perfect matchings. 
The one-pair-per-block constraint is then imposed by extracting the multilinear coefficient \([y_1\cdots y_N]\): this keeps precisely those colored matchings in which every color appears once and discards all terms in which some block is used multiple times or not at all. 
The random sign average used below is therefore not an additional approximation, but an exact Fourier filter implementing this colored-matching projection.

The same filtering statement can be written directly in exterior-algebra language. 
By exploiting the multilinearity of the 2-form mapping \(\alpha(\cdot)\) and the Pfaffian-wedge identity, Eq.~\eqref{eq:pf_wedge}, the colored-matching projection above is equivalently expressed as
\begin{equation}\label{eq:mixed1}
    \left[\prod_{t=1}^N y_t\right] \mathrm{pf} \left(\sum_{t=1}^N y_t B_t\right) = \left[\alpha(B_1) \wedge \cdots \wedge \alpha(B_N)\right]_{\mathrm{vol}}.
\end{equation}
Thus, the multilinear coefficient rigorously isolates the specific multiparticle interference pathway in which exactly one fermion pair is selected from each block.
This relation translates the problem of summing exponentially many physical determinantal configurations into evaluating a single coefficient of a bounded algebraic generating function.

\subsection{Illustrative example: paired 4-mode blocks}\label{sec:psi4}

To illustrate the core mechanism of our framework, we first analyze the canonical input $|\Psi_4\rangle$ introduced above~\cite{ivanov2016computational, oszmaniec2022fermion}.
For a global system of $4N$ modes, we consider the tensor product state $|\Psi\rangle = |\Psi_4\rangle^{\otimes N}$ evolving under a passive FLO unitary $\hat U$. Although a direct expansion of the transition amplitude $\langle x|\hat U|\Psi\rangle$ yields an exponentially large sum of determinants over the $2^N$ block configurations, the underlying paired structure allows us to mathematically compress this sum into a single mixed-Pfaffian coefficient. By isolating this coefficient via a discrete randomized sign-averaging procedure, we reduce the exact multiparticle amplitude calculation to the Monte Carlo estimation of a bounded Pfaffian-valued random variable.

Note that this specific paired configuration is precisely the canonical ``magic'' state family utilized in recent theoretical proofs to establish the worst-case hardness of non-Gaussian fermionic sampling~\cite{oszmaniec2022fermion}, not an arbitrary toy example. Furthermore, it is mathematically identical to the exact tensor-product spin-triplet dimer state deployed as the initial non-Gaussian resource in recent large-scale trapped-ion quantum simulators~\cite{alam2025fermionic} (see Sec.~\ref{sec:phasecraft}). Therefore, identifying an efficient additive-error estimator for this exact state carries direct physical and complexity-theoretic implications.

To construct this estimator, let $U\in \mathbb{C}^{4N\times 4N}$ be the single-particle unitary matrix corresponding to $\hat U$. Expanding the block superposition for the transition amplitude $A \equiv \langle x|\hat U|\Psi\rangle$ yields a sum over $2^N$ block-selection patterns. Let $i = (i_1, \dots, i_N) \in \{0,1\}^N$ encode whether block $t$ contributes the configuration $|1100\rangle$ ($i_t=0$) or $|0011\rangle$ ($i_t=1$), and let $R(i) \subset [4N]$ denote the resulting set of occupied input modes (exactly two per block). The transition amplitude expands as a sum of Slater determinants:
\begin{equation}
    A = 2^{-N/2} \sum_{i \in \{0,1\}^N} \det(U_{S_x, R(i)}),
\end{equation}
where $S_x \subset [4N]$ is the set of occupied output modes in $|x\rangle$.

To circumvent the exponential cost of evaluating $2^N$ determinants, we reorganize this structured sum. 
Define the output-restricted transpose isometry $X \equiv (U_{S_x,:})^\T = (U^\T)_{:,S_x}\in \mathbb{C}^{4N \times 2N},$
which satisfies \(X^\dagger X=I_{2N}\).
Let $v_k^\T \equiv e_k^\T X$ so that $v_k \in \mathbb{C}^{2N}$ is a column vector.
For each block $t \in \{1, \dots, N\}$, we construct the $2N \times 2N$ skew-symmetric matrix:
\begin{equation}
    B_t \equiv \Omega(v_{4t-3},v_{4t-2})+\Omega(v_{4t-1},v_{4t}).
\end{equation}
Consider the multivariate Pfaffian polynomial $P(y_1, \dots, y_N) \equiv \mathrm{pf}\left(\sum_{t=1}^N y_t B_t\right)$. Combining Eqs.~\eqref{eq:det} and \eqref{eq:mixed1}, we arrive at a key algebraic identity of our framework
\begin{align}
    S \equiv [y_1 \cdots y_N] P(y_1,\dots,y_N)=\sum_{i \in \{0,1\}^N} \det(U_{S_x, R(i)}).
\end{align}
Hence, the amplitude collapses to $A = 2^{-N/2} S$.

To estimate $S$ efficiently, we sample random variables $b = (b_1, \dots, b_N) \sim \mathrm{Unif}(\{\pm 1\}^N)$ and define the single-shot random variable:
\begin{equation}
    \mathcal{Z}(b) \equiv \mathrm{pf}\left(\sum_{t=1}^N b_t B_t\right) \prod_{t=1}^N b_t.
\end{equation}
By discrete Fourier isolation, the exact coefficient is recovered by the expectation value $S = \mathbb{E}_b[\mathcal{Z}(b)]$. This formulation immediately suggests a Monte Carlo estimator: draw $K$ independent samples $b^{(1)}, \dots, b^{(K)}$, compute the respective $O(N^3)$ Pfaffians, and calculate the sample average $\tilde{S} \equiv \frac{1}{K} \sum_{k=1}^K \mathcal{Z}(b^{(k)})$.

Hence, it now suffices to analyze the bound of the random variable $\mathcal{Z}(b)$ to find the error bound of the estimator. 
As proven using operator-norm inequalities in App.~\ref{app:general_estimator} (Lemma~\ref{lem:pf_bound_apsg}), 
$\left|\pf\!\left(\sum_{t=1}^N b_t B_t\right)\right|\le 1$ for all sign configurations $b \in \{\pm 1\}^N$, guaranteeing $|\mathcal{Z}(b)| \le 1$.
Since $|\mathcal{Z}(b)|\le 1$, applying Hoeffding's inequality implies that it suffices to take $K=O(\epsilon^{-2}\log\delta^{-1})$ to achieve additive error $\epsilon$ with failure probability at most $\delta$.
Equivalently, defining $\tilde{A} \equiv 2^{-N/2}\tilde{S}$, we obtain $|\tilde{A}-A|\le \epsilon/\sqrt{2^N}$ with probability at least $1-\delta$.

\begin{theorem}[Additive-error transition amplitude for block-magic input]\label{thm:fock-block-magic-overlap}
    Let $\hat{U}$ be a passive FLO unitary on $4N$ modes, let $|\Psi\rangle = |\Psi_4\rangle^{\otimes N}$ with $|\Psi_4\rangle = (|1100\rangle + |0011\rangle)/\sqrt{2}$, and let $|x\rangle$ be any Fock basis state with $2N$ fermions. There exists a classical randomized algorithm that outputs an estimator $\tilde{A}$ satisfying $\mathrm{Pr}[|\tilde{A} - \langle x|\hat{U}|\Psi\rangle| > \epsilon / \sqrt{2^N}] \le \delta$. The algorithm requires a sample complexity of $K=O(\epsilon^{-2}\log\delta^{-1})$, with each sample processed in $O(N^3)$ time.
\end{theorem}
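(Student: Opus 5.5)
The proof has three steps: an exact coefficient-extraction identity, a sign-averaging (Fourier) estimator for that coefficient, and a uniform bound on the random variable so that Hoeffding's inequality applies.

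\textbf{Step 1: the identity $A=2^{-N/2}S$.} Expanding $|\Psi_4\rangle^{\otimes N}$ gives $2^{-N/2}\sum_{i}\bigwedge_t e_{R(i)_t}$. Applying $\hat U$ and projecting onto $|x\rangle$ replaces each input mode $k$ by the row $v_k^\T=e_k^\T X$, so each Slater amplitude equals $\det(U_{S_x,R(i)})$ via Eq.~\eqref{eq:det}. The ordering of modes within $R(i)$ is block-by-block and increasing, so no extra sign appears. By multilinearity of the wedge product, the sum over $i$ factorizes blockwise:
\[
\sum_i \bigwedge_t (\cdots)=\bigwedge_{t=1}^N\bigl(v_{4t-3}\wedge v_{4t-2}+v_{4t-1}\wedge v_{4t}\bigr)=\bigwedge_t \alpha(B_t).
\]
Here I use $u\wedge v=\alpha(\Omega(u,v))$ and the linearity of $\alpha$. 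Two-forms commute, so the pairwise reordering introduces no sign. Eq.~\eqref{eq:mixed1} then identifies the top coefficient with $[y_1\cdots y_N]\,\pf(\sum_t y_tB_t)=S$.

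\textbf{Step 2: the estimator and its cost.} $P(y)$ is a homogeneous polynomial of degree $N$ in which each $y_t$ appears with some power. For $b$ uniform on $\{\pm1\}^N$ we have $\mathbb{E}_b\bigl[\prod_t b_t^{a_t+1}\bigr]=1$ iff every $a_t$ is odd. Homogeneity of degree $N$ with all $a_t\ge 1$ forces $a_t=1$ for all $t$. Hence $\mathbb{E}_b[\mathcal{Z}(b)]=S$ exactly.

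For each sample, forming $X$ costs $O(N^2)$ (or more precisely, building $\sum_t b_tB_t$ is a sum of $2N$ rank-2 terms, $O(N^3)$ total). The Pfaffian then costs $O(N^3)$~\cite{wimmer2012algorithm}.

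\textbf{Step 3 (main obstacle): $|\pf(\sum_t b_tB_t)|\le 1$ uniformly in $b$.} I would write
\[
\sum_t b_tB_t=X^\T J_b X,
\]
where $J_b=\bigoplus_t b_t(\Omega(e_1,e_2)\oplus\Omega(e_3,e_4))$ is a $4N\times4N$ real skew matrix with operator norm $1$. Then I would use $|\pf(C)|^2=|\det C|\le\|C\|^{2N}$. Since $\|X\|=1$ (isometry, $X^\dagger X=I$, so also $\|X^\T\|=1$), this gives $\|X^\T J_bX\|\le1$. This is exactly Lemma~\ref{lem:pf_bound_apsg}, which I may invoke; the careful point is that the norm of a transpose of an isometry is still $1$. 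Hence $|\mathcal{Z}(b)|\le1$.

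\textbf{Conclusion.} The real and imaginary parts of $\mathcal{Z}$ are bounded in $[-1,1]$. Hoeffding's inequality applied to each part, with a union bound, gives $|\tilde S-S|\le\epsilon$ with probability at least $1-\delta$ for $K=O(\epsilon^{-2}\log\delta^{-1})$. Multiplying by $2^{-N/2}$ yields the stated bound on $|\tilde A-\langle x|\hat U|\Psi\rangle|$.
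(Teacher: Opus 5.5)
Your proposal is correct and follows the paper's argument essentially step for step. You use the blockwise wedge factorization and the mixed-Pfaffian identity to get $A=2^{-N/2}S$, the sign-averaging filter of Lemma~\ref{lem:sign_filter}, the uniform bound $|\pf(X^\T J_b X)|\le 1$ from Lemma~\ref{lem:pf_bound_apsg}, and then Hoeffding. Your separate Hoeffding bounds on the real and imaginary parts of the complex-valued $\mathcal{Z}$ are a small clarification that the paper leaves implicit.
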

We emphasize that while this additive error, $\epsilon/\sqrt{2^N}$, is exponentially small in the system size, it does not imply a polynomial-time algorithm for computing the multiplicative approximations necessary to resolve worst-case \#P-hard exact sampling probabilities~\cite{oszmaniec2022fermion}.

Using the above estimator, we can also efficiently approximate the transition
amplitude from $|\Psi_4\rangle^{\otimes N}$ to itself through a passive FLO
$\hat U$. Let $s\in\{0,1\}^N$ label the bra-side block configuration, with
$s_t=0$ selecting the local pair $|1100\rangle$ and $s_t=1$ selecting
$|0011\rangle$, and let $|R(s)\rangle$ be the corresponding Fock basis state.
Then
\begin{align}
    \langle\Psi_4^{\otimes N}|\hat{U}|\Psi_4^{\otimes N}\rangle
    =
    2^{-N}\sum_{s\in\{0,1\}^N}
    \mathbb{E}_b[\mathcal{Z}(b;s)]
    =
    \mathbb{E}_{s,b}[\mathcal{Z}(b;s)],
\end{align}
where $s$ is sampled uniformly over the $2^N$ block configurations.
Since the same bound applies to the random variable $\mathcal{Z}$, we have the theorem:
\begin{theorem}[Additive-error transition amplitude for block-magic overlap]\label{thm:block-magic-overlap}
    Let $\hat{U}$ be a passive FLO unitary on $4N$ modes, and let $|\Psi\rangle = |\Psi_4\rangle^{\otimes N}$ with $|\Psi_4\rangle = (|1100\rangle + |0011\rangle)/\sqrt{2}$.
    Consider the transition amplitude $A=\langle\Psi|\hat{U}|\Psi\rangle$. There exists a classical randomized algorithm that outputs an estimator $\tilde{A}$ satisfying $\Pr[|\tilde{A}-A|>\epsilon]\le\delta$. The algorithm requires a sample complexity $K=O(\epsilon^{-2}\log\delta^{-1})$, with each sample processed in $O(N^3)$ time.
\end{theorem}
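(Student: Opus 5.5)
The plan is to reduce the self-overlap to the Fock-state amplitude already handled in Theorem~\ref{thm:fock-block-magic-overlap}, and then average jointly over the bra configuration $s$ and the sign vector $b$.

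\textbf{Step 1: expand the bra.} Write $\langle\Psi| = 2^{-N/2}\sum_{s\in\{0,1\}^N}\langle R(s)|$, where $|R(s)\rangle$ is a $2N$-fermion Fock basis state. This gives
\begin{equation}
A = 2^{-N/2}\sum_{s}\langle R(s)|\hat U|\Psi\rangle .
\end{equation}
For each fixed $s$, the construction preceding Theorem~\ref{thm:fock-block-magic-overlap} applies with $S_x = R(s)$, $X_s = (U_{R(s),:})^\T$, and block matrices $B_t^{(s)}$. The mixed-Pfaffian identity, obtained from Eqs.~\eqref{eq:det} and \eqref{eq:mixed1}, gives
\begin{equation}
\langle R(s)|\hat U|\Psi\rangle = 2^{-N/2}\,\mathbb{E}_b[\mathcal{Z}(b;s)],
\end{equation}
with $\mathcal{Z}(b;s) = \pf\bigl(\sum_t b_t B^{(s)}_t\bigr)\prod_t b_t$. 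The two factors of $2^{-N/2}$ combine to $2^{-N}$, which is exactly the normalization of a uniform average over $s$. Hence
\begin{equation}
A = \mathbb{E}_{s,b}[\mathcal{Z}(b;s)],
\end{equation}
where $s$ and $b$ are independent and uniform.

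\textbf{Step 2: the estimator.} Draw $K$ i.i.d.\ pairs $(s^{(k)},b^{(k)})$. For each pair, form $X_{s^{(k)}}$ and the matrix $\sum_t b^{(k)}_t B^{(s^{(k)})}_t$; this costs $O(N^2)$ to assemble, since it is a sum of $2N$ rank-two skew terms, which can be written as $X_s^\T J_b X_s$ with a block-diagonal skew $J_b$. Then evaluate the Pfaffian in $O(N^3)$. Output the sample mean $\tilde A$.

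\textbf{Step 3: boundedness and concentration.} By Lemma~\ref{lem:pf_bound_apsg}, $|\pf(\sum_t b_tB_t)|\le 1$ for every sign vector $b$ and every output-restricted isometry $X$ with $X^\dagger X = I_{2N}$. Since each $X_s$ is such an isometry, $|\mathcal{Z}(b;s)|\le 1$ uniformly in $(s,b)$.

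Hoeffding's inequality needs real variables, so apply it separately to the real and imaginary parts. Each lies in $[-1,1]$, and each is estimated to accuracy $\epsilon/\sqrt2$ with failure probability $\delta/2$. A union bound then gives $\Pr[|\tilde A - A|>\epsilon]\le\delta$ with $K = O(\epsilon^{-2}\log\delta^{-1})$.

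\textbf{Main obstacle.} The error is no longer suppressed by the factor $2^{-N/2}$, which is the point of this theorem. This works because the $2^{-N}$ prefactor is absorbed entirely into the sampling distribution over $s$, rather than leaving a sum of $2^N$ terms each bounded by $1$. So the critical checks are, first, that the normalizations of bra and ket combine exactly to $2^{-N}$, and second, that the Pfaffian bound of Lemma~\ref{lem:pf_bound_apsg} holds uniformly in $s$. The latter follows because $X_s$ is always a submatrix of rows of a unitary, transposed, hence an isometry.
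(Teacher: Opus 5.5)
Your proposal is correct and follows the paper's argument. You expand the bra into its $2^N$ uniformly weighted Fock configurations, so the combined $2^{-N}$ normalization becomes a uniform average over $s$; this gives $A=\mathbb{E}_{s,b}[\mathcal{Z}(b;s)]$ with $|\mathcal{Z}|\le 1$ from Lemma~\ref{lem:pf_bound_apsg}, and Hoeffding finishes the proof. One small slip: forming $X_s^\T J_b X_s$ is a $(2N\times 4N)(4N\times 2N)$ product and costs $O(N^3)$, not $O(N^2)$, though the stated $O(N^3)$ per-sample cost is unaffected.
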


\subsection{Extension to block-product APSG states}\label{sec:general}
While the $|\Psi_4\rangle^{\otimes N}$ block-magic state illustrates the core mixed-Pfaffian mechanism, we now generalize this framework to a broader and physically ubiquitous family of non-Gaussian paired states.
In this section, we focus on the block-product APSG structure, i.e., tensor products of \emph{single-pair} geminals on disjoint blocks. This keeps the physically relevant feature that each block contributes exactly one electron pair, while allowing arbitrary block sizes and arbitrary superpositions over disjoint pair slots within each block.

\begin{definition}[Block-product APSG states]\label{def:disjointBGP}
Partition the $M$ fermionic modes into $N$ disjoint blocks.
On each block $t$, let $r_t$ denote the number of mutually disjoint \emph{pair slots}.
Fix disjoint pair creators $\{\hat p_{t,j}^\dagger\}_{j=1}^{r_t}$, where
$\hat p_{t,j}^\dagger \equiv \hat c_{t,2j-1}^\dagger \hat c_{t,2j}^\dagger$.
Define the localized geminal operator
\begin{align}
    \hat{\eta}_t^\dagger \equiv \sum_{j=1}^{r_t} w_{t,j}\,\hat p_{t,j}^\dagger,
    \qquad w_{t,j}\in\mathbb{C},
\end{align}
where the block normalization $\sum_{j=1}^{r_t}|w_{t,j}|^2=1$ is assumed. 
A block-product APSG state is
\begin{align}
    |\Psi\rangle \equiv \bigotimes_{t=1}^N |\psi_t\rangle
    = \prod_{t=1}^N \hat{\eta}_t^\dagger |0\rangle,
\end{align}
which has exactly $2N$ fermions in total (one pair per block).
\end{definition}

Since local phases can be absorbed into surrounding passive FLO maps, we henceforth assume $w_{t,j}\ge 0$ without loss of generality. We will also employ the notation $[r_t] = \{1,2,\hdots r_t\}$. We prove direct analogues of Theorems~\ref{thm:fock-block-magic-overlap} and~\ref{thm:block-magic-overlap} for block-APSG states. Note that a quantum device estimating the same bounded quantity by repeated measurements requires $k=O(\epsilon^{-2})$ shots for additive error $\epsilon$; our classical estimators from Theorems~\ref{thm:overlap_fock_apsg} and~\ref{thm:overlap_apsg} match this operational scaling.


\begin{theorem}[Additive-error Fock-APSG overlap estimator]
\label{thm:overlap_fock_apsg}
    Let $\hat U$ be a passive FLO unitary on $M$ modes, let $|\Psi\rangle$ be an arbitrary block-product APSG state as in Def.~\ref{def:disjointBGP}, and let $\ket{x}$ be a Fock state. Define $\gamma = \prod_{t=1}^N \max_{j\in [r_t]} w_{t,j} $ and note that $\gamma \leq 1$ for normalized $\ket{\Psi}$. Then there exists a classical randomized algorithm that outputs an estimator $\tilde A$ for $\langle x|\hat U|\Psi\rangle$ satisfying $\Prob\!\left[|\tilde A-\langle x|\hat U|\Psi\rangle|>\epsilon \gamma\right]\le \delta$ with sample complexity $K=O(\epsilon^{-2}\log\delta^{-1})$.
\end{theorem}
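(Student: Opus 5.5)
The argument follows the template of Theorem~\ref{thm:fock-block-magic-overlap}, with the block weights absorbed into the skew matrices. If the total occupation of $|x\rangle$ differs from $2N$, the amplitude vanishes and we output $\tilde A=0$. Otherwise let $S_x$ be the occupied output modes and set $X\equiv (U_{S_x,:})^\T\in\mathbb{C}^{M\times 2N}$, so that $X^\dagger X=I_{2N}$; write $v_k^\T\equiv e_k^\T X$. For each block $t$ define the weighted skew matrix
\begin{align}
    B_t\equiv \sum_{j=1}^{r_t} w_{t,j}\,\Omega\bigl(v_{t,2j-1},v_{t,2j}\bigr),
\end{align}
which is the restricted, transformed two-form of $\hat\eta_t^\dagger$. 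Expanding $\prod_t\hat\eta_t^\dagger|0\rangle$ over slot choices $j=(j_1,\dots,j_N)$ gives
\begin{align}
    \langle x|\hat U|\Psi\rangle=\sum_{j}\Bigl(\prod_t w_{t,j_t}\Bigr)\det\bigl(U_{S_x,R(j)}\bigr),
\end{align}
up to a fixed global sign from ordering conventions, exactly as in the $|\Psi_4\rangle$ case. By multilinearity of $\alpha$, Eq.~\eqref{eq:det} and Eq.~\eqref{eq:mixed1}, this sum equals $[y_1\cdots y_N]\,\pf(\sum_t y_tB_t)$. Drawing $b\sim\mathrm{Unif}(\{\pm1\}^N)$ and setting $\mathcal{Z}(b)=\pf(\sum_t b_tB_t)\prod_t b_t$ then gives $\E_b[\mathcal{Z}(b)]=\langle x|\hat U|\Psi\rangle$. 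This is the standard Fourier isolation: a monomial survives averaging against $\prod_t b_t$ only if every exponent is odd, and since $\pf$ is homogeneous of degree $N$ in $y$, the only surviving monomial is $y_1\cdots y_N$.

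The main step, and the expected obstacle, is the uniform bound $|\mathcal{Z}(b)|\le\gamma$ for all sign patterns $b$. The idea is to factor out the isometry. We can write $\sum_t b_tB_t=X^\T C_b X$, where $C_b$ is the $M\times M$ skew matrix that is block-diagonal across the disjoint pair slots, with $2\times2$ blocks $b_tw_{t,j}\bigl(\begin{smallmatrix}0&1\\-1&0\end{smallmatrix}\bigr)$. The singular values of $C_b$ are therefore $w_{t,j}$, each appearing twice. Since $|\pf(A)|^2=|\det A|$, it suffices to bound $|\det(X^\T C_bX)|$. I would apply the Cauchy--Binet formula or Schur-convexity, or more concretely the fact that the singular values of $X^\T C_b X$ are majorized multiplicatively by those of $C_b$ because $X$ is an isometry. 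This bounds $|\det(X^\T C_bX)|$ by the product of the top $2N$ singular values of $C_b$.

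A cruder estimate using only the largest singular value $\max_{t,j}w_{t,j}$ would give $(\max w)^{2N}$, which is not $\gamma^2$. To obtain the block-wise $\gamma$, I would instead use the Pfaffian as a mixed coefficient. Expanding $\pf(X^\T C_bX)$ via the minor-summation (Cauchy--Binet) formula for Pfaffians gives $\sum_{|I|=2N}\pf\bigl((C_b)_{I,I}\bigr)\det(X_{I,:})$. The factor $\pf\bigl((C_b)_{I,I}\bigr)$ is nonzero only when $I$ is a union of $N$ whole pair slots. That the product of the chosen slot weights is at most $\gamma$ when slots come from distinct blocks is immediate. When several slots are taken from the same block, the product can exceed $\gamma$, so this step is where care is needed. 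My first attempt would exploit the fact that the $y$-coefficient extraction discards those terms in expectation but not pointwise. The fallback is to rescale: replace $B_t$ by $B_t/m_t$ with $m_t=\max_jw_{t,j}$, so that all singular values of the rescaled $C_b$ are at most $1$, and multiply the estimator by $\prod_t m_t=\gamma$. Then Lemma~\ref{lem:pf_bound_apsg} (operator-norm bound, with $X$ an isometry and $\|C_b\|\le1$) gives $|\pf|\le1$, hence $|\mathcal{Z}|\le\gamma$. This rescaling is harmless because the mixed coefficient is multilinear in the $B_t$.

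With $|\mathcal{Z}(b)|\le\gamma$ established, Hoeffding's inequality applied to the real and imaginary parts shows that the average of $K=O(\epsilon^{-2}\log\delta^{-1})$ samples deviates from $\langle x|\hat U|\Psi\rangle$ by more than $\epsilon\gamma$ with probability at most $\delta$. Each sample costs one $O(N^3)$ Pfaffian plus $O(MN)$ work to assemble the $B_t$.
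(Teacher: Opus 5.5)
Your proposal is correct. Once you drop the minor-summation detour, it is exactly the paper's argument: rescale each block by $m_t=\max_j w_{t,j}$ so that $\|W_t/m_t\|_{\mathrm{op}}\le 1$, apply Lemma~\ref{lem:pf_bound_apsg} with the isometry $X$ to get $|\pf|\le 1$, pull out the factor $\gamma=\prod_t m_t$, and finish with Hoeffding. State explicitly that the bound $|\mathcal{Z}(b)|\le\gamma$ applies to the rescaled estimator $\gamma\,\pf\bigl(\sum_t b_tB_t/m_t\bigr)\prod_t b_t$, because the unrescaled $\pf\bigl(\sum_t b_tB_t\bigr)$ can exceed $\gamma$ pointwise (it is only controlled by $(\max_{t,j}w_{t,j})^N$).
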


Following a methodology analogous to the $|\Psi_4\rangle$ case, the exponential sum over physical pair configurations can be mathematically compressed into a single multilinear coefficient of a multivariate Pfaffian polynomial. 
Thus, by applying the same discrete sign-averaging technique, we obtain an unbiased Monte Carlo estimator for the fixed-output transition amplitude (we defer the full mixed-Pfaffian coefficient identity to App.~\ref{app:general_estimator}). 
For Theorem~\ref{thm:overlap_fock_apsg}, the remaining step is a pointwise bound on this fixed-output estimator; after rescaling by $\gamma=\prod_t\max_j w_{t,j}$, Lemma~\ref{lem:pf_bound_apsg} gives the required uniform control.

We then pass from fixed Fock outputs to arbitrary APSG bra states in Theorem~\ref{thm:overlap_apsg}. 
Here, the factor $\gamma$, which plays the same role as the factor $1/\sqrt{2^N}$ in Theorem~\ref{thm:fock-block-magic-overlap}, is generally too large for the simple averaging argument used in Theorem~\ref{thm:block-magic-overlap} to apply directly. 
Instead, App.~\ref{app:apsg_overlap_proofs} proves a bounded second moment for the combined bra-sampling and mixed-Pfaffian estimator, yielding an efficient estimator for general block-product APSG transition amplitudes.

\begin{theorem}[Additive-error APSG overlap estimator]
\label{thm:overlap_apsg}
    Let $\hat U$ be a passive FLO unitary on $M$ modes, and let $|\Phi\rangle$ and $|\Psi\rangle$ be arbitrary block-product APSG states as in Def.~\ref{def:disjointBGP}. Then there exists a classical randomized algorithm that outputs an estimator $\tilde A$ for $\langle \Phi|\hat U|\Psi\rangle$ satisfying $\Prob\!\left[|\tilde A-\langle \Phi|\hat U|\Psi\rangle|>\epsilon\right]\le \delta$ with sample complexity $K=O(\epsilon^{-2}\log\delta^{-1})$.
\end{theorem}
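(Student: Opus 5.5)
The plan is to combine the fixed-output estimator of Theorem~\ref{thm:overlap_fock_apsg} with importance sampling of the bra configuration. This gives an unbiased estimator whose second moment is $O(1)$. Chebyshev's inequality plus a median-of-means step then yields $K=O(\epsilon^{-2}\log\delta^{-1})$, exactly as in the bounded case. The step I have not closed is the second-moment bound itself.

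\textbf{Unbiased estimator.} Write $|\Phi\rangle=\prod_t \hat\zeta_t^\dagger|0\rangle$ with $\hat\zeta_t^\dagger=\sum_j u_{t,j}\hat q_{t,j}^\dagger$, where $u_{t,j}\ge 0$ and $\sum_j u_{t,j}^2=1$. Expanding in the bra slots gives $|\Phi\rangle=\sum_s u(s)|R(s)\rangle$, with $u(s)=\prod_t u_{t,s_t}$ and $R(s)$ the corresponding Fock configuration. Therefore
\begin{equation*}
\langle\Phi|\hat U|\Psi\rangle=\sum_s u(s)\langle R(s)|\hat U|\Psi\rangle .
\end{equation*}
Sample $s$ from the product distribution $p(s)=u(s)^2$, which is normalized blockwise. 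Independently sample $b$ as in the mixed-Pfaffian construction of App.~\ref{app:general_estimator}, and output
\begin{equation*}
Y\equiv \mathcal Z(b;s)/u(s),
\end{equation*}
where $\mathcal Z(b;s)=\pf\bigl(\sum_t b_tB_t^{(s)}\bigr)\prod_t b_t$. Here $B_t^{(s)}=\sum_j w_{t,j}\,\Omega(v_{t,2j-1},v_{t,2j})$ is built from the output-restricted isometry for $S=R(s)$. By the coefficient identity, Eq.~\eqref{eq:mixed1}, we have $\E_b\mathcal Z(b;s)=\langle R(s)|\hat U|\Psi\rangle$. Hence $\E Y=\sum_s p(s)\,\langle R(s)|\hat U|\Psi\rangle/u(s)=\langle\Phi|\hat U|\Psi\rangle$. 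The factor $1/u(s)$ is unbounded pointwise, which is why the pointwise bound of Lemma~\ref{lem:pf_bound_apsg} alone does not suffice.

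\textbf{Second moment.} The importance weights cancel exactly:
\begin{equation*}
\E|Y|^2=\sum_s \E_b|\mathcal Z(b;s)|^2 .
\end{equation*}
So it suffices to show $\sum_s \E_b\bigl|\pf\bigl(\sum_t b_tB^{(s)}_t\bigr)\bigr|^2\le C$. The key reinterpretation follows from Eq.~\eqref{eq:pf_wedge}. For fixed signs $b$, the Pfaffian of the restricted matrix equals $\langle R(s)|\hat U\,\tfrac{1}{N!}\bigl(\sum_t b_t\hat\eta_t^\dagger\bigr)^N|0\rangle$. Since the $|R(s)\rangle$ are orthonormal, Bessel's inequality bounds the $s$-sum by $\bigl\|\Pi_\Phi\,\hat U\,\tfrac1{N!}(\sum_t b_t\hat\eta_t^\dagger)^N|0\rangle\bigr\|^2$, where $\Pi_\Phi$ projects onto the span of the bra support.

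I would first try to replace $b\in\{\pm1\}^N$ by independent uniform phases $e^{i\theta_t}$. This is still an exact Fourier filter and preserves the pointwise bound used for Theorem~\ref{thm:overlap_fock_apsg}, since that bound only used $|b_t|=1$ in the operator-norm argument. Parseval over the phases then gives
\begin{equation*}
\E_\theta|\cdot|^2=\sum_{m:\,|m|=N}\bigl\|\Pi_\Phi\hat U\prod_t(\hat\eta_t^\dagger)^{m_t}/m_t!\,|0\rangle\bigr\|^2 ,
\end{equation*}
a sum over block-degree vectors $m$. The multilinear term $m=(1,\dots,1)$ contributes at most $\|\hat U|\Psi\rangle\|^2=1$.

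\textbf{Main obstacle.} The hard part is the remaining terms with some $m_t\ge2$. These are nonzero whenever $r_t\ge 2$, since $\hat\eta_t^{\dagger2}\neq0$. The unprojected total $\sum_m\prod_t e_{m_t}(w_t^2)$ can be as large as roughly $e^N$, so the bound must exploit $\Pi_\Phi$, i.e.\ that each bra block holds exactly one pair. Two options are available. The first is to filter the phases more sharply, using $N$-th roots of unity or an auxiliary variable conjugate to each bra block's occupation, so that only $m=(1,\dots,1)$ survives at the level of second moments. The second is to bound the projected norm of the overloaded terms directly, via a Cauchy--Schwarz argument against the bra-block occupation constraint. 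If the off-multilinear contributions can be shown to sum to $O(1)$, then $\E|Y|^2=O(1)$ and the theorem follows.
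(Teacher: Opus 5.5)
\textbf{There is a genuine gap.} Your reduction matches the paper's. You importance-sample the bra slot configuration $s$ with $p(s)=u(s)^2$, output $\mathcal Z(b;s)/u(s)$, and note that the weights cancel, giving $\E|Y|^2=\E_b\sum_s|\pf(A(b)_{R(s),R(s)})|^2$ with $A(b)=UW(b)U^\T$ and $W(b)=\sum_t b_tW_t$. Bounding this quantity is the entire content of the theorem beyond unbiasedness, and you leave it open.

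The two routes you sketch would also not close it. No unit-modulus filter can remove the sectors with some $m_t\ge2$ from the second moment. $\E|\mathcal Z|^2$ is the average of $|P|^2$ over the filter points, and for uniform phases Parseval shows that every sector contributes $\|\Pi_\Phi\hat U\prod_t(\hat\eta_t^\dagger)^{m_t}/m_t!\,|0\rangle\|^2$. An auxiliary variable attached to the bra-block occupations does not help either. That constraint is already fully imposed by the sum over $s$, and it is not conjugate to the ket-block degrees $m$, since $\hat U$ spreads every ket block over all output modes. The off-multilinear contributions must be bounded, not filtered away.

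\textbf{The missing lemma.} What you need is a bound, pointwise in $b$, on the bra-projected sum, which never separates the $m$-sectors. Put $B=A^\dagger A\succeq0$ and $R=R(s)$.
\begin{itemize}
\item Cauchy--Binet gives $\det B_{R,R}=\sum_{|S'|=2N}|\det A_{S',R}|^2\ge|\det A_{R,R}|^2$. Hence $|\pf(A_{R,R})|^2=|\det A_{R,R}|\le(\det B_{R,R})^{1/2}$.
\item Fischer's inequality, applied to the $N$ diagonal $2\times2$ blocks of $B_{R,R}$ (one per chosen bra pair slot), gives $\det B_{R,R}\le\prod_t\det B[t,s_t]$. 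Here $B[t,j]$ is the $2\times2$ principal submatrix of $B$ on the modes of bra slot $(t,j)$.
\item The sum over $s$ now factorizes over bra blocks. Use $\sqrt{\det}\le\tfrac12\tr$ for $2\times2$ PSD matrices, then AM--GM, then $\sum_{t,j}\tr B[t,j]\le\tr B$ (the diagonal of $B$ is nonnegative and the bra slots are disjoint):
\[
\sum_s|\pf(A_{R(s),R(s)})|^2\le\prod_{t=1}^N\Bigl(\tfrac12\sum_j\tr B[t,j]\Bigr)\le\Bigl(\tfrac1{2N}\tr B\Bigr)^N=\Bigl(\tfrac1{2N}\|A\|_{\mathrm{HS}}^2\Bigr)^N .
\]
\item For unitary $U$, $\|A\|_{\mathrm{HS}}=\|W(b)\|_{\mathrm{HS}}$, and $\|W(b)\|_{\mathrm{HS}}^2=2\sum_{t,j}w_{t,j}^2=2N$.
\end{itemize}
So the sum is at most $1$ for every $b$, and $\E|Y|^2\le1$.

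This is exactly where the one-pair-per-bra-block structure enters: through Fischer's block factorization and the product form of the $s$-sum. It is also where the $e^N$ growth of the unprojected norm, which you correctly flagged, is avoided, because the Hilbert--Schmidt normalization $2N$ makes the AM--GM step land at $1$. With this lemma your median-of-means conclusion goes through, and the same argument yields $\|G\|_{\mathrm{op}}^{4N}$ in the non-unitary case.
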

While we defer the full proof of Theorem~\ref{thm:overlap_apsg} to App.~\ref{app:apsg_overlap_proofs}, we note here that the estimator is conceptually simple given the result of Theorem~\ref{thm:overlap_fock_apsg}. We rewrite
\begin{align}
    \ket{\Phi} = \bigotimes_{t=1}^N\left(\sum_{j=1}^{s_t} v_{t,j}\,\hat{q}_{t,j}^\dagger\ket{0}\right)
    = \bigotimes_{t=1}^N\left(\sum_{j=1}^{s_t} \abs{v_{t,j}}^2\, \frac{\hat{q}_{t,j}^\dagger\ket{0}}{v_{t,j}^*}\right),
\end{align}
where $s$, $v$ and $\hat{q}$ are the analogues of $r$, $w$ and $\hat{p}$ for the state $\ket{\Phi}$.
Now we sample unnormalized Fock states $(v_{t,j}^*)^{-1}\hat p_{t,j}^\dagger\ket{0}$ from the product probability distribution $x\mapsto \prod_{t=1}^N \abs{v_{t,x_t}}^2$, where $x$ takes values in $[s_1]\times [s_2]\times \hdots \times [s_N]$. We combine the sample average over sampled $x$ vectors with the result of Theorem~\ref{thm:overlap_fock_apsg} to obtain an unbiased estimator for the full transition amplitude. 
In App.~\ref{app:apsg_overlap_proofs}, we show that the variance of this combined estimator is bounded by $1$, which leads to efficient classical estimation, for example via the median-of-means procedure.

\subsection{Extension to non-unitary Gaussian maps}\label{sec:nonunitary}
The polynomial-time efficiency and variance bounds established above apply to unitary passive FLOs ($\|U\|_\mathrm{op}=1$). 
However, our mixed-Pfaffian approach relies solely on the multilinearity of the exterior algebra, specifically through the congruence transformation of two-forms ($W \mapsto G W G^\T$) and the discrete Fourier filtering.
Consequently, the coefficient-extraction identity underlying our estimator remains exact for any number-preserving Gaussian map $\hat G$
with single-particle matrix $G\in\mathbb{C}^{M\times M}$. For instance, for every fixed-output Fock state $|x\rangle$ with $2N$ fermions,
\begin{equation}
    \langle x|\hat G|\Psi\rangle=\mathbb{E}_{b\sim\mathrm{Unif}(\{\pm1\}^N)}\left[\mathcal{Z}(b;x)\right],
\end{equation}
with $\mathcal{Z}(b;x)$ defined in App.~\ref{app:general_estimator}.

While the algebraic extraction remains exact per sample, the loss of unitarity degrades the analytical bounds governing the Monte Carlo sampling variance.

\begin{theorem}[Additive-error APSG overlap estimator for non-unitary Gaussian maps]
\label{thm:overlap_apsg_nonunitary}
    Let $\hat G$ be a non-unitary number-preserving fermionic Gaussian map on $M$ modes with single-particle matrix $G$, and let $|\Phi\rangle$ and $|\Psi\rangle$ be arbitrary block-product APSG states as in Definition~\ref{def:disjointBGP}. Then there exists a classical randomized algorithm outputting an estimator $\tilde A$ for $\langle \Phi|\hat G|\Psi\rangle$ such that $\Prob\!\left[|\tilde A-\langle \Phi|\hat G|\Psi\rangle|>\epsilon\right]\le \delta$ with sample complexity $K=O\!\left(\|G\|_{\mathrm{op}}^{4N}\epsilon^{-2}\log\delta^{-1}\right)$.
\end{theorem}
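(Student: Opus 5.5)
We reuse the estimator of Theorem~\ref{thm:overlap_apsg} without modification and only redo the variance bound, tracking where unitarity was used. Write $|\Phi\rangle$ with block weights $v_{t,j}$ and pair creators $\hat q_{t,j}^\dagger$. Sample a bra configuration $x\in[s_1]\times\cdots\times[s_N]$ from $\prod_t |v_{t,x_t}|^2$ and a sign vector $b\sim\mathrm{Unif}(\{\pm1\}^N)$. Let $\mathcal{Z}(b;x)$ be the fixed-output mixed-Pfaffian random variable of App.~\ref{app:general_estimator} for the Fock output $|x\rangle=\prod_t \hat q^\dagger_{t,x_t}|0\rangle$. The output is then
\begin{equation}
\mathcal{W}(x,b)\equiv \Big(\prod_{t=1}^N v_{t,x_t}^{-1}\Big)^{*}\,\mathcal{Z}(b;x).
\end{equation}
Unbiasedness needs no unitarity, because the coefficient-extraction identity holds for any $G$, as stated above: $\mathbb{E}_b[\mathcal{Z}(b;x)]=\langle x|\hat G|\Psi\rangle$. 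Averaging over $x$ then reproduces $\langle\Phi|\hat G|\Psi\rangle$ by multilinearity, exactly as in the unitary case.

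The substantive step is bounding the Pfaffian. In the unitary case, Lemma~\ref{lem:pf_bound_apsg} controls $\pf(\sum_t b_t B_t)$, where the $B_t$ are built from the rows of the isometry $X=(U_{S_x,:})^\T$. Now $X$ is replaced by $X_G=(G_{S_x,:})^\T$, and $\|X_G\|_{\mathrm{op}}\le\|G\|_{\mathrm{op}}$. The matrix $\sum_t b_t B_t$ has the form $X_G^\T C_b X_G$. Here $C_b$ is a block-diagonal skew matrix on the input modes, whose block $t$ is $b_t\sum_j w_{t,j}\,\Omega(e_{2j-1},e_{2j})$, possibly restricted to the $2N$ output rows. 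Since $|\pf(A)|^2=|\det A|\le\|A\|_{\mathrm{op}}^{2N}$ for a $2N\times 2N$ skew matrix, we get
\begin{equation}
|\pf(X_G^\T C_b X_G)|\le \|X_G\|_{\mathrm{op}}^{2N}\,\|C_b\|_{\mathrm{op}}^{N}\cdots
\end{equation}
The cleaner route is to copy the proof of Lemma~\ref{lem:pf_bound_apsg} line by line. Each place the isometry bound $\|X\|_{\mathrm{op}}\le1$ was used is replaced by $\|G\|_{\mathrm{op}}$, and $\pf$ is homogeneous of degree $N$ in the matrix, which is quadratic in $X$. The conclusion should be that the previous pointwise bound, $\gamma$ up to normalization, gets multiplied by $\|G\|_{\mathrm{op}}^{2N}$. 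One could also deduce this from the singular value decomposition $G=\|G\|_{\mathrm{op}}\,U_1 D U_2$ with $\|D\|_{\mathrm{op}}\le1$, using that the Pfaffian bound depends on $X$ only through $\|X\|_{\mathrm{op}}$.

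The second moment of $\mathcal{W}$ is then bounded as in App.~\ref{app:apsg_overlap_proofs}. The $x$-dependent weights $|v_{t,x_t}|^{-2}$ cancel against the sampling probabilities in the same way as before, while the squared Pfaffian factor contributes $\|G\|_{\mathrm{op}}^{4N}$. This gives $\mathbb{E}|\mathcal{W}|^2\le\|G\|_{\mathrm{op}}^{4N}$. Median-of-means, rather than Hoeffding, since $\mathcal{W}$ need not be bounded, then yields $K=O(\|G\|_{\mathrm{op}}^{4N}\epsilon^{-2}\log\delta^{-1})$.

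The main obstacle I expect is verifying that the unitary second-moment argument does not use unitarity beyond the operator-norm bound. In particular, the sum over $x$ in App.~\ref{app:apsg_overlap_proofs} may invoke a completeness or orthogonality relation such as $\sum_x|\langle x|\hat U|\cdot\rangle|^2\le1$, which fails for contractions or expansions. I would first check whether the proof uses only a pointwise Pfaffian bound; if so, the $\|G\|_{\mathrm{op}}^{4N}$ factor follows directly. Otherwise, I would fall back on the crude pointwise bound above, applied uniformly in $x$ and $b$. Combined with $\sum_x\prod_t|v_{t,x_t}|^2\,|v_{t,x_t}|^{-2}\max_j w_{t,j}^2\le\prod_t s_t\max_j w_{t,j}^2$, this may need the normalization structure of $\gamma$ to stay bounded by $1$.
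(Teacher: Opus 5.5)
There is a genuine gap, and it sits in the step that carries the theorem: the second-moment bound. Your estimator, the unbiasedness argument, and the pointwise bound $|\mathcal Z(b;x)|\le\gamma\,\|G\|_{\mathrm{op}}^{2N}$ are all fine; the last is Lemma~\ref{lem:pf_bound_apsg} with $L=\|G\|_{\mathrm{op}}$.

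After $P(x)$ cancels against $\prod_t|v_{t,x_t}|^{-2}$, what remains is $\mathbb{E}_b[S(b)]$. Here $S(b)\equiv\sum_{x\in\prod_t[s_t]}|\pf(A(b)_{e_x,e_x})|^2$ and $A(b)=GW(b)G^\T$. This is an \emph{unweighted} sum over exponentially many bra configurations. A uniform pointwise bound only gives $S(b)\le\|G\|_{\mathrm{op}}^{4N}\gamma^2\prod_t s_t$, which is exponentially large in general. For example, take $|\Psi\rangle$ a Slater determinant (so $r_t=1$, $\gamma=1$) and $|\Phi\rangle=|\Psi_4\rangle^{\otimes N}$; the bound is then $2^N\|G\|_{\mathrm{op}}^{4N}$. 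Normalization only forces $\max_j w_{t,j}^2\ge 1/r_t$, so the structure of $\gamma$ cannot rescue this. Your sentence ``the squared Pfaffian factor contributes $\|G\|_{\mathrm{op}}^{4N}$'' is therefore unjustified, and your fallback fails.

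A completeness relation does not help either. It is unavailable for non-unitary $G$. Even for unitary $G$, $\pf(A(b)_{e_x,e_x})$ is an amplitude of the unnormalized state $\frac{1}{N!}(\sum_t b_t\hat\eta_t^\dagger)^N|0\rangle$. Completeness would only bound $S(b)$ by that state's squared norm, $e_N(\{w_{t,j}^2\})$, which can approach $N^N/N!$.

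The missing idea, which is the content of the paper's proof, is to bound the whole sum at once rather than term by term. Fix $b$, write $A=A(b)$, and set $B=A^\dagger A\succeq0$.

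Cauchy--Binet gives $\det B_{e_x,e_x}=\sum_{|s|=2N}|\det A_{s,e_x}|^2\ge|\det A_{e_x,e_x}|^2$. Hence $|\pf(A_{e_x,e_x})|^2=|\det A_{e_x,e_x}|\le(\det B_{e_x,e_x})^{1/2}$.

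Fischer's inequality over the $N$ diagonal $2\times2$ blocks of the selected pair slots gives $\det B_{e_x,e_x}\le\prod_t\det B_{(t,x_t)}$.

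Because $x$ ranges over a product of disjoint slots, the sum factorizes as $\prod_t\sum_{j\in[s_t]}(\det B_{(t,j)})^{1/2}$. Three further facts finish it: $(\det C)^{1/2}\le\frac12\tr C$ for $2\times2$ $C\succeq0$; AM--GM; and $\sum_{t,j}\tr B_{(t,j)}\le\tr B$. Together they give $S(b)\le\bigl(\frac{1}{2N}\|GW(b)G^\T\|_{\mathrm{HS}}^2\bigr)^N\le\|G\|_{\mathrm{op}}^{4N}$, using $\|W(b)\|_{\mathrm{HS}}^2=2\sum_{t,j}w_{t,j}^2=2N$.

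The Hilbert--Schmidt norm is what tracks the $\ell^2$ normalization of $|\Psi\rangle$, in place of $\max_j w_{t,j}$. As you suspected, non-unitarity enters only through $\|GWG^\T\|_{\mathrm{HS}}\le\|G\|_{\mathrm{op}}^2\|W\|_{\mathrm{HS}}$. With this lemma, $\mathbb{E}|\mathcal W|^2\le\|G\|_{\mathrm{op}}^{4N}$ and your median-of-means step goes through as written.

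A minor point: the importance weight should be $\prod_t v_{t,x_t}^{-1}$, not its complex conjugate, since $v_{t,j}^*=|v_{t,j}|^2/v_{t,j}$.
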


The proof is identical in spirit to the unitary case~(see App.~\ref{app:apsg_overlap_proofs}): we combine the fixed-output mixed-Pfaffian estimator with an $L^2$ importance-sampling reduction, and use the non-unitary fixed-output bound $|\mathcal Z(b;x)|\le \|G\|_{\mathrm{op}}^{2N}$, which yields $\mathbb E[|\mathcal X|^2]\le \|G\|_{\mathrm{op}}^{4N}$.

We emphasize that this norm-dependent overhead is not an artifact of the Pfaffian reduction itself. Rather, it exposes the variance barrier induced by non-unitary Gaussian propagation, the same obstruction that appears in real-time Hubbard--Stratonovich~(HS) dynamics and is closely related to the sign/phase problem in AFQMC imaginary-time projection.

\section{Estimation of macroscopic observables}\label{sec:estimation}

\subsection{Multipoint number correlators}\label{sec:num_corr}
The overlap estimators developed above provide the basic transition-kernel primitive. 
We now show that the same primitive also yields additive-error estimators for physically relevant observables, including multipoint number correlators and transition matrix elements between distinct many-body states. 
These quantities are central to extracting many-body diagnostics, such as density profiles, doublon correlations, and spin/charge structure factors, and to evaluating Hamiltonian matrix elements in algorithms such as subspace expansions~\cite{mcclean2017hybrid}, linear response theory~\cite{stanton1993equation}, and NOCI~\cite{sundstrom2014non, huggins2020non}.

Let $S \subseteq [M]$ be a target subset of modes, and define the multipoint number correlator 
\begin{equation}
    \hat{O}_S \equiv \prod_{i \in S} \hat{n}_i, \quad \text{where } \hat{n}_i \equiv \hat{c}_i^\dagger \hat{c}_i.
\end{equation}
For Gaussian dynamics, low-weight instances of these number correlators can be evaluated exactly by propagating Majorana operators in the Heisenberg picture~\cite{miller2025simulation, d2025majorana}. 
The cost of this exact approach, however, scales as $O(M^{|S|})$, which makes high-weight number correlators and large-scale number-correlation functions increasingly costly.
Our result addresses precisely the finite-precision task relevant for quantum simulation experiments: additive-error estimation.
In this operational regime, the operator-weight bottleneck is avoided by rewriting products of number operators as randomized averages of bounded Gaussian parity kernels.

To establish a general algorithmic primitive, we consider the transition matrix element of $\hat{O}_S$ between two independently generated states: $\langle\Phi| \hat{U}_L^\dagger \hat{O}_S \hat{U}_R |\Psi\rangle$, where $|\Phi\rangle$ and $|\Psi\rangle$ are block-product APSG states, and $\hat{U}_L$ and $\hat{U}_R$ are passive FLO unitaries. Evaluating this multipoint transition amplitude naively requires an exponentially large Hilbert space of Fock outcomes. 
However, because the eigenvalues of $\hat{n}_i$ on Fock basis states are Boolean ($n_i \in \{0,1\}$), we invoke the exact operator identity $\hat{n}_i = (1 - (-1)^{\hat{n}_i})/2$ to systematically expand the multipoint correlator into a sum of parity strings:
\begin{equation} \label{eq:correlator_expansion}
    \hat{O}_S = 2^{-|S|} \prod_{i \in S} \left(1 - (-1)^{\hat{n}_i}\right) = 2^{-|S|} \sum_{T \subseteq S} (-1)^{|T|} \hat{\Pi}_T,
\end{equation}
where $\hat{\Pi}_T \equiv \prod_{i \in T} (-1)^{\hat{n}_i}$ is the parity operator on subset $T$. 
Taking the transition matrix element yields:
\begin{equation}
    \langle\hat{O}_S\rangle_{L,R} \equiv \langle\Phi| \hat{U}_L^\dagger \hat{O}_S \hat{U}_R |\Psi\rangle = \mathbb{E}_{T \subseteq S}[(-1)^{|T|} \mu_T^{L,R}],
\end{equation}
where $\mu_T^{L,R} \equiv \langle\Phi|\hat{U}_L^\dagger \hat{\Pi}_T \hat{U}_R|\Psi\rangle.$
Importantly, a parity operator is a diagonal passive FLO: it flips the sign of the modes in $T$ and leaves all other modes unchanged. 
Hence, $\hat U_L^\dagger \hat\Pi_T\hat U_R$ is again a passive FLO, so $\mu_T^{L,R}$ is exactly the overlap primitive treated above. 
Therefore, an unbiased estimator is obtained by sampling $T\subseteq S$ uniformly and then drawing one mixed-Pfaffian sign sample for the corresponding passive-FLO kernel. 
The combined one-shot variable inherits the same second-moment bound as the
underlying overlap estimator, so a median-of-means estimator gives sample
complexity independent of $|S|$.

\begin{theorem}[Additive-error estimation of generic transition number correlators]\label{thm:cor}
    Let $\hat{U}_L$ and $\hat{U}_R$ be passive FLOs on $M$ modes, and let $|\Phi\rangle$ and $|\Psi\rangle$ be arbitrary block-product APSG states. For any targeted subsystem $S \subseteq [M]$, there exists a classical randomized algorithm that outputs an estimator $\widetilde{\langle\hat{O}_S\rangle}_{L,R}$ for the transition correlator $\langle\hat{O}_S\rangle_{L,R} \equiv \langle\Phi|\hat{U}_L^\dagger \hat{O}_S \hat{U}_R|\Psi\rangle$ satisfying $\mathrm{Pr}[| \widetilde{\langle\hat{O}_S\rangle}_{L,R} - \langle\hat{O}_S\rangle_{L,R} | > \epsilon] \le \delta$. The algorithm requires a sample complexity of $K=O(\epsilon^{-2}\log\delta^{-1})$.
\end{theorem}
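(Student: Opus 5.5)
Our plan is to reduce Theorem~\ref{thm:cor} to the overlap primitive of Theorem~\ref{thm:overlap_apsg} by means of the parity expansion in Eq.~\eqref{eq:correlator_expansion}. Then we show that the composite one-shot random variable has a second moment bounded by a constant independent of $|S|$, $M$, and $N$, and finish with median-of-means.

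The first step is to record that $\hat\Pi_T=\prod_{i\in T}(-1)^{\hat n_i}$ is the passive FLO with diagonal single-particle matrix $D_T=\mathrm{diag}(d_1,\dots,d_M)$, where $d_i=-1$ for $i\in T$ and $d_i=1$ otherwise. The product $\hat V_T\equiv \hat U_L^\dagger\hat\Pi_T\hat U_R$ is then a passive FLO with mode unitary $U_L^\dagger D_T U_R$ (up to the paper's transpose conventions). This matrix is unitary, so $\|\cdot\|_{\mathrm{op}}=1$. Hence $\mu_T^{L,R}=\langle\Phi|\hat V_T|\Psi\rangle$ is exactly an APSG overlap of the type in Theorem~\ref{thm:overlap_apsg}.

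The estimator is built in three nested stages:
\begin{enumerate}
\item Sample $T\subseteq S$ uniformly, i.e.\ include each $i\in S$ independently with probability $1/2$.
\item Sample a bra configuration $x$ from the product distribution $\prod_t|v_{t,x_t}|^2$ attached to $|\Phi\rangle$, exactly as in the proof of Theorem~\ref{thm:overlap_apsg}.
\item Sample $b\sim\mathrm{Unif}(\{\pm1\}^N)$, and output
\begin{equation}
\mathcal{Y}=(-1)^{|T|}\,\mathcal X_T(x,b),
\end{equation}
where $\mathcal X_T$ is the combined bra-sampling and mixed-Pfaffian variable of Theorem~\ref{thm:overlap_apsg} for the kernel $\hat V_T$.
\end{enumerate}
Unbiasedness follows from the tower property. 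Conditioning on $T$ gives $\mathbb E[\mathcal X_T\mid T]=\mu_T^{L,R}$ by Theorem~\ref{thm:overlap_apsg}. Averaging over $T$ then gives $\mathbb E[\mathcal Y]=\mathbb E_T[(-1)^{|T|}\mu_T^{L,R}]=\langle\hat O_S\rangle_{L,R}$.

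For the variance, we use that App.~\ref{app:apsg_overlap_proofs} establishes $\mathbb E[|\mathcal X|^2]\le 1$ for every passive FLO kernel, and that this bound is uniform in the unitary. It therefore holds conditionally on each $T$, so
\begin{equation}
\mathbb E|\mathcal Y|^2=\mathbb E_T\,\mathbb E[|\mathcal X_T|^2\mid T]\le 1 .
\end{equation}
The key point is that the uniform $T$-average introduces no factor $2^{|S|}$, because the weights $2^{-|S|}$ in Eq.~\eqref{eq:correlator_expansion} are exactly the sampling probabilities. Median-of-means on $K=O(\epsilon^{-2}\log\delta^{-1})$ i.i.d.\ copies then yields the stated guarantee. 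Each sample costs one $O(M^3)$-type matrix product plus one Pfaffian.

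The main obstacle is checking that the second-moment bound of Theorem~\ref{thm:overlap_apsg} really is uniform over all passive FLO kernels. In particular, it must not depend on, for example, the kernel mapping $|\Psi\rangle$ into the span of $|\Phi\rangle$'s configurations in a favorable way. If the appendix bound uses only unitarity via Lemma~\ref{lem:pf_bound_apsg} together with the normalization of the bra weights $\sum_x\prod_t|v_{t,x_t}|^2=1$, uniformity is immediate.

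A subtler point is that the fixed-output bound $|\mathcal Z(b;x)|\le\gamma$ (with the $1/v^*$ rescaling) must hold with $\gamma$ depending only on $|\Psi\rangle$, not on $T$. I would verify this by reinspecting the operator-norm argument with $U$ replaced by $U_L^\dagger D_TU_R$, which changes nothing since $D_T$ is unitary.
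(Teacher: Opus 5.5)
Your proposal is correct and follows essentially the same route as the paper: expand $\hat O_S$ into parity strings and sample $T\subseteq S$ uniformly, so that the $2^{-|S|}$ weights become sampling probabilities; run the combined bra-sampling/mixed-Pfaffian estimator of Theorem~\ref{thm:overlap_apsg} on the passive-FLO kernel $\hat U_L^\dagger\hat\Pi_T\hat U_R$; inherit its second-moment bound conditionally on $T$; and finish with median-of-means. One correction to your hedged remark on uniformity: the bound $\E[|\mathcal X|^2]\le 1$ does not follow from Lemma~\ref{lem:pf_bound_apsg} plus bra normalization, since the factor $|v_{t,x_t}|^{-2}$ cancels $P(x)$ and that route only gives $\gamma^2\prod_t s_t$. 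It comes instead from the Cauchy--Binet/Fischer/AM--GM lemma of App.~\ref{app:apsg_overlap_proofs}, which bounds $\sum_x|\pf(\cdot)|^2$ by $\bigl(\|GWG^\T\|_{\mathrm{HS}}^2/(2N)\bigr)^N\le\|G\|_{\mathrm{op}}^{4N}$ and is therefore uniform over all unitary kernels, including $U_L^\dagger D_T U_R$.
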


\subsection{Off-diagonal transition elements and reduced density matrices}\label{sec:rdm}
While the parity-string expansion resolves diagonal multi-point correlators, evaluating off-diagonal, non-commuting observables—such as transition one- and two-body reduced density matrices (RDMs)—is equally fundamental, particularly for quantum chemistry and linear response theory.

Remarkably, off-diagonal transition elements can be generated from the same
Gaussian transition kernel by introducing a formal complex one-body source.
For number-preserving Gaussian maps \(\hat G_L\) and \(\hat G_R\), define
\begin{equation}
    \mathcal K_{\Phi,\Psi}(J;\hat G_L,\hat G_R)
    \equiv
    \langle\Phi|
    \hat G_L
    \exp(\hat{\bm c}^\dagger J\hat{\bm c})
    \hat G_R
    |\Psi\rangle,
    ~
    J\in\mathbb C^{M\times M}.
\end{equation}
Here \(J\) is a formal source matrix; the deformed operator is a
number-preserving Gaussian map, unitary only when \(J\) is anti-Hermitian. Since
our mixed-Pfaffian construction applies to number-preserving Gaussian maps, the
kernel is evaluated by replacing the single-particle matrix by
\(G_L e^J G_R\).

Transition RDMs are obtained by differentiating this analytic kernel at
\(J=0\). For example,
\begin{equation}
    \langle\Phi|
    \hat G_L \hat c_p^\dagger \hat c_q \hat G_R
    |\Psi\rangle
    =
    \left.
    \frac{\partial}{\partial J_{pq}}
    \mathcal K_{\Phi,\Psi}(J;\hat G_L,\hat G_R)
    \right|_{J=0}.
\end{equation}
Higher constant-order transition RDM elements are obtained from ordered
products of such one-body sources; see App.~\ref{app:chem_sources}.

\begin{corollary}[Additive-error estimation of transition RDMs]\label{cor:rdm}
    For any passive FLO unitary $\hat U$ and arbitrary block-product APSG states $|\Phi\rangle$ and $|\Psi\rangle$, any constant-order transition RDM element can be estimated to additive error $\epsilon$ in polynomial time by evaluating the local derivatives of the mixed-Pfaffian generating function.
\end{corollary}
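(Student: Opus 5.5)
The plan is to evaluate the derivative of $\mathcal K_{\Phi,\Psi}$ at $J=0$ exactly as a short, fixed linear combination of kernels whose single-particle matrix is \emph{unitary}. Each such kernel is then an instance of the overlap primitive of Theorem~\ref{thm:overlap_apsg}. This sidesteps the obvious difficulty: plugging a finite non-anti-Hermitian $J$ into Theorem~\ref{thm:overlap_apsg_nonunitary} costs $\|e^J\|_{\mathrm{op}}^{4N}$, and finite-difference step sizes would have to shrink with $N$, so that route is not efficient. I would therefore never leave the unitary setting.

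\textbf{Step 1: reduce to one-body operators.} By the CAR, any constant-order transition RDM element is a finite combination of products of one-body operators $\hat c_p^\dagger\hat c_q$. For example,
\begin{equation*}
\hat c_p^\dagger\hat c_r^\dagger\hat c_s\hat c_q=-\hat c_p^\dagger\hat c_q\,\hat c_r^\dagger\hat c_s+\delta_{qr}\hat c_p^\dagger\hat c_s .
\end{equation*}
A $k$-RDM element is thus a sum of $O(1)$ terms, each an ordered product of at most $k$ one-body operators with coefficients $\pm1$. These are exactly the ordered source derivatives referred to in App.~\ref{app:chem_sources}.

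\textbf{Step 2: write each one-body operator as a combination of rotated parities.} Assume $p\neq q$ (for $p=q$ use $\hat n_p$ directly). Write
\begin{equation*}
\hat c_p^\dagger\hat c_q=\tfrac12(\hat h_1+i\hat h_2),\qquad \hat h_1=\hat c_p^\dagger\hat c_q+\hat c_q^\dagger\hat c_p,\quad \hat h_2=-i(\hat c_p^\dagger\hat c_q-\hat c_q^\dagger\hat c_p).
\end{equation*}
A $2\times2$ mode rotation $V$ (to the modes $(e_p\pm e_q)/\sqrt2$, or $(e_p\pm ie_q)/\sqrt2$ for $\hat h_2$) diagonalizes each $\hat h_a$ as $\hat V^\dagger(\hat n_{+}-\hat n_{-})\hat V$. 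Using $\hat n=(1-(-1)^{\hat n})/2$ as in Eq.~\eqref{eq:correlator_expansion}, each $\hat h_a$ becomes a combination, with coefficients of modulus $\le 1/2$, of operators $\hat V^\dagger\hat\Pi_T\hat V$, where $T\subseteq\{+,-\}$. Every such operator is a passive FLO unitary.

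\textbf{Step 3: expand the product and estimate each kernel.} Substituting into a product of $k$ one-body operators and expanding gives
\begin{equation*}
\langle\Phi|\hat U\,\hat c_{p_1}^\dagger\hat c_{q_1}\cdots\hat c_{p_k}^\dagger\hat c_{q_k}|\Psi\rangle=\sum_{\ell=1}^{L}\lambda_\ell\,\langle\Phi|\hat W_\ell|\Psi\rangle .
\end{equation*}
Here $L\le 8^k=O(1)$, each $\hat W_\ell$ is a composition of $\hat U$ with rotated parities and hence a passive FLO unitary, and $\sum_\ell|\lambda_\ell|\le C^k$ for some absolute constant $C$.

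As in Sec.~\ref{sec:num_corr}, I would draw $\ell$ with probability $|\lambda_\ell|/\sum_{\ell'}|\lambda_{\ell'}|$. I would then take one sample of the estimator of Theorem~\ref{thm:overlap_apsg} for $\hat W_\ell$ and multiply it by $\mathrm{phase}(\lambda_\ell)\sum_{\ell'}|\lambda_{\ell'}|$. The resulting variable is unbiased, and its second moment is at most $C^{2k}$ times the unit bound established in App.~\ref{app:apsg_overlap_proofs}. Median-of-means then gives additive error $\epsilon$ with $O(C^{2k}\epsilon^{-2}\log\delta^{-1})$ samples, each costing polynomial time (one Pfaffian of size $2N$). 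Since $k$ is constant, this is polynomial.

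\textbf{Step 4: identify with the derivative of the generating function.} By multilinearity of $\mathcal K$ in the Lie-algebra direction, $\partial_{J_{pq}}\mathcal K|_{J=0}$ equals the matrix element above. The construction therefore evaluates the ``local derivatives of the mixed-Pfaffian generating function'' exactly, as a finite combination of unitary kernels, rather than through a limit.

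I expect the main obstacle to be the second-moment control under this mixing. Per-kernel boundedness holds only in the second-moment sense of App.~\ref{app:apsg_overlap_proofs}, not pointwise. I would check that the bound there depends on the passive FLO only through unitarity, so that it transfers uniformly to every $\hat W_\ell$. I would also verify that the two-body reordering does not introduce non-constant coefficient growth.
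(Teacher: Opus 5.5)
Your proof is correct, but it takes a genuinely different route from the paper's. The paper keeps $J$ as a formal complex source and evaluates the kernel with the generally non-unitary single-particle matrix $G_Le^{J}G_R$. It then reads off RDM elements as derivatives at $J=0$ (App.~\ref{app:chem_sources}). How an additive-error estimate of that derivative is obtained is left implicit. Doing it with complex sources means invoking the $\|G\|_{\mathrm{op}}^{4N}$ bound of Theorem~\ref{thm:overlap_apsg_nonunitary} at $\|J\|=O(1/N)$, which costs a $\mathrm{poly}(N)$ overhead. You never leave the unitary setting. After the identity terms cancel, each $\hat c_p^\dagger\hat c_q$ ($p\neq q$) is a combination of four rotated single-mode parities with coefficient $\ell_1$-norm exactly $1$. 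This is an off-diagonal, linear-combination-of-unitaries version of the parity-string expansion of Sec.~\ref{sec:num_corr}. The second-moment bound of App.~\ref{app:apsg_overlap_proofs} depends on the Gaussian map only through $\|G\|_{\mathrm{op}}$, so it applies verbatim to every $\hat W_\ell$ and the check you flag goes through. The result is a fully rigorous, $N$-independent sample complexity $O(c_k\epsilon^{-2}\log\delta^{-1})$. The paper's formulation buys uniformity: a single analytic generating function for all orders, equally usable for non-unitary AFQMC kernels. Yours buys rigor and a better bound.

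Three small corrections. First, your dismissal of the derivative route as inefficient is too strong. For $p\neq q$ one has $(\hat c_p^\dagger\hat c_q)^2=0$, so $\mathcal K(hE_{pq})$ is exactly affine in $h$ and the difference quotient is unbiased. Taking $h=1/N$ keeps $\|G\|_{\mathrm{op}}^{4N}\le e^{4}$, so that route is polynomial (an extra $O(N^2)$ in samples), merely worse than yours. Second, Step~4 should not appeal to ``multilinearity'' of $\mathcal K$. The clean link is that your decomposition is an exact parameter-shift rule. Since $e^{i\theta\hat n_f}=1+(e^{i\theta}-1)\hat n_f$, the paper's kernel $\chi^{(K)}$ with $\hat O(K)=\hat n_f$ satisfies $\frac{1}{i}\partial_\theta\chi^{(K)}|_{\theta=0}=\tfrac12[\chi^{(K)}(0)-\chi^{(K)}(\pi)]$. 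So you are literally evaluating local derivatives of the generating function, along unitary directions. Third, the sign in your Step~1 identity is flipped; it should read $\hat c_p^\dagger\hat c_r^\dagger\hat c_s\hat c_q=\hat c_p^\dagger\hat c_q\,\hat c_r^\dagger\hat c_s-\delta_{qr}\hat c_p^\dagger\hat c_s$ (test it on $\hat n_1\hat n_2$). This is harmless, because only the moduli of the coefficients enter.
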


\subsection{Marginal probabilities and binned distributions for fermionic outputs}
The same transition-kernel primitive also gives access to coarse-grained output statistics. 
This is distinct from estimating individual output probabilities, which may be exponentially small and are not the natural target of additive-error simulation. 
Instead, many experimentally relevant measurements ask for marginal probabilities or binned event weights, obtained by summing over exponentially many microscopic Fock outcomes. 
We show that such quantities reduce to diagonal passive-FLO overlap kernels and therefore fall within the mixed-Pfaffian additive-error framework.

Fix a subset $S \subseteq [M]$ and a target bit pattern $a \in \{0,1\}^{|S|}$. 
The marginal probability $P_S(a) \equiv \sum_{x: x_S = a} p_{\Psi,U}(x)$ expands as a sum of parity strings:
\begin{equation}
    P_S(a) = \mathbb{E}_{T \sim \mathrm{Unif}(\mathcal{P}(S))} \left[ (-1)^{\sum_{i \in T} a_i} \mu_T \right],
\end{equation}
where $\mu_T \equiv \langle \Psi | \hat U^\dagger \hat \Pi_T \hat U | \Psi \rangle$.
Thus, $P_S(a)$ admits additive-error estimation by the same Monte Carlo strategy and concentration bounds derived in Theorem~\ref{thm:cor}, with sample complexity independent of the number of unobserved modes.

More broadly, it is often useful to coarse-grain the exponentially large outcome space by grouping outputs according to a macroscopic linear statistic. This viewpoint is particularly transparent through \emph{characteristic functions}: in molecular vibronic-spectra simulation, discrete Fourier techniques map binned distributions to diagonal-phase overlap kernels that can be estimated efficiently~\cite{oh2024quantum}.
Motivated by the same principle, we fix an integer weight vector $\omega \in \mathbb{Z}_{\ge 0}^M$ and define the binned probability distribution $G(\Omega) \equiv \sum_{x: \omega \cdot x = \Omega} p_{\Psi,U}(x)$, where $\Omega \in \{0, 1, \dots, \Omega_{\max}\}$ and $\Omega_{\max} \equiv \sum_{i=1}^M \omega_i$.
To circumvent the sum over an exponentially large pre-image, we map the distribution to frequency space via its discrete Fourier transform (DFT), $\tilde{G}(k) \equiv \sum_{\Omega=0}^{\Omega_{\max}} G(\Omega) e^{i k \theta \Omega}$, where $\theta \equiv 2\pi / (\Omega_{\max} + 1)$.
This characteristic function can be analytically recast as the expectation of a diagonal unitary:
\begin{align}
    \tilde{G}(k) 
    &= \sum_{x \in \{0,1\}^M} p_{\Psi, U}(x)  e^{i k \theta \omega \cdot x} \\
    &= \langle\Psi| \hat{U}^\dagger e^{i k \theta \sum_{i=1}^M \omega_i \hat{n}_i} \hat{U} |\Psi\rangle.
\end{align}
The operator in the middle simply applies the phase \(e^{i k\theta\omega_i}\) to mode \(i\), and is therefore a diagonal passive-FLO unitary with single-particle matrix $D(k)=\mathrm{diag}(e^{i k\theta\omega_1},\dots,e^{i k\theta\omega_M}).$
Consequently, the Fourier coefficient $\tilde{G}(k) = \langle\Psi|\hat{U}^\dagger \hat{D}(k) \hat{U}|\Psi\rangle$ is identically a passive-FLO overlap computable within the same mixed-Pfaffian primitive discussed above.
The exact macroscopic distribution $G(\Omega)$ is reconstructed via the inverse DFT: $G(\Omega) = \frac{1}{\Omega_{\max}+1} \sum_{k=0}^{\Omega_{\max}} e^{-i k \theta \Omega} \tilde{G}(k)$.
When $\Omega_{\max}$ is polynomially bounded, the full binned distribution can be reconstructed with polynomial overhead. Estimating each Fourier coefficient to additive precision $\epsilon$ gives each reconstructed bin additive error at most $\epsilon$ by the inverse-DFT triangle bound, with only the standard logarithmic overhead for simultaneous control over all bins~\cite{oh2024quantum}.

This characteristic-function viewpoint parallels earlier DFT-based approaches to coarse-grained output statistics, including Gaussian/bosonic sampling settings and molecular vibronic spectra~\cite{seron2024efficient,lim2025efficient,oh2024quantum}. 
Our result gives the fermionic counterpart: for paired inputs with definite pair numbers under passive FLO, macroscopic binned statistics reduce to diagonal passive-FLO overlaps and hence to the mixed-Pfaffian additive-error primitive.

Beyond physical diagnostics, these efficient coarse-graining techniques have direct implications for generative quantum machine learning, particularly for fermionic Born machines~\cite{bako2025fermionic}. 
Training these generative models typically requires estimating distance metrics such as the maximum mean discrepancy~(MMD)~\cite{rudolph2024trainability}, which relies heavily on evaluating multi-point marginals and overlap kernels. 
By rendering these expectation values classically tractable, our framework provides a scalable method for evaluating the MMD loss, enabling the robust classical pre-training of high-dimensional paired-fermionic models prior to quantum deployment~\cite{recio2025train}.

\section{Classical simulation of trapped-ion fermionic dynamics}\label{sec:phasecraft}
The operational power of the mixed-Pfaffian framework becomes most transparent when applied to recent experimental milestones that probe fermionic dynamics beyond the reach of exact state-vector simulation. 
A prominent example is the large-scale trapped-ion experiment by Alam et al.~\cite{alam2025fermionic}, in which a highly structured, non-Gaussian initial state is evolved and interrogated using both local diagnostics and macroscopic Wilson-loop observables. 
Because such experiments inject extensive non-Gaussian ``magic'' into otherwise controllable fermionic dynamics, they are naturally viewed as candidates for regimes beyond classical simulation.

Here, we show that a substantial part of this benchmarking regime is nevertheless classically accessible in additive error. 
In particular, the noninteracting benchmark, including the high-weight doublon Wilson-loop sector, can be estimated in additive error by our mixed-Pfaffian estimators, while interactions introduce a distinct non-unitary variance barrier.

Consider the spinful single-band Fermi--Hubbard model studied in Ref.~\cite{alam2025fermionic},
\begin{equation}
    \hat{H}=\hat{H}_0+\hat{H}_W,
\end{equation}
where
\begin{align}
    \hat{H}_0\equiv -J\sum_{\langle i,j\rangle,\sigma}(e^{i\phi_{ij}}\hat c_{i\sigma}^\dagger \hat c_{j\sigma}+\mathrm{h.c.}),~
    \hat{H}_W\equiv W\sum_i \hat n_{i\uparrow}\hat n_{i\downarrow}.
\end{align}
Here, $\langle i,j\rangle$ ranges over all nearest-neighbor links and $\sigma\in \{\uparrow, \downarrow\}$ represents spins.
The real-time state is $\ket{\Psi(t)}=e^{-i\hat H t}\ket{\Psi_0}$. We identify a fermionic mode with a spin-orbital $(i,\sigma)$, so that $M=2L$ for $L$ lattice sites.

The experiment reports several macroscopic diagnostics, including local charge densities, connected spin correlators, and doublon/triplet populations. Concretely, we use
\begin{align}
    n_{i\sigma}(t) &\equiv \langle \hat n_{i\sigma}(t)\rangle,\\
    C_{zz}(i,j;t) &\equiv 4\Big(\langle \hat S_i^z(t)\hat S_j^z(t)\rangle-\langle \hat S_i^z(t)\rangle\langle \hat S_j^z(t)\rangle\Big),\\
    N_d(t) &\equiv \sum_{i}\langle \hat n_{i\uparrow}(t)\hat n_{i\downarrow}(t)\rangle,\\
    n_{\mathrm{triplets}}(t) &\equiv \frac{2}{L}\sum_{\langle i,j\rangle} C_{zz}(i,j;t),
\end{align}
where $\hat S_i^z\equiv \tfrac12(\hat n_{i\uparrow}-\hat n_{i\downarrow})$ (see App.~\ref{app:phasecraft_model} for the experimental normalization conventions and additional diagnostics).

\begin{figure*}[t]
\includegraphics[width=480px]{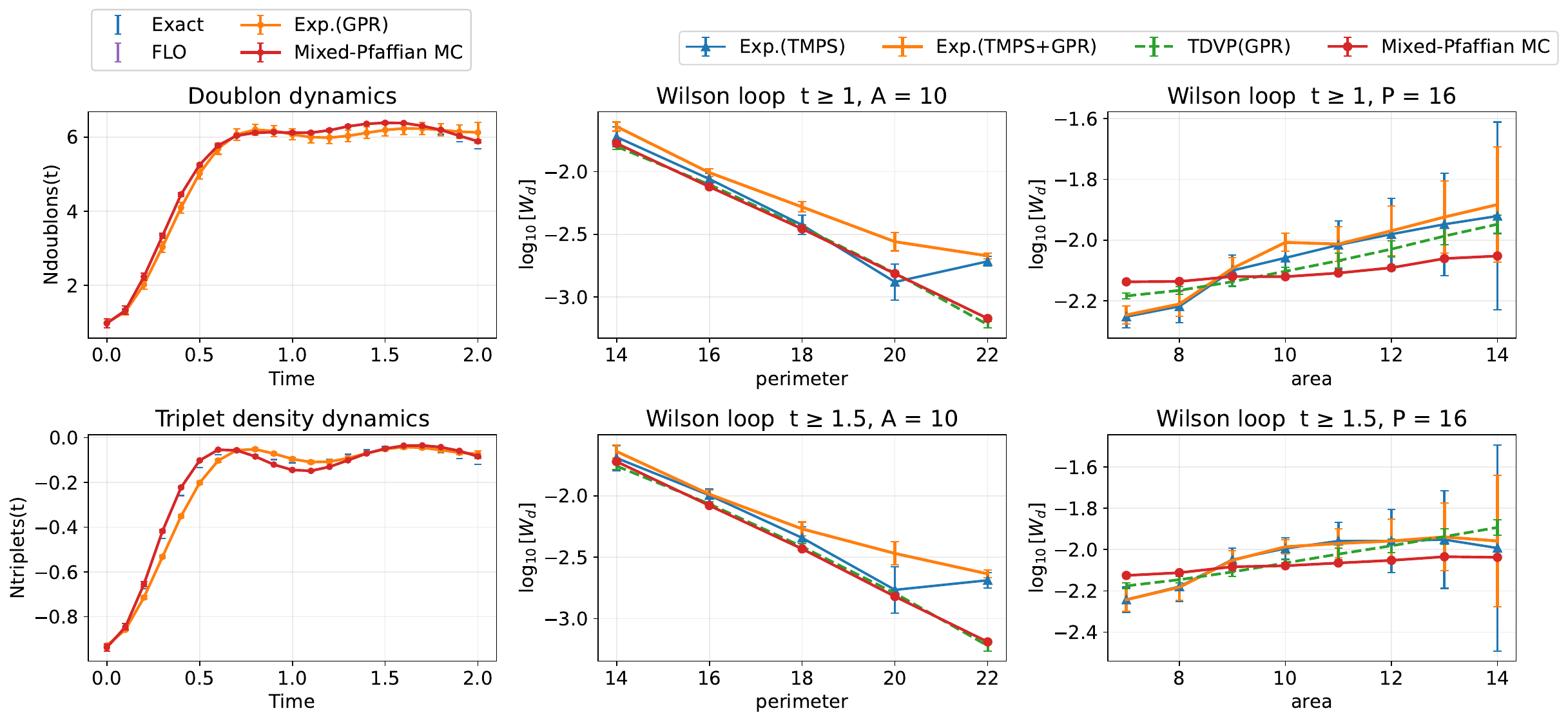}
\caption{\textbf{Noninteracting benchmark: doublon dynamics, triplet density, and Wilson loops.}
Left column: low-weight observables, namely the doublon number $N_d(t)$ and the nearest-neighbor triplet diagnostic $n_{\mathrm{triplets}}(t)$, for which our Monte Carlo results are benchmarked against exact, FLO, and experimentally reported references.
Right two columns: high-weight doublon Wilson-loop observables in the two standard cuts used in Ref.~\cite{alam2025fermionic}, shown as functions of perimeter at fixed area and as functions of area at fixed perimeter, with the two time windows indicated in each panel.
The Wilson-loop data are obtained from the specialized charge-phase estimator described in Sec.~\ref{sec:phasecraft} and App.~\ref{app:phasecraft_model}.
Error bars on our Monte Carlo points are obtained from seed-to-seed fluctuations across 10 independent runs, with $10^3$ samples per run.
Method labels for the experimental and tensor-network references follow Ref.~\cite{alam2025fermionic}.
For the low-weight panels, ``Exact'' denotes the exact untrotterized free-fermion evolution, ``FLO'' denotes the corresponding Trotterized FLO reference, and ``Exp.(GPR)'' denotes the experimentally measured data after the mitigation and Gaussian-process-regression smoothing procedure of Ref.~\cite{alam2025fermionic}.
For the Wilson-loop panels, ``Exp.(TMPS)'' denotes the experimentally inferred device signal corrected by training with short-time MPS/TDVP data, ``Exp.(TMPS+GPR)'' denotes Gaussian-process-regression smoothing applied to the TMPS-corrected experimental data, and ``TDVP+GPR'' denotes the tensor-network reference obtained from TDVP/MPS simulations and smoothed for comparison. 
``Mixed-Pfaffian MC'' denotes our additive-error estimator.}
\label{fig:noninteracting}
\end{figure*}

The experiment initializes the system in a deterministic product of nearest-neighbor spin-triplet dimers~(equivalent to $|\Psi_4\rangle$ state),
\begin{align}
    |T_0\rangle_{j,k}=\frac{1}{\sqrt{2}}\left(\hat c_{j\uparrow}^\dagger \hat c_{k\downarrow}^\dagger+\hat c_{j\downarrow}^\dagger \hat c_{k\uparrow}^\dagger\right)\ket{0},
\end{align}
possibly with additional localized holons and doublons. 
Consequently, the overall initial state falls directly into the scope of our polynomial-time overlap estimators (Theorem~\ref{thm:overlap_apsg}).

This observation makes the noninteracting benchmark, $W=0$, especially diagnostic.
In this regime, since the time evolution is passive FLO, the only apparent source of hardness is the extensive paired magic in the initial state and the complexity of the measured observable.
Low-weight quantities such as $N_d(t)$ and the nearest-neighbor correlators entering $n_{\mathrm{triplets}}(t)$ can already be checked by standard Gaussian/Majorana propagation~\cite{miller2025simulation,d2025majorana}.
They therefore serve as a calibration layer for our implementation.

The more stringent test is the high-weight diagonal string sector emphasized in Ref.~\cite{alam2025fermionic}.
These Wilson-line/loop observables are products of local holon or doublon projectors along extended contours, and are not naturally handled by low-weight Majorana propagation.
Here, we focus on the doublon Wilson loop $\hat W_C=\prod_{i\in C}(1-\hat n_{i\uparrow}\hat n_{i\downarrow})$.
Although this observable is diagonal in the occupation basis, it is not simply a multipoint number correlator of the form \(\prod_i \hat n_i\).
We therefore use a specialized two-phase charge decomposition of the local projector \(1-\hat n_{i\uparrow}\hat n_{i\downarrow}\), rather than the general number-correlator estimator of Sec.~\ref{sec:num_corr}.
This decomposition converts each Wilson-loop sample into a diagonal passive-FLO overlap, which is then evaluated by the same mixed-Pfaffian primitive as in Theorem~\ref{thm:overlap_apsg}.

More precisely, the resulting one-shot random variable is bounded by $(2/\sqrt{3})^{|C|}$, which implies the contour-dependent additive-error sample complexity
\begin{equation}
    K = O\!\left(
    \left(\frac{4}{3}\right)^{|C|}
    \epsilon^{-2}\log\delta^{-1}
    \right).
\end{equation}
This is the estimator used for the Wilson-loop panels in Fig.~\ref{fig:noninteracting}. The detailed construction is given in App.~\ref{app:phasecraft_model}.


Figure~\ref{fig:noninteracting} summarizes the two levels of our noninteracting benchmark. The left column shows low-weight observables, for which the mixed-Pfaffian Monte Carlo agrees with the available exact, FLO, and experimentally reported references. The right two columns address the high-weight doublon Wilson-loop sector, which was previously treated as a regime where no reliable exact classical reference was available. In this sector, our estimator provides an independent additive-error classical benchmark. The resulting Wilson-loop values are statistically well converged: the seed-to-seed Monte Carlo fluctuations are visibly smaller than the separation between the experimental error-mitigated curves and the tensor-network-based predictions.

Within the noninteracting benchmark, the mixed-Pfaffian results are broadly consistent with the tensor-network trend, especially in the fixed-area perimeter scans, rather than with the larger values inferred from the error-mitigated experimental/GPR curves. We therefore do not interpret the Wilson-loop discrepancy in the $W=0$ data as evidence of a failure of classical tensor-network methods. Instead, our results show that this high-weight Wilson sector remains classically benchmarkable in the noninteracting regime, and that the observed differences are not explained by Monte Carlo sampling noise in our estimator. This strengthens the diagnostic role of the $W=0$ case: it can now be used not only for low-weight observables, but also as a stringent benchmark for the high-weight Wilson-loop post-processing pipeline.

In our numerical benchmark, the largest Wilson contours considered in Fig.~\ref{fig:noninteracting} have perimeter $|C|=22$, and are already well converged with $K=10^3$ samples. Taking this as an empirical calibration point, the expected sample count for a larger contour of perimeter $|C|$ scales as $K(|C|) \approx 10^3\left(\frac{4}{3}\right)^{|C|-22}$.
Since $(4/3)^8\simeq 10$, every additional eight sites along the contour costs roughly one extra order of magnitude in samples. Thus, at comparable precision, contours with perimeters $|C|\simeq 54$, $62$, and $70$ would require approximately $10^7$, $10^8$, and $10^9$ samples, respectively. 
This indicates that the experimentally relevant Wilson contours studied here remain comfortably accessible to the specialized classical estimator, while substantially larger contours would eventually become practically demanding due to the exponential $(4/3)^{|C|}$ scaling.

While the noninteracting regime admits this efficient additive-error estimation, extending this simulation to interacting dynamics ($W>0$) via Trotterization and real-time HS transformations inherently introduces non-unitary propagators. As mathematically formalized in Sec.~\ref{sec:nonunitary}, this loss of unitarity triggers an exponential variance barrier in the Monte Carlo sampling, obstructing scalable classical simulation at long times. We provide a comprehensive analysis of this interacting regime, including the precise sample-complexity bounds and the rigorous worst-case envelope for the real-time variance barrier, in App.~\ref{app:interacting_dynamics}.

\section{Implications for quantum chemistry}
\label{sec:chem}

\subsection{Paired-electron references as a chemically motivated scaffold}

Entangled electron-pair (geminal) wave functions provide a foundational, physically intuitive route to capturing strong many-body entanglement—often referred to as \emph{static} correlation in chemistry. This is particularly crucial in near-degeneracy regimes such as molecular bond breaking, where single-determinant approximations fundamentally fail~\cite{tecmer2022geminalreview}. Within this family, the APSG class serves as a paradigmatic, size-extensive scaffold. Deeply connected to broader models like generalized valence bond (GVB) and seniority-zero architectures, it represents a universally recognized structural backbone for simulating strongly correlated systems~\cite{surjan2012sog, tecmer2022geminalreview}. Recently, these paired-electron ans\"atze have also re-emerged as highly efficient, hardware-aware building blocks for near-term quantum processors (e.g., oo-upCCD circuits)~\cite{zhao2023oo_upccd,zhao2024pt2_oo_upccd}.

Our results establish a crucial structural baseline for quantum advantage claims hinging on this ubiquitous scaffold. 
Because the algebraic reduction to mixed-Pfaffian coefficients is exact for all block-product APSG states, the exponentially large multiparticle-interference expansion can be reorganized into a compact coefficient-estimation problem. 
Rather than resolving the worst-case \#P-hardness of exact evaluation, our framework bypasses this barrier at the operational level by targeting the additive-error overlap and transition primitives relevant to finite-shot quantum workflows. 
Combined with the sampling bounds of our framework, these primitives can be classically approximated to additive error throughout the block-product APSG scaffold.
Consequently, quantum hardware should not be tasked merely with representing paired-electron correlations; rather, quantum resources should be focused on resolving specific ``beyond-pair'' correlations that break this classically benchmarkable scaffold.

\subsection{Redefining chemistry workflows as Gaussian transition primitives}
\label{sec:chem:primitives}
Electronic-structure algorithms---whether classical or quantum---typically avoid explicit simulation of the exponentially large Fock space by repeatedly invoking a small set of algebraic primitives.
Because the standard molecular Hamiltonian comprises at most two-body interactions (quartic in fermionic operators), the transition energy $H_{\Phi,\Psi} \equiv \langle \Phi|\hat H|\Psi\rangle$ between distinct many-body states is completely determined by the unnormalized transition one- and two-body RDMs~\cite{helgaker2013molecular, mazziotti2012two}:
\begin{align}
    \gamma_{pq}^{\Phi,\Psi} &\equiv \langle \Phi|\hat c_p^\dagger \hat c_q|\Psi\rangle,~~~\Gamma_{pq,rs}^{\Phi,\Psi} &\equiv \langle \Phi|\hat c_p^\dagger \hat c_q^\dagger \hat c_s \hat c_r|\Psi\rangle.
    \label{eq:chem:rdm_energy}
\end{align}
Furthermore, the quartic Coulomb interaction can often be reorganized using
low-rank or Cholesky factorizations into a sum of squares of one-body
operators~\cite{motta2019efficient}. In this form,
\begin{equation}
    \hat H
    =
    E_0+\hat H_1+\frac12\sum_{\ell}\lambda_\ell \hat O_\ell^2,
    \label{eq:chem:sum_of_squares}
\end{equation}
where $E_0$ is a scalar energy shift, $\hat H_1=\sum_{pq}h_{pq}\hat c_p^\dagger\hat c_q$ is a one-body operator, and each $\hat O_\ell=\sum_{pq}(L_\ell)_{pq}\hat c_p^\dagger\hat c_q$ is a one-body operator arising from the low-rank factorization of the two-electron integrals.
Thus, evaluating molecular Hamiltonian matrix elements reduces to computing transition overlaps and transition moments generated by one-body operators, such as $\langle \Phi|\hat O|\Psi\rangle$ and
$\langle \Phi|\hat O_\ell^2|\Psi\rangle$, or more generally their Gaussian-source counterparts $\langle \Phi|\hat U_L^\dagger \hat O \hat U_R|\Psi\rangle$.

While evaluating these off-diagonal Gaussian kernels is classically trivial for single-determinant inputs, injecting correlated references is widely presumed to invoke exponential computational hardness. However, by restricting the reference to the strongly orthogonal APSG scaffold, this computational bottleneck is bypassed at the level of the Gaussian transition-kernel primitive. Because our simulation primitive evaluates kernels under an arbitrary passive FLO map $\hat U$, any single-particle basis rotation used to parameterize the APSG state can be mathematically absorbed into $\hat U$. By invoking Corollary~\ref{cor:rdm}, the expectation values of these squared operators---and hence the full quartic Hamiltonian matrix element---directly reduce to derivatives of the Gaussian transition kernel, computable via our framework.

\begin{remark}[Chemistry primitives as Gaussian transition kernels]\label{rem:chem_apsg_kernel}
    In the chemistry applications below, the required quantum subroutines are not arbitrary many-body amplitudes. Orbital rotations, NOCI couplings, AFQMC walkers, and one-body source insertions all reduce to number-preserving Gaussian transition kernels $\langle\Phi|\hat G|\Psi\rangle$ or to derivatives of such kernels.
    Thus, the mixed-Pfaffian primitive applies directly to the overlap, transition-RDM, and local-energy building blocks. The unitary cases inherit the bounded-variance additive-error guarantee, whereas non-unitary walkers retain the exact reduction but acquire the operator-norm variance barrier of Sec.~\ref{sec:nonunitary}.
\end{remark}

\subsection{Classical boundaries of paired-reference workflows}
\label{sec:chem:benchmarkability}

\textbf{Expanding the classically tractable optimization set for QPE initial states.}
The efficiency of quantum phase estimation~(QPE) is fundamentally bottlenecked by the initial overlap $p_0=|\langle \Psi_{\mathrm{init}}|\Psi_0\rangle|^2$ between the prepared state $|\Psi_\mathrm{init}\rangle$ and the target ground state $|\Psi_0\rangle$. To alleviate the exponential decay of $p_0$ with system size, recent proposals advocate variationally optimizing the single-particle basis to minimize the reference energy prior to QPE~\cite{ollitrault2024enhancing}. However, to maintain classical tractability, such optimizations have been restricted to single Slater determinant references. As explicitly noted in Ref.~\cite{ollitrault2024enhancing}, expanding this variational basis optimization to multi-determinant ensembles---which is crucial for capturing strong correlation and increasing $p_0$---is generally presumed to be classically intractable, as the evaluation cost grows exponentially with the number of determinants.

Our mixed-Pfaffian framework analytically circumvents this computational barrier, substantially expanding the classically tractable optimization manifold. By upgrading the initial reference to a paired APSG state, we incorporate a compact but strongly correlated multi-determinant ensemble while preserving the rigid one-pair-per-block scaffold. The variational energy objective becomes
\begin{equation}
    E_{\mathrm{APSG}}(\kappa)=\langle \Psi_{\mathrm{APSG}}| \hat U^\dagger(\kappa) \hat H \hat U(\kappa) |\Psi_{\mathrm{APSG}}\rangle,
\label{eq:chem:qpe_energy_apsg}
\end{equation}
where $\hat U(\kappa) = \exp(\hat\kappa)$ is an orbital-rotation unitary generated by the anti-Hermitian one-body operator $\hat\kappa = \sum_{p>q} \kappa_{pq} (\hat c_p^\dagger \hat c_q - \hat c_q^\dagger \hat c_p)$ with real parameters $\kappa_{pq}$. 
To minimize this objective and obtain the optimal basis, the algorithm requires evaluating the energy and its orbital gradients. The local gradient with respect to the matrix element $\kappa_{pq}$ at $\kappa=0$ is exactly given by:
\begin{align}
    \left. \frac{\partial E_{\mathrm{APSG}}(\kappa)}{\partial \kappa_{pq}} \right|_{\kappa=0}
    &=
    \langle \Psi_{\mathrm{APSG}} |
    [\hat H,\hat c_p^\dagger \hat c_q - \hat c_q^\dagger \hat c_p]
    | \Psi_{\mathrm{APSG}} \rangle.
    \label{eq:chem:qpe_grad}
\end{align}
Because this commutator is again a polynomial in creation and annihilation operators of degree at most four, Eq.~\eqref{eq:chem:qpe_grad} reduces to a linear combination of transition 1-RDM and 2-RDM elements.
By Corollary~\ref{cor:rdm}, these RDM elements can be efficiently approximated to additive error for arbitrary block-product APSG references.
Accordingly, the energy and orbital-gradient primitives entering APSG pre-optimization are classically accessible within our framework throughout the block-product APSG manifold.

\medskip

\textbf{Dequantizing the overlap primitive in imaginary-time projection~(AFQMC).}
Auxiliary-field quantum Monte Carlo~(AFQMC) evaluates ground-state properties by projecting the system in imaginary time. This is achieved by decoupling the two-body interactions via HS transformations, which generate an ensemble of stochastic, non-interacting imaginary-time ``walker'' states~(see Ref.~\cite{huggins2022unbiasing} for the details),
\begin{equation}
|\Phi(\sigma)\rangle \equiv \hat B(\sigma_N)\cdots \hat B(\sigma_1) |\Phi_0\rangle,
\label{eq:chem:afqmc_walker}
\end{equation}
where each $\hat B(\sigma_t)$ is a generally non-unitary number-preserving Gaussian operator determined by the auxiliary field configuration $\sigma_t$. 
Because this HS transformation inevitably introduces a severe sign (or phase) problem, practical AFQMC imposes constraints guided by a trial state $|\Psi_T\rangle$. This requires evaluating the overlap weight $W(\sigma)\equiv \langle \Psi_T|\Phi(\sigma)\rangle$ and the local energy $E_L(\sigma)$:
\begin{equation}
E_{\mathrm{mixed}}
\approx
\frac{\mathbb{E}_{\sigma} \left[ W(\sigma) E_L(\sigma) \right]}{\mathbb{E}_{\sigma} \left[ W(\sigma) \right]},
\label{eq:chem:afqmc_mixed}
\end{equation}
where $E_L(\sigma)\equiv \langle \Psi_T|\hat H|\Phi(\sigma)\rangle / \langle \Psi_T|\Phi(\sigma)\rangle$.

For simple Slater-determinant trial states, these overlap and local-energy primitives are classically inexpensive. 
The motivation for quantum-assisted AFQMC arises when one seeks to use a more strongly correlated, non-Gaussian trial state $|\Psi_T\rangle$, for which the corresponding overlap and transition-RDM evaluations are widely presumed to be classically difficult~\cite{huggins2022unbiasing}. 
A quantum processor can, in principle, be used to estimate this inner-loop primitive. 
However, the outer stochastic average over auxiliary-field trajectories $\sigma$ remains a classical AFQMC sampling problem, and the sign/phase-problem overhead associated with the non-unitary HS propagation is not removed merely by evaluating the trial overlap on a quantum device. 
Thus, in this setting, the operational role of the quantum subroutine is specifically to supply the trial-state overlap and local-energy kernels.

Our mixed-Pfaffian framework provides a classical additive-error route to this inner-loop primitive for APSG trial states.
Taking $|\Psi_T\rangle=|\Psi_{\mathrm{APSG}}\rangle$ provides a correlated multi-determinant paired reference, but the walker $|\Phi(\sigma)\rangle$ is still generated by a number-preserving Gaussian map $\hat G(\sigma)$. 
Therefore, the overlap $W(\sigma)=\langle \Psi_{\mathrm{APSG}}|\hat G(\sigma)|\Phi_0\rangle$ reduces to the non-unitary APSG transition primitive of Theorem~\ref{thm:overlap_apsg_nonunitary}, and the local energy $E_L(\sigma)$ is obtained from the corresponding transition RDMs via Corollary~\ref{cor:rdm}. 
The non-unitary propagation contributes the same operator-norm variance envelope identified in Sec.~\ref{sec:nonunitary}, but the overlap/RDM evaluation itself is classically accessible by mixed-Pfaffian coefficient extraction. 
Consequently, within the APSG scaffold, the AFQMC overlap and local-energy inner loops can be benchmarked classically using our method. Quantum advantage in this setting would therefore need to rely on trial states or dynamics that go beyond the paired APSG geometry.

\medskip

\textbf{Subspace expansions and non-orthogonal configuration interaction (NOCI).}
Beyond diagonal properties, evaluating off-diagonal transition elements between macroscopically distinct, non-orthogonal many-body states is a central challenge in algorithms such as subspace expansions and NOCI~\cite{mcclean2017hybrid}. These algorithms require evaluating off-diagonal overlaps and Hamiltonian couplings:
\begin{equation}
    S_{LR}=\langle \Psi_L|\Psi_R\rangle,\qquad
    H_{LR}=\langle \Psi_L|\hat H|\Psi_R\rangle.
\label{eq:chem:offdiag_SH}
\end{equation}
Because these matrix elements couple two independently parameterized manifolds, they cannot in general be recast as a simple expectation value. 
This apparent structural complexity is the core justification for quantum eigensolvers targeting non-orthogonal quantum eigensolvers~(NOQE)~\cite{baek2023say}, where the claimed quantum advantage explicitly relies on the presumed classical hardness of evaluating $S_{LR}$ and $H_{LR}$.

Remarkably, directly leveraging the transition correlators formulated in Sec.~\ref{sec:num_corr}, this intuition breaks down for paired-electron reference states. 
Suppose the non-orthogonal basis states are independently optimized paired references, $|\Psi_L\rangle = \hat U_L |\Psi_0\rangle$ and $|\Psi_R\rangle = \hat U_R |\Psi_0\rangle$, generated by applying distinct basis-rotation passive FLOs to the same underlying definite-pair base state. 
The off-diagonal overlap collapses into a single transition element:
\begin{equation}
    S_{LR} = \langle \Psi_0| \hat U_L^\dagger \hat U_R |\Psi_0\rangle.
\end{equation}
Because the product $\hat U_{\mathrm{eff}} \equiv \hat U_L^\dagger \hat U_R$ is simply another number-preserving Gaussian unitary, this ostensibly problematic off-diagonal overlap falls exactly into the purview of Theorem~\ref{thm:overlap_apsg}, admitting an efficient additive-error randomized evaluation within the paired-input tractable regime developed above.
Simultaneously, via Corollary~\ref{cor:rdm}, the Hamiltonian couplings $H_{LR} = \langle \Psi_0| \hat U_L^\dagger \hat H \hat U_R |\Psi_0\rangle$ are extracted with no additional algorithmic overhead.

\medskip
\textbf{Redrawing the boundary of quantum hardness.}
This broad classical tractability necessitates a sharper delineation of the true boundary of quantum advantage. While matchgate circuits augmented with fermionic magic inputs are computationally hard in the worst case~\cite{leimkuhler2025exponential}, we constructively carve out a broad, chemically relevant tractable region within this landscape. 
Because our exact mixed-Pfaffian extraction systematically bypasses the combinatorial explosion of the paired-electron scaffold, any genuine exponential quantum advantage must originate from algorithmic ingredients that rigorously break this structural geometry. 
Ultimately, this dictates that future quantum resources should focus on highly entangled regimes definitively beyond the APSG and passive-FLO manifold---such as unstructured multi-particle entanglement, active non-Gaussian unitary layers, or particle-number-violating Bogoliubov transformations.

\section{Conclusion}
In this work, we have identified a substantial and physically ubiquitous intermediate regime in the complexity landscape of FLO. We demonstrated that for a broad family of structured non-Gaussian states—specifically those constructed as tensor products of pair-creation operators—fundamental algorithmic primitives remain classically tractable to additive error under passive FLO evolution.

By geometrically mapping the exterior-algebra expansion of multiparticle amplitudes into a mixed-Pfaffian coefficient extraction problem, we developed a bounded Monte Carlo estimator driven by discrete root-of-unity Fourier filtering. This approach not only refines the traditional ``Gaussian-easy versus magic-hard" complexity dichotomy, but it translates a theoretical representation into a highly practical algorithmic toolkit. 
Our framework enables scalable verification of near-term fermionic quantum simulators using macroscopic observables and establishes a rigorous additive-error classical benchmark for the block-product APSG paired scaffold.

Our algebraic reduction also reveals a sharp structural boundary for this exact geometrical compression. The efficiency of our framework relies on the mathematical geometry of $2$-forms (pair-creation operators), which compress into computable matrix Pfaffians. Extending this initial resource to generalized $k$-particle clustering ($k \ge 3$, such as forms representing quartet correlations) fundamentally breaks this specific geometry. Evaluating the exact multiparticle interference for such higher-order forms maps mathematically to computing hyper-Pfaffians (or equivalently, counting perfect matchings in $k$-uniform hypergraphs), which is \#P-hard (VNP-complete)~\cite{valiant1979complexity, ikenmeyer2019hyperpfaffians}. Thus, the exact Pfaffian-based compression developed here is specific to the paired-electron scaffold, even though other approximation strategies may still be possible in more restricted higher-cluster regimes.

This structural boundary naturally highlights several compelling directions for future research. A primary open challenge is extending these mixed-Pfaffian extraction techniques to active FLO (Bogoliubov transformations) that violate particle-number conservation. Because Bogoliubov dynamics entangle different particle-number sectors, such an extension will require novel algebraic adjustments to the discrete phase-averaging filtration that currently underpins our lossless compression. 
Furthermore, identifying whether analogous algebraic geometries or generalized randomized estimators can be constructed for broader classes of weakly coupled clusters---or whether such structures mark a regime where genuinely quantum resources become necessary---remains a central open question in defining the ultimate reach of classical many-body simulation.

\begin{acknowledgements}
We would like to thank the Phasecraft team for discussion about their experiment.
CO was supported by the National Research Foundation of Korea Grants (No. RS-2024-00431768 and No. RS-2025-00515456) funded by the Korean government (Ministry of Science and ICT (MSIT)) and the Institute of Information \& Communications Technology Planning \& Evaluation (IITP) Grants funded by the Korean government (MSIT) (No. RS-2024-00437284, No. IITP-2025-RS-2025-02283189 and No. IITP-2025-RS-2025-02263264). This work was supported by Global Partnership Program of Leading Universities in Quantum Science and Technology (RS-2025-08542968) through the National Research Foundation of Korea~(NRF) funded by the Korean government (Ministry of Science and ICT(MSIT)). MO and ORS acknowledge the support from National Science Center, Poland within the QuantERA III Programme (No 2023/05/Y/ST2/00140 acronym Tuquan). This work was also supported in part by the Horizon Europe project Quantum Excellence Centre for Quantum-Enhanced Applications (QEC4QEA).
    The C4QEC project is carried out within the IRAP of the Foundation for Polish Science co-financed by the European Union.
ZZ acknowledges support from the Finnish Quantum Flagship.
\end{acknowledgements}

\bibliography{reference.bib}

@article{oszmaniec2017universal,
  title={Universal extensions of restricted classes of quantum operations},
  author={Oszmaniec, Micha{\l} and Zimbor{\'a}s, Zolt{\'a}n},
  journal={arXiv preprint arXiv:1705.11188},
  year={2017}
}

@book{berezin1966method,
  title={The Method of Second Quantization},
  author={Berezin, Felix A.},
  year={1966},
  publisher={Academic Press}
}

@article{bravyi2005lagrangian,
  title={Lagrangian representation for fermionic linear optics},
  author={Bravyi, Sergey},
  journal={Quantum Information \& Computation},
  volume={5},
  number={3},
  pages={216--238},
  year={2005}
}

@article{hebenstreit2020computational,
  title={Computational power of matchgates with supplementary resources},
  author={Hebenstreit, Martin and Jozsa, Richard and Kraus, Barbara and Strelchuk, Sergii},
  journal={Physical Review A},
  volume={102},
  number={5},
  pages={052604},
  year={2020},
  publisher={APS}
}

@article{ikenmeyer2019hyperpfaffians,
  title={Hyperpfaffians and Geometric Complexity Theory},
  author={Ikenmeyer, Christian and Walter, Michael},
  journal={arXiv preprint arXiv:1912.09389},
  year={2019}
}

@article{baek2023say,
  title={Say no to optimization: A nonorthogonal quantum eigensolver},
  author={Baek, Unpil and Hait, Diptarka and Shee, James and Leimkuhler, Oskar and Huggins, William J and Stetina, Torin F and Head-Gordon, Martin and Whaley, K Birgitta},
  journal={PRX Quantum},
  volume={4},
  number={3},
  pages={030307},
  year={2023},
  publisher={APS}
}

@article{sundstrom2014non,
  title={Non-orthogonal configuration interaction for the calculation of multielectron excited states},
  author={Sundstrom, Eric J and Head-Gordon, Martin},
  journal={The Journal of chemical physics},
  volume={140},
  number={11},
  year={2014},
  publisher={AIP Publishing}
}

@article{zhao2024pt2_oo_upccd,
  title={Enhancing the electron pair approximation with measurements on trapped-ion quantum computers},
  author={Zhao, Luning and Wang, Qingfeng and Goings, Joshua J and Shin, Kyujin and Kyoung, Woomin and Noh, Seunghyo and Rhee, Young Min and Kim, Kyungmin},
  journal={npj Quantum Information},
  volume={10},
  number={1},
  pages={76},
  year={2024},
  publisher={Nature Publishing Group UK London}
}

@article{zhao2023oo_upccd,
  title={Orbital-optimized pair-correlated electron simulations on trapped-ion quantum computers},
  author={Zhao, Luning and Goings, Joshua and Shin, Kyujin and Kyoung, Woomin and Fuks, Johanna I and Kevin Rhee, June-Koo and Rhee, Young Min and Wright, Kenneth and Nguyen, Jason and Kim, Jungsang and others},
  journal={npj Quantum Information},
  volume={9},
  number={1},
  pages={60},
  year={2023},
  publisher={Nature Publishing Group UK London}
}

@article{tecmer2022geminalreview,
  title={Geminal-based electronic structure methods in quantum chemistry. Toward a geminal model chemistry},
  author={Tecmer, Pawe{\l} and Boguslawski, Katharina},
  journal={Physical Chemistry Chemical Physics},
  volume={24},
  number={38},
  pages={23026--23048},
  year={2022},
  publisher={Royal Society of Chemistry}
}

@article{surjan2012sog,
  title={Strongly orthogonal geminals: size-extensive and variational reference states},
  author={Surj{\'a}n, P{\'e}ter R and Szabados, {\'A}gnes and Jeszenszki, P{\'e}ter and Zoboki, Tam{\'a}s},
  journal={Journal of Mathematical Chemistry},
  volume={50},
  number={3},
  pages={534--551},
  year={2012},
  publisher={Springer}
}

@article{stanton1993equation,
  title={The equation of motion coupled-cluster method. A systematic biorthogonal approach to molecular excitation energies, transition probabilities, and excited state properties},
  author={Stanton, John F and Bartlett, Rodney J},
  journal={The Journal of chemical physics},
  volume={98},
  number={9},
  pages={7029--7039},
  year={1993},
  publisher={AIP Publishing}
}

@article{mcclean2017hybrid,
  title={Hybrid quantum-classical hierarchy for mitigation of decoherence and determination of excited states},
  author={McClean, Jarrod R and Kimchi-Schwartz, Mollie E and Carter, Jonathan and De Jong, Wibe A},
  journal={Physical Review A},
  volume={95},
  number={4},
  pages={042308},
  year={2017},
  publisher={APS}
}

@article{ikai2011theory,
  title={On the theory of Pfaffians based on exponential maps in exterior algebras},
  author={Ikai, Hisatoshi},
  journal={Linear algebra and its applications},
  volume={434},
  number={4},
  pages={1094--1106},
  year={2011},
  publisher={Elsevier}
}

@article{brod2011extending,
  title={Extending matchgates into universal quantum computation},
  author={Brod, Daniel J and Galvao, Ernesto F},
  journal={Physical Review A—Atomic, Molecular, and Optical Physics},
  volume={84},
  number={2},
  pages={022310},
  year={2011},
  publisher={APS}
}

@article{mazziotti2012two,
  title = {Two-electron reduced density matrix as the basic variable in many-electron quantum chemistry and physics},
  author = {Mazziotti, David A.},
  journal = {Chemical Reviews},
  volume = {112},
  number = {1},
  pages = {244--262},
  year = {2012}
}

@article{ivanov2016computational,
  title={Computational complexity of exterior products and multi-particle amplitudes of non-interacting fermions in entangled states},
  author={Ivanov, Dmitri A},
  journal={arXiv preprint arXiv:1603.02724},
  year={2016}
}

@article{motta2019efficient,
  title={Efficient ab initio auxiliary-field quantum Monte Carlo calculations in Gaussian bases via low-rank tensor decomposition},
  author={Motta, Mario and Shee, James and Zhang, Shiwei and Chan, Garnet Kin-Lic},
  journal={Journal of chemical theory and computation},
  volume={15},
  number={6},
  pages={3510--3521},
  year={2019},
  publisher={ACS Publications}
}

@article{mcardle2020quantum,
  title={Quantum computational chemistry},
  author={McArdle, Sam and Endo, Suguru and Aspuru-Guzik, Al{\'a}n and Benjamin, Simon C and Yuan, Xiao},
  journal={Reviews of Modern Physics},
  volume={92},
  number={1},
  pages={015003},
  year={2020},
  publisher={APS}
}

@book{helgaker2013molecular,
  title={Molecular electronic-structure theory},
  author={Helgaker, Trygve and Jorgensen, Poul and Olsen, Jeppe},
  year={2013},
  publisher={John Wiley \& Sons}
}

@article{hirsch1983discrete,
  title={Discrete Hubbard-Stratonovich transformation for fermion lattice models},
  author={Hirsch, Jorge E},
  journal={Physical Review B},
  volume={28},
  number={7},
  pages={4059},
  year={1983},
  publisher={APS}
}

@article{d2025majorana,
  title={Majorana string simulation of nonequilibrium dynamics in two-dimensional lattice fermion systems},
  author={D'Anna, Matteo and Nys, Jannes and Carrasquilla, Juan},
  journal={arXiv preprint arXiv:2511.02809},
  year={2025}
}

@article{miller2025simulation,
  title={Simulation of fermionic circuits using majorana propagation},
  author={Miller, Aaron and Favre, Joachim and Holmes, Zo{\"e} and Salehi, {\"O}zlem and Chakraborty, Rahul and Nyk{\"a}nen, Anton and Zimbor{\'a}s, Zolt{\'a}n and Glos, Adam and Garc{\'\i}a-P{\'e}rez, Guillermo},
  journal={arXiv preprint arXiv:2503.18939},
  year={2025}
}

@article{huggins2022unbiasing,
  title={Unbiasing fermionic quantum Monte Carlo with a quantum computer},
  author={Huggins, William J and O’Gorman, Bryan A and Rubin, Nicholas C and Reichman, David R and Babbush, Ryan and Lee, Joonho},
  journal={Nature},
  volume={603},
  number={7901},
  pages={416--420},
  year={2022},
  publisher={Nature Publishing Group UK London}
}

@misc{cudby2024gaussiandecompositionmagicstates,
      title={Gaussian decomposition of magic states for matchgate computations}, 
      author={Joshua Cudby and Sergii Strelchuk},
      year={2024},
      eprint={2307.12654},
      archivePrefix={arXiv},
      primaryClass={quant-ph}
}

@article{ReardonSmith2024improvedsimulation,
  title = {Improved simulation of quantum circuits dominated by free fermionic operations},
  author = {Reardon-Smith, Oliver and Oszmaniec, Micha{\l{}} and Korzekwa, Kamil},
  journal = {{Quantum}},
  issn = {2521-327X},
  publisher = {{Verein zur F{\"{o}}rderung des Open Access Publizierens in den Quantenwissenschaften}},
  volume = {8},
  pages = {1549},
  month = dec,
  year = {2024}
}

@article{Dias_2024,
   title={Classical simulation of non-Gaussian fermionic circuits},
   volume={8},
   ISSN={2521-327X},
   journal={Quantum},
   publisher={Verein zur Forderung des Open Access Publizierens in den Quantenwissenschaften},
   author={Dias, Beatriz and Koenig, Robert},
   year={2024},
   month=may, pages={1350} }

@article{alam2025fermionic,
  title={Fermionic dynamics on a trapped-ion quantum computer beyond exact classical simulation},
  author={Alam, Faisal and Bosse, Jan Lukas and {\v{C}}epait{\.e}, Ieva and Chapman, Adrian and Clinton, Laura and Crichigno, Marcos and Crosson, Elizabeth and Cubitt, Toby and Derby, Charles and Dowinton, Oliver and others},
  journal={arXiv preprint arXiv:2510.26300},
  year={2025}
}

@article{google2020hartree,
  title={Hartree-Fock on a superconducting qubit quantum computer},
  author={Arute, Frank and Arya, Kunal and Babbush, Ryan and Bacon, Dave and Bardin, Joseph C and Barends, Rami and Boixo, Sergio and Broughton, Michael and Buckley, Bob B and others},
  journal={Science},
  volume={369},
  number={6507},
  pages={1084--1089},
  year={2020},
  publisher={American Association for the Advancement of Science}
}

@article{huggins2020non,
  title={A non-orthogonal variational quantum eigensolver},
  author={Huggins, William J and Lee, Joonho and Baek, Unpil and O’Gorman, Bryan and Whaley, K Birgitta},
  journal={New Journal of Physics},
  volume={22},
  number={7},
  pages={073009},
  year={2020},
  publisher={IOP Publishing}
}

@article{ollitrault2024enhancing,
  title={Enhancing initial state overlap through orbital optimization for faster molecular electronic ground-state energy estimation},
  author={Ollitrault, Pauline J and Cortes, Cristian L and Gonthier, J{\'e}r{\^o}me F and Parrish, Robert M and Rocca, Dario and Anselmetti, Gian-Luca and Degroote, Matthias and Moll, Nikolaj and Santagati, Raffaele and Streif, Michael},
  journal={Physical Review Letters},
  volume={133},
  number={25},
  pages={250601},
  year={2024},
  publisher={APS}
}

@article{daley2022practical,
  title={Practical quantum advantage in quantum simulation},
  author={Daley, Andrew J and Bloch, Immanuel and Kokail, Christian and Flannigan, Stuart and Pearson, Natalie and Troyer, Matthias and Zoller, Peter},
  journal={Nature},
  volume={607},
  number={7920},
  pages={667--676},
  year={2022},
  publisher={Nature Publishing Group UK London}
}

@article{arute2020observation,
  title={Observation of separated dynamics of charge and spin in the Fermi-Hubbard model},
  author={Arute, Frank and Arya, Kunal and Babbush, Ryan and Bacon, Dave and Bardin, Joseph C and Barends, Rami and Bengtsson, Andreas and Boixo, Sergio and Broughton, Michael and Buckley, Bob B and others},
  journal={arXiv preprint arXiv:2010.07965},
  year={2020}
}

@article{aaronson2012generalizing,
  title={Generalizing and derandomizing Gurvits's approximation algorithm for the permanent},
  author={Aaronson, Scott and Hance, Travis},
  journal={arXiv preprint arXiv:1212.0025},
  year={2012}
}

@article{leimkuhler2025exponential,
  title={Exponential quantum speedups in quantum chemistry with linear depth},
  author={Leimkuhler, Oskar and Whaley, K Birgitta},
  journal={arXiv preprint arXiv:2503.21041},
  year={2025}
}

@article{oszmaniec2022fermion,
  title={Fermion sampling: a robust quantum computational advantage scheme using fermionic linear optics and magic input states},
  author={Oszmaniec, Micha{\l} and Dangniam, Ninnat and Morales, Mauro ES and Zimbor{\'a}s, Zolt{\'a}n},
  journal={PRX Quantum},
  volume={3},
  number={2},
  pages={020328},
  year={2022},
  publisher={APS}
}

@article{recio2025train,
  title={Train on classical, deploy on quantum: scaling generative quantum machine learning to a thousand qubits},
  author={Recio-Armengol, Erik and Ahmed, Shahnawaz and Bowles, Joseph},
  journal={arXiv preprint arXiv:2503.02934},
  year={2025}
}

@article{rudolph2024trainability,
  title={Trainability barriers and opportunities in quantum generative modeling},
  author={Rudolph, Manuel S and Lerch, Sacha and Thanasilp, Supanut and Kiss, Oriel and Shaya, Oxana and Vallecorsa, Sofia and Grossi, Michele and Holmes, Zo{\"e}},
  journal={npj Quantum Information},
  volume={10},
  number={1},
  pages={116},
  year={2024},
  publisher={Nature Publishing Group UK London}
}

@article{bako2025fermionic,
  title={Fermionic Born Machines: Classical training of quantum generative models based on Fermion Sampling},
  author={Bak{\'o}, Bence and Kolarovszki, Zolt{\'a}n and Zimbor{\'a}s, Zolt{\'a}n},
  journal={arXiv preprint arXiv:2511.13844},
  year={2025}
}

@article{lim2025efficient,
  title={Efficient classical algorithms for linear optical circuits},
  author={Lim, Youngrong and Oh, Changhun},
  journal={arXiv preprint arXiv:2502.12882},
  year={2025}
}

@article{seron2024efficient,
  title={Efficient validation of boson sampling from binned photon-number distributions},
  author={Seron, Benoit and Novo, Leonardo and Arkhipov, Alex and Cerf, Nicolas J},
  journal={Quantum},
  volume={8},
  pages={1479},
  year={2024},
  publisher={Verein zur F{\"o}rderung des Open Access Publizierens in den Quantenwissenschaften}
}

@article{oh2024quantum,
  title={Quantum-inspired classical algorithms for molecular vibronic spectra},
  author={Oh, Changhun and Lim, Youngrong and Wong, Yat and Fefferman, Bill and Jiang, Liang},
  journal={Nature Physics},
  volume={20},
  number={2},
  pages={225--231},
  year={2024},
  publisher={Nature Publishing Group UK London}
}

@article{terhal2002classical,
  title={Classical simulation of noninteracting-fermion quantum circuits},
  author={Terhal, Barbara M and DiVincenzo, David P},
  journal={Physical Review A},
  volume={65},
  number={3},
  pages={032325},
  year={2002},
  publisher={APS}
}

@article{wimmer2012algorithm,
  title={Algorithm 923: Efficient numerical computation of the pfaffian for dense and banded skew-symmetric matrices},
  author={Wimmer, Michael},
  journal={ACM Transactions on Mathematical Software (TOMS)},
  volume={38},
  number={4},
  pages={1--17},
  year={2012},
  publisher={ACM New York, NY, USA}
}

@article{dias2024classical,
  title={Classical simulation of non-Gaussian fermionic circuits},
  author={Dias, Beatriz and Koenig, Robert},
  journal={Quantum},
  volume={8},
  pages={1350},
  year={2024},
  publisher={Verein zur F{\"o}rderung des Open Access Publizierens in den Quantenwissenschaften}
}

@article{knill2001fermionic,
  title={Fermionic linear optics and matchgates},
  author={Knill, Emanuel},
  journal={arXiv preprint quant-ph/0108033},
  year={2001}
}

@article{valiant2002quantum,
  title={Quantum circuits that can be simulated classically in polynomial time},
  author={Valiant, Leslie G},
  journal={SIAM Journal on Computing},
  volume={31},
  number={4},
  pages={1229--1254},
  year={2002},
  publisher={SIAM}
}

@book{fetter2012quantum,
  title={Quantum theory of many-particle systems},
  author={Fetter, Alexander L and Walecka, John Dirk},
  year={2012},
  publisher={Courier Corporation}
}

@article{knuth1995overlapping,
  title={Overlapping pfaffians},
  author={Knuth, Donald E},
  journal={arXiv preprint math/9503234},
  year={1995}
}

@article{jozsa2008matchgates,
  title={Matchgates and classical simulation of quantum circuits},
  author={Jozsa, Richard and Miyake, Akimasa},
  journal={Proceedings of the Royal Society A: Mathematical, Physical and Engineering Sciences},
  volume={464},
  number={2100},
  pages={3089--3106},
  year={2008},
  publisher={The Royal Society London}
}

@inproceedings{gurvits2005complexity,
  title={On the complexity of mixed discriminants and related problems},
  author={Gurvits, Leonid},
  booktitle={International Symposium on Mathematical Foundations of Computer Science},
  pages={447--458},
  year={2005},
  organization={Springer}
}

@book{horn-and-johnson,
    doi = {10.1017/CBO9781139020411},
    title = {Matrix Analysis},
    authors = {Horn, Roger A. and Johnson, Charles R.},
    year = {2012},
    edition = {2},
    publisher = {Cambridge University Press},
}

@inproceedings{aaronson2011computational,
  title={The computational complexity of linear optics},
  author={Aaronson, S. and Arkhipov, A.},
  booktitle={Proceedings of the forty-third annual ACM symposium on Theory of computing},
  pages={333--342},
  year={2011}
}

@article{valiant1979complexity,
  title={The complexity of computing the permanent},
  author={Valiant, Leslie G},
  journal={Theoretical computer science},
  volume={8},
  number={2},
  pages={189--201},
  year={1979},
  publisher={Elsevier}
}

\appendix

\section{Proof of the Pfaffian-Wedge Identity}\label{app:pf_identity_proof}
In this Appendix, we provide an explicit derivation of the fundamental algebraic identity relating the Pfaffian of a skew-symmetric matrix $A \in \mathbb{C}^{2N \times 2N}$ to the $N$-th exterior power of its associated two-form for completeness. 
This geometric correspondence formally underpins the exterior-algebra coefficient-extraction framework utilized throughout the main text.

Let $V \cong \mathbb{C}^{2N}$ be a vector space equipped with a canonical basis $\{e_1, \dots, e_{2N}\}$. The associated two-form for the matrix $A$ is defined via the linear map:
\begin{equation}
\alpha(A) \equiv \frac{1}{2} \sum_{p,q=1}^{2N} A_{pq} e_p \wedge e_q.
\end{equation}
Evaluating the $N$-th exterior power (the wedge product of $N$ identical two-forms) yields the macroscopic multiparticle state:
\begin{align} \label{eq:wedge_expansion}
\frac{1}{N!} \alpha(A)^{\wedge N} &= \frac{1}{2^N N!} \sum_{p_1, q_1, \dots, p_N, q_N = 1}^{2N} \left( \prod_{k=1}^N A_{p_k q_k} \right) \nonumber \\ 
&\times (e_{p_1} \wedge e_{q_1}) \wedge \dots \wedge (e_{p_N} \wedge e_{q_N}).
\end{align}
Due to the total antisymmetry of the exterior product ($e_i \wedge e_j = -e_j \wedge e_i$ and $e_i \wedge e_i = 0$), any term containing repeated basis vectors vanishes. Consequently, non-zero contributions occur if and only if the sequence of indices $(p_1, q_1, \dots, p_N, q_N)$ forms a valid permutation $\sigma \in \mathcal{S}_{2N}$ of the integer set $\{1, 2, \dots, 2N\}$.

Rearranging these basis vectors into the canonical ascending lexicographical order $e_1 \wedge \dots \wedge e_{2N}$ naturally introduces the sign of the permutation, $\mathrm{sgn}(\sigma) \in \{\pm 1\}$, exactly reproducing the physical anticommutation exchange signs of the fermions. 
Rewriting Eq.~\eqref{eq:wedge_expansion} as a sum over the symmetric group $\mathcal{S}_{2N}$ gives:
\begin{align}
    &\frac{1}{N!} \alpha(A)^{\wedge N}  \\
    &= \left( \frac{1}{2^N N!} \sum_{\sigma \in \mathcal{S}_{2N}} \mathrm{sgn}(\sigma) \prod_{i=1}^N A_{\sigma(2i-1), \sigma(2i)} \right) \nonumber \\ 
    &~~~\times e_1 \wedge \dots \wedge e_{2N} \\
    &= \mathrm{pf}(A)(e_1 \wedge \dots \wedge e_{2N}).
\end{align}
By defining the canonical volume form as $\mathrm{vol} \equiv e_1 \wedge \dots \wedge e_{2N}$, and applying the linear functional $[\cdot]_{\mathrm{vol}}$ that extracts the scalar coefficient of this top-form, we recover the exact defining identity:
\begin{equation}
    \mathrm{pf}(A) = \frac{1}{N!} \left[ \alpha(A)^{\wedge N} \right]_{\mathrm{vol}}.
\end{equation}

\section{Algebraic construction and variance bounds for APSG estimators}\label{app:general_estimator}
This Appendix collects the fixed-output coefficient identity, the randomized sign filter, and the pointwise Pfaffian bounds used in the proofs for block APSG states.

Let $|\Psi\rangle=\prod_{t=1}^N\hat\eta_t^\dagger|0\rangle$ be a block-product APSG state as in Def.~\ref{def:disjointBGP}, and let $W_t$ be the
skew-symmetric matrix representing the pair creator $\hat\eta_t^\dagger$ on block $t$. Let $\hat G$ be a number-preserving Gaussian map with single-particle matrix $G$. For a fixed $2N$-particle output basis state $|x\rangle$, with occupied-mode set $S_x$, define
\begin{align}
    X=(G_{S_x,:})^\T,\qquad B_t(x)=X^\T W_t X .
\end{align}
For a passive FLO unitary, $G=U$ and $X$ is an isometry.

Define
\begin{align}
    P_x(y_1,\dots,y_N)
    =
    \pf\!\left(\sum_{t=1}^N y_t B_t(x)\right).
\end{align}
Then the fixed-output amplitude is the multilinear coefficient
\begin{align}
    \langle x|\hat G|\Psi\rangle
    =
    [y_1\cdots y_N]P_x(y_1,\dots,y_N).
\end{align}
Indeed, the Pfaffian--wedge identity converts the top-form coefficient of
$\sum_t y_t\alpha(B_t(x))$ into the Pfaffian above, while extracting
$y_1\cdots y_N$ enforces the APSG rule that exactly one pair is selected from
each block.

Let $b=(b_1,\dots,b_N)\sim\mathrm{Unif}(\{\pm1\}^N)$ and define
\begin{align}
    \mathcal Z(b;x)
    =
    \pf\!\left(\sum_{t=1}^N b_tB_t(x)\right)
    \prod_{t=1}^N b_t .
\end{align}

\begin{lemma}[Exact coefficient isolation]\label{lem:sign_filter}
For every fixed output $x$,
\begin{align}
    \E_b[\mathcal Z(b;x)]
    =
    [y_1\cdots y_N]P_x(y).
\end{align}
\end{lemma}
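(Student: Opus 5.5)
The plan is to expand $P_x(y)$ as a polynomial and average monomial by monomial. Since the Pfaffian of a $2N\times 2N$ skew matrix is a homogeneous form of degree $N$ in the matrix entries, $P_x(y)=\pf(\sum_t y_t B_t(x))$ is a homogeneous polynomial of degree $N$ in $y_1,\dots,y_N$. Concretely, Eq.~\eqref{eq:pf_wedge} and the multilinearity of $\alpha$ give
\begin{align}
    P_x(y)=\frac{1}{N!}\Big[\Big(\sum_t y_t\alpha(B_t)\Big)^{\wedge N}\Big]_{\mathrm{vol}}
    =\sum_{|m|=N} c_m\, y^{m},
\end{align}
with finitely many coefficients $c_m$ indexed by multi-indices $m\in\mathbb{Z}_{\ge0}^N$ satisfying $\sum_t m_t=N$. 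Two-forms commute under $\wedge$, so the multinomial expansion is valid, although we only need the existence of this expansion.

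Next, substitute $y=b$ and multiply by $\prod_t b_t$. This gives $\mathcal Z(b;x)=\sum_m c_m\prod_t b_t^{m_t+1}$. Taking the expectation is linear, and the $b_t$ are independent uniform signs, so
\begin{align}
    \E_b\Big[\prod_t b_t^{m_t+1}\Big]=\prod_t \E[b_t^{m_t+1}].
\end{align}
Each factor equals $1$ if $m_t$ is odd and $0$ if $m_t$ is even. Hence only multi-indices with every $m_t$ odd survive. In particular every $m_t\ge1$, and together with $\sum_t m_t=N$ this forces $m=(1,\dots,1)$. So $\E_b[\mathcal Z]=c_{(1,\dots,1)}=[y_1\cdots y_N]P_x$, which is the claim.

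The only step needing care is homogeneity of degree exactly $N$. Without it, terms such as $y_1^3y_2\cdots$ would also pass the parity filter. Homogeneity follows from $\pf(\lambda A)=\lambda^N\pf(A)$, or directly from the definition of the Pfaffian as a sum of products of $N$ entries, each entry linear in $y$. This fact is precisely what lets $\pm1$ signs replace general roots of unity as the Fourier filter.
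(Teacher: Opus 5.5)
Your proposal is correct and is essentially the paper's argument: expand the degree-$N$ homogeneous polynomial $P_x$, note that $\E_b\bigl[\prod_t b_t^{m_t+1}\bigr]$ vanishes unless every $m_t$ is odd, and use $\sum_t m_t=N$ to force $m=(1,\dots,1)$. The homogeneity you justify explicitly is the same ingredient the paper relies on.
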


\begin{proof}
Write $P_x(y)=\sum_a c_a\prod_t y_t^{a_t}$. Since the Pfaffian is homogeneous
of degree $N$, only monomials with $\sum_t a_t=N$ appear. Multiplication by
$\prod_t b_t$ and averaging over independent signs kills every monomial for
which some $a_t+1$ is odd. The only surviving monomial has $a_t=1$ for all
$t$, namely $y_1\cdots y_N$.
\end{proof}

\begin{lemma}[Pointwise Pfaffian bound]\label{lem:pf_bound_apsg}
Let $W_1,\ldots,W_N$ be supported on mutually disjoint mode blocks and satisfy
$\|W_t\|_{\mathrm{op}}\le1$. If $\|X\|_{\mathrm{op}}\le L$, then for every
$b\in\{\pm1\}^N$,
\begin{align}
    \left|
    \pf\!\left[X^\T\left(\sum_{t=1}^N b_tW_t\right)X\right]
    \right|
    \le L^{2N}.
\end{align}
In particular, the bound is $1$ for passive FLO unitaries and
$\|G\|_{\mathrm{op}}^{2N}$ for a number-preserving Gaussian map with
single-particle matrix $G$.
\end{lemma}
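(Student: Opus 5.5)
The plan is to reduce the bound to the standard inequality $|\pf(A)|\le\prod_i \sigma_i(A)^{1/2}$, i.e.\ $|\pf(A)|^2=|\det A|$. Set
\[
W(b)\equiv\sum_t b_tW_t .
\]
Since the $W_t$ are supported on mutually disjoint mode blocks, $W(b)$ is block-diagonal with blocks $b_tW_t$. Hence
\[
\|W(b)\|_{\mathrm{op}}=\max_t\|W_t\|_{\mathrm{op}}\le 1
\]
for every sign pattern $b$. This is the only place where disjointness and the signs enter. It is also why the bound is uniform in $b$: the signs merely multiply each block by $\pm1$ and leave its singular values unchanged. Next, $A\equiv X^\T W(b)X$ is a $2N\times 2N$ skew-symmetric matrix, since $(X^\T W X)^\T=X^\T W^\T X=-A$. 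By submultiplicativity, and using $\|X^\T\|_{\mathrm{op}}=\|X\|_{\mathrm{op}}\le L$,
\[
\|A\|_{\mathrm{op}}\le \|X^\T\|_{\mathrm{op}}\|W(b)\|_{\mathrm{op}}\|X\|_{\mathrm{op}}\le L^2 .
\]

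The conclusion then follows from the identity $\pf(A)^2=\det A$. With $\sigma_1,\dots,\sigma_{2N}$ the singular values of $A$, we have $|\det A|=\prod_{i=1}^{2N}\sigma_i$, each $\sigma_i\le\|A\|_{\mathrm{op}}\le L^2$. Therefore
\[
|\pf(A)|=|\det A|^{1/2}\le (L^{2})^{2N\cdot\frac12}=L^{2N}.
\]
I would cite $\pf(A)^2=\det(A)$ as standard, e.g.\ from Knuth's treatment. Alternatively, it can be derived from the Pfaffian--wedge identity of App.~\ref{app:pf_identity_proof} together with congruence invariance $\pf(CAC^\T)=\det(C)\pf(A)$ and the Youla normal form.

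For the specializations, I would check the two cases separately. For a passive FLO unitary, $X=(U_{S_x,:})^\T$ satisfies $X^\dagger X=I$, so $L=1$. For a general map $G$, $X$ is a submatrix (transposed) of $G$, so $\|X\|_{\mathrm{op}}\le\|G\|_{\mathrm{op}}$ and the bound becomes $\|G\|_{\mathrm{op}}^{2N}$.

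It remains to confirm the hypothesis $\|W_t\|_{\mathrm{op}}\le 1$ for the APSG blocks. Here $W_t=\sum_j w_{t,j}\,\Omega(e_{t,2j-1},e_{t,2j})$ is a direct sum of disjoint $2\times2$ blocks with entries $\pm w_{t,j}$. Its operator norm is therefore $\max_j|w_{t,j}|\le 1$, using the normalization $\sum_j|w_{t,j}|^2=1$. For the rescaled version in Theorem~\ref{thm:overlap_fock_apsg}, dividing $W_t$ by $\max_j w_{t,j}$ gives norm exactly $1$, which accounts for the factor $\gamma$.

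The main obstacle is mild: it is making sure the norm bound on the sum $\sum_t b_tW_t$ does not degrade with $N$, which a triangle inequality would give as $N$. Disjoint support is precisely what yields the max rather than the sum, so I would state the block-diagonal structure explicitly before taking norms.
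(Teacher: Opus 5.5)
Your proposal is correct and follows the paper's proof essentially step for step. Disjoint block supports give $\|W(b)\|_{\mathrm{op}}\le 1$ uniformly in $b$, submultiplicativity gives $\|X^\T W(b)X\|_{\mathrm{op}}\le L^2$, and $|\pf(A)|^2=|\det A|\le\|A\|_{\mathrm{op}}^{2N}$ yields $L^{2N}$, with the same specializations ($X$ an isometry for unitary $U$, $\|X\|_{\mathrm{op}}\le\|G\|_{\mathrm{op}}$ in general); your extra check that $\|W_t\|_{\mathrm{op}}=\max_j|w_{t,j}|\le 1$ for APSG blocks is not in the paper's proof of this lemma but is what App.~\ref{app:apsg_overlap_proofs} relies on when applying it.
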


\begin{proof}
Set $W(b)=\sum_t b_tW_t$. Since the block supports are disjoint,
$\|W(b)\|_{\mathrm{op}}\le1$. Hence
\begin{align}
    \|X^\T W(b)X\|_{\mathrm{op}}\le \|X\|_{\mathrm{op}}^2\|W(b)\|_{\mathrm{op}}
    \le L^2.
\end{align}
For any $2N\times2N$ skew-symmetric matrix $A$,
$|\pf(A)|^2=|\det A|\le \|A\|_{\mathrm{op}}^{2N}$, so
$|\pf(A)|\le \|A\|_{\mathrm{op}}^N\le L^{2N}$.
For $G=U$ unitary, $X=(U_{S_x,:})^\T$ is an isometry and $L=1$; in general
$L\le\|G\|_{\mathrm{op}}$.
\end{proof}

Consequently, each fixed-output sample is bounded by $1$ in the unitary case and by $\|G\|_{\mathrm{op}}^{2N}$ in the non-unitary case. Each sample requires one Pfaffian of size $2N\times2N$, which can be evaluated in $O(N^3)$ time.

\section{Proofs of the APSG overlap estimators}
\label{app:apsg_overlap_proofs}

We first note that Theorem~\ref{thm:overlap_fock_apsg} follows directly from Lemma~\ref{lem:pf_bound_apsg} by writing
\begin{align}
    \ket{\Psi} &= \bigotimes_{t=1}^N\left(\sum_{j=1}^{r_t} w_{t,j}\,\hat p_{t,j}^\dagger\ket{0}\right)\\
    &=\left(\prod_{t=1}^N \gamma_t \right) \bigotimes_{t=1}^N\left(\sum_{j=1}^{r_t} \frac{w_{t,j}}{\gamma_t}\,\hat p_{t,j}^\dagger\ket{0}\right),
\end{align}
where we have defined $\gamma_t = \max_{j\in [r_t]} w_{t,j}$, and noting that the block-matrices $W_t$ have operator norm $1$ after dividing each element by $\gamma_t$.

We now prove Theorems~\ref{thm:overlap_apsg} and~\ref{thm:overlap_apsg_nonunitary}. 
As noted in the main text, the estimator used is directly adapted from that of Theorem~\ref{thm:overlap_fock_apsg}. The proof is identical in both cases, with the only difference being that in the case where the Gaussian map $\hat{G}$ is unitary, it satisfies $\norm{G}_\infty = 1$. 
We rewrite 
\begin{align}
    \langle \Phi|\hat{G}|\Psi\rangle  &= \bigotimes_{t=1}^N\left(\sum_{j=1}^{s_t} v_{t,j}^*\,\bra{0}\hat{q}_{t,j}\right)\hat{G}|\Psi\rangle\\
    &= \bigotimes_{t=1}^N\left(\sum_{j=1}^{s_t} \abs{v_{t,j}}^2\, \frac{\bra{0}\hat{q}_{t,j}}{v_{t,j}}\right)\hat{G}|\Psi\rangle,
\end{align}
from which it follows that if we sample $x$ from the probability distribution $P(x) = \prod_{t=1}^N \abs{v_{t, x_t}}^2$, the overlap we seek is exactly given by the expectation value
\begin{align}
    \langle \Phi|\hat{G}|\Psi\rangle &= \mathop{\mathbb{E}}_{x\sim P} \left[\left(\bigotimes_{t=1}^N\frac{\bra{0}\hat{q}_{t,x_t}}{v_{t,x_t}}\right) \hat{G}|\Psi\rangle\right].\label{eqn:block-aspg-overlap-one-average}
\end{align}
It is convenient to introduce the notation $e_x$ for the binary vector such that $\bigotimes_{t=1}^N \hat{q}_{t,x_t}^\dagger \ket{0} = \ket{e_x}$, we now employ the estimator of Lemma~\ref{lem:pf_bound_apsg} for the inner products $\bra{e_x}\hat{G}\ket{\Psi}$ appearing in Eq.~\eqref{eqn:block-aspg-overlap-one-average} to obtain
\begin{align}
    &\langle \Phi|\hat{G}|\Psi\rangle = \mathop{\mathbb{E}}_{\overset{x\sim P}{b\sim \text{unif}}}  \left[\left(\prod_{t=1}^Nv_{t,x_t}^{-1}\right) \mathcal Z(b;e_x) \right]\\
    &= \mathop{\mathbb{E}}_{\overset{x\sim P}{b\sim \text{unif}}}  \left[\left(\prod_{t=1}^Nv_{t,x_t}^{-1}\right)\pf\!\left(G_{e_x}\sum_{t=1}^N b_t W_t G_{e_x}^T\right)
    \prod_{t=1}^N b_t \right],
\end{align} 
where $G_{e_x}$ denotes the matrix formed by deleting the columns of $G$ for which $e_x$ is $0$. To prove Theorems~\ref{thm:overlap_apsg} and~\ref{thm:overlap_apsg_nonunitary}, all that remains is to show that the variance of this random variable is bounded. The variance takes the form
\begin{align}
    &\operatorname{Var}(Z) \\
    &=\mathop{\mathbb{E}}_{\overset{x\sim P}{b\sim \text{unif}}}  \left[\left(\prod_{t=1}^N|v_{t,x_t}|^{-2}\right) \abs{\pf\!\left(G_{e_x}\left(\sum_{t=1}^N b_t W_t\right) G_{e_x}^T\right)}^2 \right]\\
    &=\mathop{\mathbb{E}}_{b\sim \text{unif}}  \left[\sum_{x\in \prod_{t=1}^N [s_t]}\abs{\pf\!\left(G_{e_x}\left(\sum_{t=1}^N b_t W_t\right) G_{e_x}^T\right)}^2 \right],
\end{align} 
where the probability $P(x)$ has canceled with the product of $|v_{t,x_t}|^{-2}$. 
We show that, in fact, a pointwise bound holds for the quantity that the remaining expectation value averages over.

\begin{lemma}
    If
    \begin{align}
    S &= \sum_{x\in \prod_{t=1}^N [s_t]}\abs{\pf\!\left(G_{e_x} W G_{e_x}^T\right)}^2,
\end{align}
then 
\begin{align}
    S \leq \left(\frac{1}{2N} \norm{G W G^T }_{\mathrm{HS}}^2\right)^N,
\end{align}
in particular if $W$ is obtained from a normalized APSG state $\ket{\Psi}$ so
\begin{align}
    W &= \sum_{t=1}^N b_t W_t = \bigoplus_{t=1}^N b_t\bigoplus_{j=1}^{r_t}
    \begin{pmatrix}
        0 & w_{t,j}\\-w_{t,j} & 0
    \end{pmatrix}\\
    \norm{W}_{\mathrm{HS}}^2 &= 2N,
\end{align}
and we have $S\leq \norm{G}_{\infty}^{4N}$.
\end{lemma}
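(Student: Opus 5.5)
The plan is to interpret each summand as a principal Pfaffian minor of the single $M\times M$ skew-symmetric matrix $A\equiv GWG^\T$, and then to bound the sum of squared minors by a norm computation in $\Lambda^{2N}V$. For each $x\in\prod_t[s_t]$, the matrix $G_{e_x}WG_{e_x}^\T$ is read as the principal submatrix $A_{S_x,S_x}$, where $S_x=\mathrm{supp}(e_x)$ is a $2N$-subset of modes; this is the restriction to the occupied output modes as in App.~\ref{app:general_estimator}. The bra pair slots are mutually disjoint, so distinct $x$ give distinct subsets $S_x$. All terms are nonnegative, so I would first enlarge the sum:
\begin{align}
S\le \sum_{|R|=2N}\abs{\pf(A_{R,R})}^2 .
\end{align}

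By the Pfaffian--wedge identity of App.~\ref{app:pf_identity_proof}, applied to each principal block, $\frac{1}{N!}\alpha(A)^{\wedge N}=\sum_{|R|=2N}\pf(A_{R,R})\,e_R$. The right-hand side above is therefore exactly $\norm{\alpha(A)^{\wedge N}/N!}^2$ in the induced Hermitian inner product on $\Lambda^{2N}V$, for which the $e_R$ are orthonormal. This norm is invariant under the induced action $\wedge^{2N}V'$ of any unitary $V'$. I would then use the Youla normal form of a complex skew-symmetric matrix, $A=V'\big(\bigoplus_k \sigma_k J\big)V'^\T$, where $V'$ is unitary, $\sigma_k\ge0$ and $J=\begin{pmatrix}0&1\\-1&0\end{pmatrix}$. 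In the orthonormal basis $f_i=V'e_i$ this gives $\alpha(A)=\sum_k\sigma_k f_{2k-1}\wedge f_{2k}$, hence
\begin{align}
\frac{1}{N!}\alpha(A)^{\wedge N}=\sum_{|K|=N}\Big(\prod_{k\in K}\sigma_k\Big) f_K,
\end{align}
with orthonormal $f_K$. Its squared norm is the elementary symmetric polynomial $e_N(\sigma_1^2,\sigma_2^2,\dots)$.

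The key step, and the one I expect to be the main obstacle, is converting $e_N$ into the stated bound. The generic Maclaurin bound $e_N(a)\le(\sum a)^N/N!$ is too weak by a factor $N^N/N!$, so an extra structural input is needed. That input is the rank constraint: $\mathrm{rank}\,A\le\mathrm{rank}\,W\le 2N$, so at most $N$ of the $\sigma_k$ are nonzero. Then $e_N=\prod_{k=1}^N\sigma_k^2$, and AM--GM gives
\begin{align}
e_N\le\Big(\tfrac1N\textstyle\sum_k\sigma_k^2\Big)^N=\Big(\tfrac{1}{2N}\norm{A}_{\mathrm{HS}}^2\Big)^N,
\end{align}
using $\norm{A}_{\mathrm{HS}}^2=2\sum_k\sigma_k^2$. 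For an APSG-derived $W$ the rank is at most $2N$ only because each block carries a single pair of weight vectors. I should double-check this in the $W=\sum_t b_tW_t$ setting, where $W_t$ itself may have rank $2r_t$. If the rank fails, I would instead restrict $A$ to the image of $G$ acting on a suitable $2N$-dimensional subspace, or fall back to the factorization $A_{R,R}=G_{R,:}WG_{R,:}^\T$ together with the Cauchy--Binet-type Pfaffian minor summation over subsets of input modes.

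Finally, for the specialization I would bound $\norm{GWG^\T}_{\mathrm{HS}}\le\norm{G}_\infty^2\norm{W}_{\mathrm{HS}}$. I would also compute $\norm{W}_{\mathrm{HS}}^2=\sum_t\sum_j 2w_{t,j}^2=2N$ from the block normalization $\sum_j w_{t,j}^2=1$ and $b_t^2=1$. Substituting yields $S\le(\norm{G}_\infty^4)^N=\norm{G}_\infty^{4N}$.
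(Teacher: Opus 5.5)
\textbf{There is a genuine gap.} It sits where you suspected, and it cannot be repaired by double-checking.

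Your first step, $S\le\sum_{|R|=2N}\abs{\pf(A_{R,R})}^2$, discards the block structure of the bra. After that, the target bound is simply false. Your own computation shows the enlarged sum equals $e_N(\sigma_1^2,\sigma_2^2,\dots)$. The inequality $e_N\le(\frac1N\sum_k\sigma_k^2)^N$ needs at most $N$ nonzero $\sigma_k$.

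The rank claim $\mathrm{rank}\,W\le 2N$ is wrong. One pair of \emph{fermions} per block does not mean a decomposable 2-form. The matrix $W=\bigoplus_t b_t\bigoplus_j w_{t,j}J$ has rank $2\,\#\{(t,j):w_{t,j}\neq0\}$, which can be as large as $2\sum_t r_t$. Your argument only works when every $r_t=1$, i.e.\ the Slater-determinant case; the indecomposability of each geminal is exactly the non-Gaussian resource.

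A concrete counterexample: take $N=2$, two blocks with $r_1=r_2=2$, all $w_{t,j}=1/\sqrt2$, $b_t=1$, $G=I$, and a bra with the same slots.
\begin{itemize}
\item $A=W$ has four values $\sigma_k=1/\sqrt2$, so the enlarged sum is $\binom42\cdot\frac14=\frac32$.
\item The target is $(\frac1{2N}\norm{A}_{\mathrm{HS}}^2)^N=1=\norm{G}_\infty^{4N}$.
\item The true $S$ counts only the four subsets using one slot per block, and equals $1$.
\end{itemize}
Since the enlarged quantity itself exceeds the bound, no argument starting from it can succeed. Your Cauchy--Binet fallback over input subsets fails for the same reason. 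It just recomputes $\norm{(\wedge^{2N}G)\,\omega}^2$ with $\omega=\sum_I\pf(W_{I,I})e_I$, and here $\norm{\omega}^2=e_N(\{w_{t,j}^2\})=\frac32$.

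\textbf{What is missing} is an argument that keeps the constraint that each $S_x$ is the union of exactly one pair slot from each of the $N$ blocks. This lets AM--GM be applied over $N$ block factors, which is where $N^N$ rather than $N!$ comes from. The paper does this with $B=A^\dagger A$:
\begin{itemize}
\item Cauchy--Binet gives $\abs{\pf(A_{S_x,S_x})}^2=\abs{\det A_{S_x,S_x}}\le\det(B_{S_x,S_x})^{1/2}$.
\item Fischer's inequality over the $N$ diagonal $2\times2$ blocks of $B_{S_x,S_x}$, one per chosen slot, makes the sum over $x$ factorize as $\prod_t\sum_{j\in[s_t]}\det(B[t,j])^{1/2}$.
\item For $2\times2$ PSD matrices, $\det^{1/2}\le\frac12\tr$.
\item AM--GM over the $N$ factors, plus disjointness of all slots, gives $S\le(\frac1{2N}\tr B)^N=(\frac1{2N}\norm{A}_{\mathrm{HS}}^2)^N$.
\end{itemize}

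Your final specialization step is correct and matches the paper: $\norm{GWG^{\T}}_{\mathrm{HS}}\le\norm{G}_\infty^2\norm{W}_{\mathrm{HS}}$ and $\norm{W}_{\mathrm{HS}}^2=2N$.
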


\begin{proof}
We introduce the matrices
\begin{align}
    A &= G W G^T\\
    B &= A^\dagger A 
    = G^* W^\dagger G^\dagger G W G^T,
\end{align}
the subsequent derivation will rely on the fact that $B$ is positive semi-definite. 
We recall that the principal submatrices $B_{e_x, e_x}$ of $B$ are also positive semidefinite and apply the Cauchy-Binet formula for determinants~\cite{horn-and-johnson} to the submatrix $B_{e_x, e_x}$ to obtain
\begin{align}
    \det(B_{e_x,e_x}) &= \det(A_{e_x}^\dagger A_{e_x})\\
    &= \sum_{s\in S} \det(A_{e_x,s}^\dagger)\det(A_{e_x,s})\\
    &= \sum_{s\in S} \abs{\det(A_{e_x,s})}^2,
\end{align}
where $S$ consists of subsets of indices of size $2N$. 
In particular, $e_x\in S$, so we obtain the inequality
\begin{align}
    \abs{\pf(A_{e_x, e_x})}^2 = \abs{\det(A_{e_x,e_x})} \leq \det(B_{e_x,e_x})^{1/2}.
\end{align}
We now apply Fischer's inequality~\cite{horn-and-johnson} with $2\times 2$ blocks to the submatrix determinants 
\begin{align}
    \det(B_{e_x,e_x}) \leq \prod_{j=1}^N \det(B_{e_x,e_x}[j]),
\end{align}
where $B_{e_x,e_x}[j]$ denotes the $j^\text{th}$ $2\times 2$ submatrix appearing on the (block) diagonal of $B_{e_x, e_x}$. Combining these inequalities, we obtain
\begin{align}
S &= \sum_{x\in \prod_{t=1}^N [s_t]}\abs{\pf\!\left(G_{e_x} W G_{e_x}^T\right)}^2\\
&\leq \sum_{x\in \prod_{t=1}^N [s_t]} \prod_{j=1}^{N} \det(B_{e_x,e_x}[j])^{1/2}.\label{eqn:app-s-with-inequalities}
\end{align}
Recall that given a choice of $x$, $e_x$ picks out a single pair of adjacent indices for each tensor factor in the original APSG state $\ket{\Psi}$. The sum in Eq.~\eqref{eqn:app-s-with-inequalities} sums over every possible way of choosing such adjacent pairs, so we have
\begin{align}
    \sum_{x\in \prod_{t=1}^N [s_t]} \prod_{j=1}^{N} \det(B_{e_x,e_x}[j])^{1/2} &= \prod_{t=1}^N\sum_{x\in [s_t]}\det(B_{e_x,e_x}[t])^{1/2}.
\end{align}
The result follows from two further inequalities, firstly, for $2\times 2$ positive semi-definite matrices, we have
\begin{align}
    \det(B_{e_x,e_x}[t])^{1/2} \leq \frac{1}{2} \tr(B_{e_x,e_x}[t]),
\end{align}
and secondly, the arithmetic-geometric mean inequality. Applying these, we have
\begin{align}
    S &\leq\prod_{t=1}^N\sum_{x\in [s_t]}\frac{1}{2}\tr(B_{e_x,e_x}[t])\\ 
    &\leq \left(\frac{1}{2N} \sum_{t=1}^N \sum_{x\in [s_t]} \tr(B_{e_x, e_x}[t])\right)^N\\
    &= \left(\frac{1}{2N} \tr(B)\right)^N.
\end{align}
Since $B= A^\dagger A$, we have $\tr(B) = \norm{A}_{\mathrm{HS}}^2$.
\end{proof}

\section{Trapped-ion fermionic dynamics benchmark: definitions and conventions}
\label{app:phasecraft_model}

This Appendix provides the definitions and normalization conventions used for the trapped-ion fermionic-dynamics benchmark of Ref.~\cite{alam2025fermionic}, in a self-contained form.

Ref.~\cite{alam2025fermionic} considers the spinful single-band Fermi--Hubbard Hamiltonian
\begin{equation}
    \hat H
    =-J\sum_{\langle i,j\rangle,\sigma}\Big(e^{i\phi_{ij}}\hat c_{i\sigma}^\dagger \hat c_{j\sigma}+\mathrm{h.c.}\Big)
    +W\sum_i \hat n_{i\uparrow}\hat n_{i\downarrow},
\end{equation}
where $\hat n_{i\sigma}=\hat c_{i\sigma}^\dagger \hat c_{i\sigma}$ and $\langle i,j\rangle$ denotes nearest-neighbor pairs on the lattice. The real-time state is $\ket{\Psi(t)}=e^{-it\hat H}\ket{\Psi_0}$. In the noninteracting limit $W=0$, $\hat H$ is quadratic and number preserving, so the evolution $\hat U(t)=e^{-it\hat H}$ is a passive FLO.

A key ingredient of Ref.~\cite{alam2025fermionic} is a structured, non-Gaussian initial state built from a fixed nearest-neighbor dimer covering, together with localized holon/doublon defects. On a dimer $(j,k)$, the $S^z_{\mathrm{tot}}=0$ triplet is
\begin{equation}
|T_0\rangle_{j,k}
=
\frac{1}{\sqrt 2}\Big(\hat c_{j\uparrow}^\dagger \hat c_{k\downarrow}^\dagger
+\hat c_{j\downarrow}^\dagger \hat c_{k\uparrow}^\dagger\Big)|0\rangle.
\end{equation}
The global input $\ket{\Psi_0}$ is a tensor product of such triplets over the chosen dimer covering, with one empty site (holon) and one doubly occupied site (doublon) inserted at specified locations, following Ref.~\cite{alam2025fermionic}.

Among the reported diagnostics, we use local densities $n_{i\sigma}(t)=\langle \hat n_{i\sigma}(t)\rangle$, the connected spin correlator
\begin{equation}
C_{zz}(i,j;t)=4\Big(\langle \hat S_i^z(t)\hat S_j^z(t)\rangle-\langle \hat S_i^z(t)\rangle\langle \hat S_j^z(t)\rangle\Big),
\end{equation}
with $\hat S_i^z=\tfrac12(\hat n_{i\uparrow}-\hat n_{i\downarrow})$, and the total doublon number $\hat N_{\mathrm{doublon}}=\sum_i \hat n_{i\uparrow}\hat n_{i\downarrow}$. The experiment also reports a nearest-neighbor triplet diagnostic defined as a spatial average of $C_{zz}(i,j;t)$ over all nearest-neighbor links; in Ref.~\cite{alam2025fermionic} this average is normalized by the 2D system size $L_xL_y$ (for $L=L_xL_y$ sites).

Finally, Ref.~\cite{alam2025fermionic} introduces high-weight diagonal string observables (Wilson lines/loops) constructed from local holon/doublon projectors along paths or loops. Writing the doublon projector as $\hat d_i\equiv \hat n_{i\uparrow}\hat n_{i\downarrow}$, a representative Wilson-loop-type observable is a product of doublon-free projectors along a closed curve $C$,
\begin{equation}
\hat W_C \equiv \prod_{i\in C} (1-\hat d_i),
\end{equation}
and the open Wilson line $V_{hd}$ is defined analogously along an open path connecting the holon and doublon defects (see Ref.~\cite{alam2025fermionic} for the precise path conventions used in the experiment). In all cases, these diagnostics are diagonal polynomials in $\{\hat n_{i\sigma}\}$ and $\{1-\hat n_{i\uparrow}\hat n_{i\downarrow}\}$, and therefore admit parity-string/Walsh--Hadamard expansions in general. For the doublon Wilson loops used in the numerical benchmark of Sec.~\ref{sec:phasecraft}, however, we employ the more specialized charge-phase estimator described below in this Appendix.

\paragraph*{Charge-phase estimator for doublon Wilson loops.}
For the numerical Wilson-loop benchmark of Sec.~\ref{sec:phasecraft}, we
specialize the general diagonal-observable framework to the doublon-free string
\begin{equation}
    \hat W_C=\prod_{j\in C}(1-\hat d_j),
    \qquad
    \hat d_j\equiv \hat n_{j\uparrow}\hat n_{j\downarrow}.
\end{equation}
Introduce the local charge operator
\begin{equation}
    \hat q_j\equiv \hat n_{j\uparrow}+\hat n_{j\downarrow}.
\end{equation}
Since $1-\hat d_j$ depends only on the three-valued charge
$q_j\in\{0,1,2\}$, one has the exact identity
\begin{align}
    1-\hat d_j
    &=
    \frac{1}{\sqrt{3}}e^{-i\pi/6}e^{\,i(\pi/3)\hat q_j}
    +
    \frac{1}{\sqrt{3}}e^{\,i\pi/6}e^{-\,i(\pi/3)\hat q_j} \\
    \label{eq:wilson_local_charge_phase}
    &=
    \frac{2}{\sqrt{3}}
    \mathbb{E}_{\sigma_j\sim\mathrm{Unif}\{\pm1\}}
    \left[
        e^{-i\sigma_j\pi/6}
        e^{\,i\sigma_j(\pi/3)\hat q_j}
    \right].
\end{align}
Multiplying over all sites of the contour $C$ gives
\begin{equation}
    \hat W_C
    =
    \left(\frac{2}{\sqrt{3}}\right)^{|C|}
    \mathbb{E}_{\boldsymbol{\sigma}\sim\mathrm{Unif}(\{\pm1\}^{|C|})}
    \left[
        e^{-i\frac{\pi}{6}\sum_{j\in C}\sigma_j}
        \hat D_C(\boldsymbol{\sigma})
    \right],
\end{equation}
where
\begin{equation}
    \hat D_C(\boldsymbol{\sigma})
    \equiv
    \exp\!\left(
        i\frac{\pi}{3}\sum_{j\in C}\sigma_j \hat q_j
    \right).
\end{equation}

For the noninteracting evolution $\hat U_0(t)=e^{-it\hat H_0}$, define
\begin{equation}
    \mu_C(t;\boldsymbol{\sigma})
    \equiv
    \bra{\Psi_0}
    \hat U_0^\dagger(t)\hat D_C(\boldsymbol{\sigma})\hat U_0(t)
    \ket{\Psi_0}.
\end{equation}
Because $\hat D_C(\boldsymbol{\sigma})$ is the second-quantized action of a
diagonal passive-FLO unitary, $\mu_C(t;\boldsymbol{\sigma})$ is exactly an
instance of the overlap primitive covered by
Theorem~\ref{thm:overlap_apsg}. Therefore,
\begin{equation}
    \langle \hat W_C(t)\rangle
    =
    \left(\frac{2}{\sqrt{3}}\right)^{|C|}
    \mathbb{E}_{\boldsymbol{\sigma}}
    \left[
        e^{-i\frac{\pi}{6}\sum_{j\in C}\sigma_j}
        \mu_C(t;\boldsymbol{\sigma})
    \right].
\end{equation}

\begin{proposition}
Let $\mathcal Z_C(t;\boldsymbol{\sigma})$ be a one-shot mixed-Pfaffian estimator for
$\mu_C(t;\boldsymbol{\sigma})$. Then the combined one-shot random variable
\begin{equation}
    \mathcal X_C(t;\boldsymbol{\sigma})
    \equiv
    \left(\frac{2}{\sqrt{3}}\right)^{|C|}
    e^{-i\frac{\pi}{6}\sum_{j\in C}\sigma_j}
    \mathcal Z_C(t;\boldsymbol{\sigma})
\end{equation}
is unbiased for $\langle \hat W_C(t)\rangle$ and satisfies
\begin{equation}
    |\mathcal X_C(t;\boldsymbol{\sigma})|
    \le
    \left(\frac{2}{\sqrt{3}}\right)^{|C|}.
\end{equation}
Consequently, averaging $K$ i.i.d.\ samples yields an additive-error estimator
with sample complexity
\begin{equation}
    K
    =
    O\!\left(
        \left(\frac{4}{3}\right)^{|C|}
        \epsilon^{-2}
        \log\delta^{-1}
    \right).
\end{equation}
\end{proposition}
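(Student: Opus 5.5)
The plan is to prove the three claims in order: unbiasedness, the pointwise bound, and the sample complexity. The first and third are short. The real content is making sure the one-shot variable $\mathcal Z_C(t;\boldsymbol\sigma)$ has modulus at most $1$ pointwise, not merely bounded second moment.

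\emph{Unbiasedness.} I would apply the tower property, conditioning on $\boldsymbol\sigma$. The inner randomness of $\mathcal Z_C$ consists of the bra-side configuration $s$ and the sign vector $b$. It is drawn independently of $\boldsymbol\sigma$, once the diagonal passive FLO $\hat U_0^\dagger(t)\hat D_C(\boldsymbol\sigma)\hat U_0(t)$ has been formed. By Lemma~\ref{lem:sign_filter}, together with the bra-sampling identity used for Theorem~\ref{thm:overlap_apsg}, this gives $\E[\mathcal Z_C\mid\boldsymbol\sigma]=\mu_C(t;\boldsymbol\sigma)$. Multiplying by the deterministic factor $(2/\sqrt3)^{|C|}e^{-i\frac{\pi}{6}\sum_j\sigma_j}$ and averaging over $\boldsymbol\sigma$ then reproduces the displayed charge-phase representation of $\langle\hat W_C(t)\rangle$. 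That representation rests on the local identity \eqref{eq:wilson_local_charge_phase}, which I would check on the three charge values $q_j\in\{0,1,2\}$; the values are $1,1,0$ since $2\cos(\pi/6)/\sqrt3=1$ and $2\cos(\pi/2)/\sqrt3=0$.

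\emph{Pointwise bound (main step).} The phase has unit modulus, so it suffices to show $|\mathcal Z_C|\le1$ for every realization. Theorem~\ref{thm:overlap_apsg} only controls the second moment, because the importance weights $\prod_t v_{t,x_t}^{-1}$ can be large in general. I would therefore exploit the specific structure of $|\Psi_0\rangle$, exactly as in the passage from Theorem~\ref{thm:fock-block-magic-overlap} to Theorem~\ref{thm:block-magic-overlap}.

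Each triplet block $|T_0\rangle_{j,k}$ is a $|\Psi_4\rangle$-type block with two disjoint pair slots of equal weight $1/\sqrt2$, up to a reordering of modes and a sign that a passive FLO absorbs. The holon and doublon are deterministic. The doublon is a single pair slot of weight $1$; the holon carries no pair and can be dropped. Hence the bra-side sampling is uniform over the $2^{N_T}$ slot choices, where $N_T$ is the number of triplets, and it carries the fixed weight $2^{N_T/2}$.

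For the ket side, I would use the rescaling in the proof of Theorem~\ref{thm:overlap_fock_apsg} with $\gamma=2^{-N_T/2}$. After dividing by $\gamma_t$, the block matrices have operator norm $1$. Lemma~\ref{lem:pf_bound_apsg} then applies with $L=1$, since the kernel is unitary, and gives $|\mathcal Z(b;x)|\le\gamma$. The two factors cancel, so $|\mathcal Z_C|\le 2^{N_T/2}\cdot2^{-N_T/2}=1$.

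\emph{Concentration.} With $|\mathcal X_C|\le(2/\sqrt3)^{|C|}$, I would apply Hoeffding's inequality separately to the real and imaginary parts of the sample mean. Each part lies in an interval of length $2(2/\sqrt3)^{|C|}$. Asking each part to be within $\epsilon/\sqrt2$ with failure probability $\delta/2$ and taking a union bound gives $K=O((4/3)^{|C|}\epsilon^{-2}\log\delta^{-1})$.

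The step I expect to require the most care is the pointwise bound. One must verify that the mode reordering and signs inside $|T_0\rangle$ really yield block matrices of the form assumed in Lemma~\ref{lem:pf_bound_apsg}, with disjoint supports and unit operator norm after rescaling. If one instead invoked only the variance bound of Theorem~\ref{thm:overlap_apsg}, the conclusion would hold with the same $K$ via median-of-means, but not the stated almost-sure bound.
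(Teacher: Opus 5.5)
Your proposal is correct and follows the same route as the paper: the pointwise bound $|\mathcal Z_C|\le 1$ from Lemma~\ref{lem:pf_bound_apsg}, unbiasedness from the charge-phase decomposition together with unbiasedness of the overlap estimator, and then Hoeffding's inequality. Where you go beyond the paper is in spelling out why the lemma gives a bound of $1$ for the APSG--APSG overlap $\mu_C$: the bra-side importance weight $2^{N_T/2}$ cancels the ket-side factor $\gamma=2^{-N_T/2}$, exactly as in the passage from Theorem~\ref{thm:fock-block-magic-overlap} to Theorem~\ref{thm:block-magic-overlap}, a step the paper's one-line appeal to the lemma leaves implicit.
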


\begin{proof}
By Lemma~\ref{lem:pf_bound_apsg}, the inner mixed-Pfaffian one-shot satisfies $|\mathcal Z_C(t;\boldsymbol{\sigma})|\le 1$. The magnitude bound then follows immediately from the prefactor in the definition of $\mathcal X_C$. Unbiasedness follows from the exact charge-phase decomposition above together with the unbiasedness of the mixed-Pfaffian overlap estimator. The stated sample complexity is then a direct consequence of Hoeffding's inequality.
\end{proof}

This is the estimator used for the Wilson-loop numerics in
Fig.~\ref{fig:noninteracting}.

\section{Interacting dynamics (\texorpdfstring{$W>0$}{W > 0}) and the real-time variance barrier}
\label{app:interacting_dynamics}

In this Appendix, we detail the simulation boundary for the interacting dynamics ($W>0$) discussed in Sec.~\ref{sec:phasecraft}, explicitly demonstrating how the non-unitary nature of the real-time Hubbard--Stratonovich (HS) transformation manifests as an exponential variance barrier.

We fix a step size $\Delta t$ and write $t=n\Delta t$, analyzing the Strang-split Trotterized evolution:
\begin{align}
    e^{-it(\hat H_0+\hat H_W)} \approx
    \left(e^{-i(\Delta t/2)\hat H_0} e^{-i\Delta t\hat H_W} e^{-i(\Delta t/2)\hat H_0}\right)^n,
    \label{eq:strang-fixed-W}
\end{align}
focusing exclusively on the sampling complexity for this Trotterized dynamics (ignoring Trotter error). 

To simulate this interacting regime, we map the time evolution onto an ensemble of non-interacting evolutions using a discrete real-time HS transformation~\cite{hirsch1983discrete}. For each site $i\in[L]$ and each time slice, we use the centered interaction factor:
\begin{align}
    &e^{-i\Delta t W(\hat n_{i\uparrow}-\tfrac12)(\hat n_{i\downarrow}-\tfrac12)} \nonumber \\
    &=
    e^{-i\Delta t W/4}\cdot \frac12\sum_{\sigma_i=\pm1}
    \exp \left(\lambda \sigma_i(\hat n_{i\uparrow}-\hat n_{i\downarrow})\right),
    \label{eq:hirsch-rt-W}
\end{align}
where $\lambda=\lambda(W,\Delta t)$ satisfies $\cosh\lambda = e^{i\Delta t W/2}$. Thus, each time slice $\ell\in\{1,\dots,n\}$ is associated with an auxiliary-field configuration $\boldsymbol{\sigma}_\ell\in\{\pm1\}^L$.

Let $U_{1/2}=e^{-i(\Delta t/2)h_0}$ be the single-particle half-step propagator corresponding to $\hat H_0$. The HS one-body factor is diagonal:
\begin{align}
    V(\boldsymbol{\sigma}_\ell)
    =
    \mathrm{diag}\big(
    e^{-\lambda\sigma_{\ell,1}},\dots,e^{-\lambda\sigma_{\ell,L}},
    e^{+\lambda\sigma_{\ell,1}},\dots,e^{+\lambda\sigma_{\ell,L}}
    \big).
    \label{eq:Vdiag-W}
\end{align}
For a complete HS path $\boldsymbol{\sigma}=(\boldsymbol{\sigma}_1,\dots,\boldsymbol{\sigma}_n)$, the corresponding single-particle propagator is:
\begin{equation}
    G(\boldsymbol{\sigma})
    \equiv
    \prod_{\ell=1}^{n}\Big(U_{1/2} V(\boldsymbol{\sigma}_\ell) U_{1/2}\Big).
    \label{eq:G-sigma-W}
\end{equation}

Under this representation, each parity expectation $\mu_T(t)=\bra{\Psi(t)}\hat{\Pi}_T\ket{\Psi(t)}$ becomes an average over independent forward and backward auxiliary-field paths, $\boldsymbol{\sigma}_L$ and $\boldsymbol{\sigma}_R$:
\begin{equation}
    \mu_T(t)=\mathbb{E}_{\boldsymbol{\sigma}_L,\boldsymbol{\sigma}_R} \left[
    X_T(\boldsymbol{\sigma}_L,\boldsymbol{\sigma}_R)\right],
\end{equation}
where $X_T(\boldsymbol{\sigma}_L,\boldsymbol{\sigma}_R) \equiv \bra{\Psi_0}\hat G(\boldsymbol{\sigma}_L)^\dagger \hat{\Pi}_T \hat G(\boldsymbol{\sigma}_R)\ket{\Psi_0}$. Given $K$ independent path pairs, the plain Monte Carlo estimator is unbiased. Crucially, each sample $X_T$ can be evaluated using the exact same randomized mixed-Pfaffian overlap primitive developed for the $W=0$ pipeline, as our phase-filtering identity rigorously extends to non-unitary propagators (Sec.~\ref{sec:nonunitary}).

\subsection{A conservative worst-case envelope for the Pfaffian estimator}
The fundamental obstruction to pushing this classical simulation to arbitrary times is not algorithmic correctness, but variance. Because the HS transformation requires complex coupling parameters ($\lambda=a+ib$ with $a=\Re\lambda$), the one-body propagators $\hat G(\boldsymbol{\sigma})$ are inherently non-unitary.

Since $V(\boldsymbol{\sigma}_\ell)$ is diagonal, $\|V(\boldsymbol{\sigma}_\ell)\|_\mathrm{op} = \max_{j}|e^{\pm \lambda\sigma_{\ell,j}}| = e^{|a|}$. Because $U_{1/2}$ is unitary, over $n$ slices this yields the uniform bound:
\begin{equation}
    \|G(\boldsymbol{\sigma})\|_\mathrm{op}
    \le \prod_{\ell=1}^{n}\|U_{1/2}V(\boldsymbol{\sigma}_\ell)U_{1/2}\|_\mathrm{op}
    \le e^{|a|n}.
    \label{eq:G_norm_bound}
\end{equation}

Our Pfaffian overlap evaluation expresses $X_T(\boldsymbol{\sigma}_L,\boldsymbol{\sigma}_R)$ via a single-particle matrix $Q_T(\boldsymbol{\sigma}_L,\boldsymbol{\sigma}_R) \equiv G(\boldsymbol{\sigma}_L)^\dagger D_T G(\boldsymbol{\sigma}_R)$. Since $D_T$ is diagonal with $\|D_T\|_\mathrm{op}=1$, we have:
\begin{equation}
    \|Q_T(\boldsymbol{\sigma}_L,\boldsymbol{\sigma}_R)\|_\mathrm{op}
    \le \|G(\boldsymbol{\sigma}_L)\|_\mathrm{op} \|G(\boldsymbol{\sigma}_R)\|_\mathrm{op}
    \le e^{2|a|n}.
    \label{eq:Q_norm_bound}
\end{equation}

For the $2r\times 2r$ skew-symmetric matrices compressed from $Q_T$, the Pfaffian is bounded by $|\pf(A)| \le \|A\|_\mathrm{op}^{r}$. This yields a conservative, pathwise worst-case envelope for the overlap primitive:
\begin{equation}
    |X_T(\boldsymbol{\sigma}_L,\boldsymbol{\sigma}_R)|
    \le
    B_{\mathrm{worst}}(W;t,\Delta t)
    \equiv
    C_T\exp \big(2|a(W,\Delta t)| n r\big),
    \label{eq:B_worst_def}
\end{equation}
where $r$ is the Pfaffian dimension parameter and $C_T$ is a constant determined by the fixed input state. 

To quantify this practical simulation boundary, we compare this rigorous envelope with a statistical random-walk approximation that provides a typical-scale proxy:
\begin{equation}
    B_{\mathrm{typ}}(W;t,\Delta t)
    \equiv \exp \left(c |a(W,\Delta t)| L\sqrt{t/\Delta t}\right),~~~ c=\sqrt{2/\pi},
\end{equation}
which captures the observed onset of numerical instability in practice.

\begin{figure}[t!]
\includegraphics[width=250px]{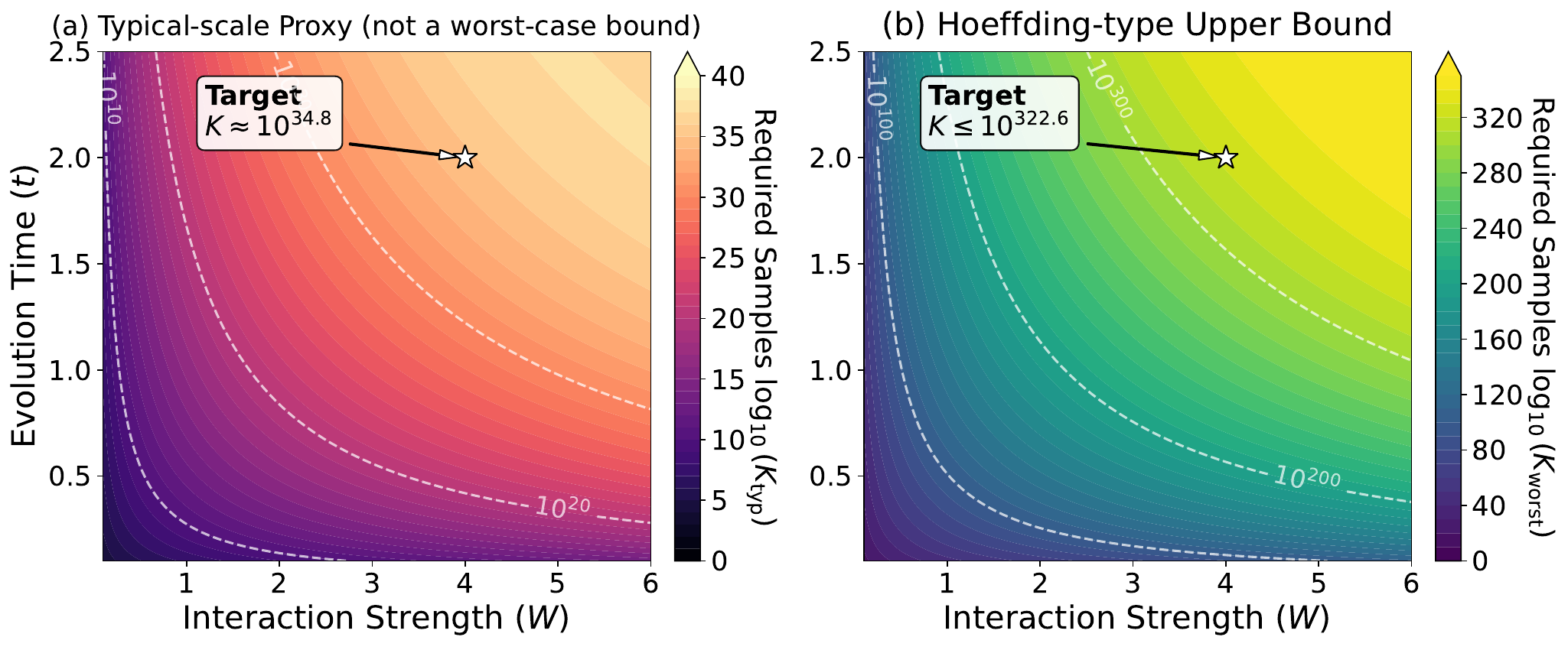} 
\caption{
Sample-complexity contours for real-time Hirsch HS sampling under Phasecraft-matched Strang Trotterization ($k=4$, $\Delta t=t/k$).
(a) Typical-scale proxy.
(b) Conservative Hoeffding-type upper bound.
Stars mark $(W,t)=(4,2)$; here $\epsilon=0.01$, $\delta=0.01$.}
\label{fig:complexity}
\end{figure}

As shown in Fig.~\ref{fig:complexity}, we convert these magnitude scales into a Hoeffding-type sample upper bound $K(B)=\frac{2B^2}{\epsilon^2}\log\frac{2}{\delta}$. We report both $K(B_{\mathrm{typ}})$ and $K(B_{\mathrm{worst}})$ as complementary indicators of practical and conservative feasibility, explicitly isolating the exact mathematical origin of the variance barrier.

\subsection{Extent-based simulation of interacting dynamics}

\begin{figure}[t!]
\includegraphics[width=250px]{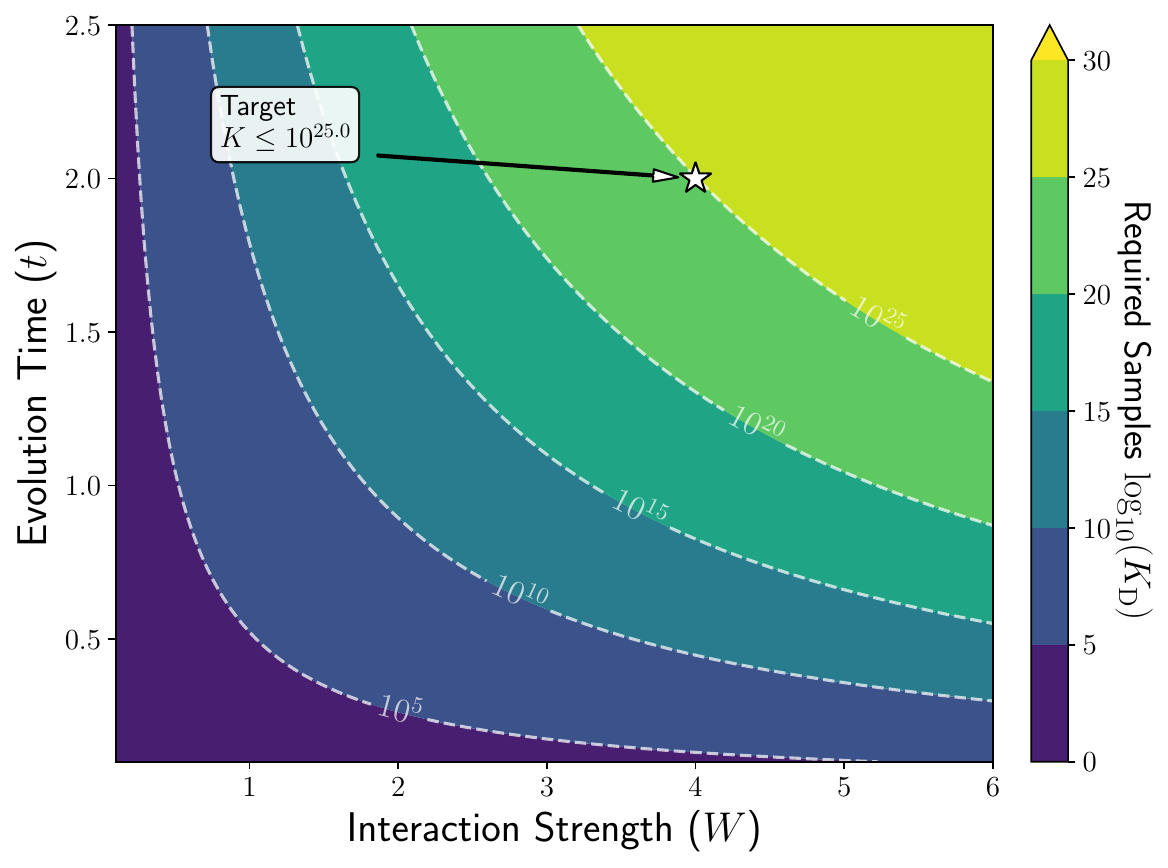} 
\caption{
Sample-complexity contours for extent-based sampling under Phasecraft-matched Trotterization ($k=4$, $\Delta t=t/k$). The star marks $(W,t)=(4,2)$; here $\epsilon=0.01$, $\delta=0.01$.}
\label{fig:extent-complexity}
\end{figure}

An alternative approach to simulating interacting dynamics is to generalize the extent-based simulation algorithms by combining them with our mixed-Pfaffian estimators. 
Assume we are interested in computing a quantity of the form 
\begin{align}
    \bra{\Phi} \hat{U}\ket{\Psi},
\end{align}
where $\hat{U}$ is expressed as a quantum circuit composed of passive fermionic-Gaussian unitaries and non-Gaussian controlled-phase gates of the form
\begin{align}
    P_{ij}(\theta) &= \exp\left(i \theta Z_i Z_j\right).
\end{align} 
The extent-optimal decomposition of Refs.~\cite{cudby2024gaussiandecompositionmagicstates,Dias_2024,ReardonSmith2024improvedsimulation} takes the form
\begin{align}
   P_{ij}(\theta)&= \cos(\theta) I + i \sin(\theta)\mathrm{diag}(1,-1,-1,1),
\end{align}
where the $2$ mode gates act on modes $i$ and $j$. Assuming there are $K$ non-Gaussian gates in the circuit we have
\begin{align}
    &\bra{\Phi} \hat{U}\ket{\Psi} =\\ &\sum_{x\in\{0,1\}^K}i^{|x|}\prod_{j=1}^K (\cos(\theta_j)^{1-x_j}\sin(\theta_j)^{x_j} ) \bra{\Phi}U(x) \ket{\Psi},
\end{align}
where each $U(x)$ is a passive fermionic-Gaussian unitary. We now rewrite this as an expectation value in terms of the probability distribution
\begin{align}
    P(x) &= \prod_{j=1}^K \frac{\cos(\theta_j)^{1-x_j}\sin(\theta_j)^{x_j}}{\cos(\theta_j) + \sin(\theta_j)}\\
    \bra{\Phi}\hat{U}\ket{\Psi} &= \sqrt{\xi}\sum_{x\in\{0,1\}^K} P(x) i^{|x|}\bra{\Phi}U(x) \ket{\Psi},
\end{align}
where we have introduced the extent
\begin{align}
    \xi = \prod_j(\cos(\theta_j)^{1-x_j}\sin(\theta_j)^{x_j} )^2.
\end{align}
We now apply the same reasoning as Theorem~\ref{thm:overlap_apsg} to express this as the expectation value of a random variable which can be computed in polynomial time, and is bounded point-wise by $\sqrt{\xi}$.

In Fig.~\ref{fig:extent-complexity}, we show the sample complexity of applying this scheme to estimate expectation values of observables relevant to the experiment of Ref.~\cite{alam2025fermionic}.

\section{One-body sources and derivative identities}
\label{app:chem_sources}

This Appendix presents source identities that reduce transition one- and two-body moments to local derivatives of the same number-preserving Gaussian kernel used throughout the paper.

For number-preserving Gaussian maps \(\hat G_L\) and \(\hat G_R\), define
\begin{align}
    \mathcal{K}_{\Phi,\Psi}(J;\hat G_L,\hat G_R)
    &\equiv
    \langle \Phi|
    \hat G_L
    e^{\hat c^\dagger J \hat c}
    \hat G_R
    |\Psi\rangle, \\
    \hat c^\dagger J \hat c &\equiv \sum_{p,q}J_{pq}\hat c_p^\dagger\hat c_q .
\end{align}
The matrix \(J\in\mathbb C^{M\times M}\) is treated as a formal complex source.
The corresponding single-particle matrix of the middle Gaussian factor is \(e^J\), so the total single-particle map entering the Pfaffian kernel is \(G_L e^J G_R\).

The transition one-body moments are
\begin{equation}
\langle \Phi|
\hat G_L \hat c_p^\dagger \hat c_q \hat G_R
|\Psi\rangle
=
\left.
\frac{\partial}{\partial J_{pq}}
\mathcal{K}_{\Phi,\Psi}(J;\hat G_L,\hat G_R)
\right|_{J=0}.
\label{eq:app:1rdm_source}
\end{equation}

For transition two-body RDMs, it is cleaner to use ordered one-body sources.
Let
\begin{align}
    \hat O_{ab}\equiv \hat c_a^\dagger \hat c_b .
\end{align}
Then
\begin{align}
&\left.
\frac{\partial^2}{\partial s\,\partial t}
\langle \Phi|
\hat G_L
e^{s\hat O_{pr}}
e^{t\hat O_{qs}}
\hat G_R
|\Psi\rangle
\right|_{s=t=0}
\nonumber\\
&\qquad =
\langle\Phi|
\hat G_L
\hat O_{pr}\hat O_{qs}
\hat G_R
|\Psi\rangle .
\end{align}
Using
\begin{align}
    \hat O_{pr}\hat O_{qs}
    =
    \delta_{rq}\hat O_{ps}
    +
    \hat c_p^\dagger \hat c_q^\dagger \hat c_s \hat c_r ,
\end{align}
we obtain
\begin{align}
    &\langle\Phi|
    \hat G_L
    \hat c_p^\dagger \hat c_q^\dagger \hat c_s \hat c_r
    \hat G_R
    |\Psi\rangle\\
    &=
    \left.
    \frac{\partial^2}{\partial s\,\partial t}
    \langle \Phi|
    \hat G_L
    e^{s\hat O_{pr}}
    e^{t\hat O_{qs}}
    \hat G_R
    |\Psi\rangle
    \right|_{s=t=0} \\ 
    &-
    \delta_{rq}
    \langle\Phi|
    \hat G_L
    \hat c_p^\dagger \hat c_s
    \hat G_R
    |\Psi\rangle.
    \label{eq:app:2rdm_ordered_source}
\end{align}

For a fixed one-body operator
\begin{align}
    \hat O(K)=\hat c^\dagger K\hat c,
\end{align}
define
\begin{equation}
    \chi_{\Phi,\Psi}^{(K)}(\theta;\hat G_L,\hat G_R)
    \equiv
    \langle \Phi|
    \hat G_L e^{i\theta \hat O(K)}\hat G_R
    |\Psi\rangle .
\end{equation}
Then
\begin{align}
    \langle \Phi|
    \hat G_L \hat O(K)\hat G_R
    |\Psi\rangle
    &=
    \left.
    \frac{1}{i}\frac{d}{d\theta}
    \chi_{\Phi,\Psi}^{(K)}(\theta;\hat G_L,\hat G_R)
    \right|_{\theta=0},\\
    \langle \Phi|
    \hat G_L \hat O(K)^2\hat G_R
    |\Psi\rangle
    &=
    -
    \left.
    \frac{d^2}{d\theta^2}
    \chi_{\Phi,\Psi}^{(K)}(\theta;\hat G_L,\hat G_R)
    \right|_{\theta=0}.
    \label{eq:app:O2_source}
\end{align}
Combining these identities with the molecular sum-of-squares factorization in
Eq.~\eqref{eq:chem:sum_of_squares} gives a systematic route to transition
Hamiltonian matrix elements and orbital-gradient terms.

\end{document}